\documentclass[10pt, twocolumn]{IEEEtran}

\usepackage{psfrag}
\usepackage{amssymb}
\usepackage{amsmath}
\usepackage{pifont}
\usepackage{cite}
\usepackage{graphics}
\usepackage{graphicx}
\usepackage{epsfig}
\usepackage{subfigure}
\usepackage{url}
\usepackage{amscd}
\usepackage{threeparttable}
\usepackage[colorlinks, citecolor=blue,linkcolor=blue]{hyperref}
\usepackage{enumerate}

\newtheorem{theorem}{Theorem}

\newtheorem{lemma}{Lemma}
\newtheorem{definition}{Definition}
\newtheorem{proposition}{Proposition}

\newtheorem{example}{Example}

\begin{document}

\title{Recovery of Sparse Signals Using Multiple Orthogonal Least Squares}

\author{\IEEEauthorblockN{Jian Wang and Ping Li} \\
\IEEEauthorblockA{Department of Statistics and Biostatistics,
Department of Computer Science \\Rutgers, The State University of New Jersey \\
Piscataway, New Jersey 08854, USA \\
E-mail: \{jwang,pingli\}@stat.rutgers.edu}
}

\maketitle

\begin{abstract}
We study the problem of recovering sparse signals from compressed linear
measurements. This problem, often referred to as sparse recovery or sparse reconstruction, has generated a great deal of interest in
recent years. To recover the sparse signals, we propose a new method
called multiple orthogonal least squares (MOLS), which extends the well-known orthogonal least squares (OLS) algorithm
by allowing multiple $L$ indices to be chosen per iteration. Owing to
inclusion of multiple support indices in each selection, the MOLS
algorithm converges in much fewer iterations and improves the
computational efficiency over the conventional OLS algorithm. Theoretical analysis
shows that MOLS ($L > 1$) performs exact recovery of all $K$-sparse signals
within $K$ iterations if the measurement matrix satisfies the
restricted isometry property (RIP) with isometry constant
$\delta_{LK} < \frac{\sqrt{L}}{\sqrt{K} + 2 \sqrt{L}}.$
The recovery performance of MOLS in the noisy scenario is also studied. It is shown that stable recovery of sparse signals can be achieved with the MOLS algorithm when the signal-to-noise ratio (SNR) scales linearly with the sparsity level of input signals.

\end{abstract}

\begin{keywords}
Compressed sensing (CS), sparse recovery, orthogonal matching pursuit (OMP), orthogonal least squares (OLS), multiple orthogonal least squares (MOLS), restricted isometry property (RIP), signal-to-noise ratio (SNR).
\end{keywords}

{\IEEEpeerreviewmaketitle}

\section{Introduction}\label{sec:intro}

In recent years, sparse recovery has attracted much attention in
applied mathematics, electrical engineering, and statistics~\cite{donoho1989uncertainty,donoho2006compressed,candes2006near,candes2006robust}.
The main task of sparse recovery is to recover a high dimensional $K$-sparse vector $\mathbf{x} \in \mathcal{R}^{n}$ ($\| \mathbf{x} \|_{0} \leq K \ll n$) from a small number of linear measurements
\begin{equation} \label{eq:1}
  \mathbf{y} = \mathbf{\Phi x},
\end{equation}
where $\mathbf{\Phi} \in \mathcal{R}^{m \times n}$ ($m < n$) is often called the measurement matrix. Although the system is
underdetermined, owing to the signal sparsity, $\mathbf{x}$ can be accurately recovered from the measurements $\mathbf{y}$ by solving an $\ell_0$-minimization problem:
\begin{equation} \label{eq:2nb}
\min_\mathbf{x} \| \mathbf{x} \|_0~~\text{subject to}~~~~\mathbf{y}
= \mathbf{\Phi x}.
\end{equation}
This method, however, is known to be intractable due to the
combinatorial search involved and therefore impractical for
realistic applications. Thus, much attention has focused on developing efficient algorithms for recovering the sparse signal. In general, the algorithms can be classified into two major categories: those using convex optimization techniques~\cite{donoho1989uncertainty,chen2001atomic,donoho2006compressed,candes2006near,candes2006robust} and those based on greedy searching principles~\cite{pati1993orthogonal,mallat1993matching,chen1989orthogonal,donoho2006sparse,needell2010signal,wang2012Generalized,needell2009cosamp,foucart2011hard,dai2009subspace
}. Other algorithms relying on nonconvex methods have also been proposed~\cite{chartrand2007exact,chartrand2008iteratively,foucart2009sparsest,daubechies2010iteratively,chen2014convergence}.
The optimization-based approaches replace the nonconvex $\ell_0$-norm with its convex surrogate  $\ell_1$-norm, translating
the combinatorial hard search into a computationally tractable
problem:
\begin{equation}
\min_\mathbf{x} \| \mathbf{x} \|_1~~\text{subject to}~~~~\mathbf{y}
= \mathbf{\Phi x}.
\end{equation}
This algorithm is known as basis pursuit (BP)~\cite{chen2001atomic}. It has been revealed that under appropriate constraints on the measurement matrix, BP yields  exact recovery of the sparse signal.

The second category of approaches for sparse recovery are greedy
algorithms, in which signal support is iteratively identified
according to various greedy principles. Due to their computational simplicity and competitive performance, greedy algorithms have
gained considerable popularity in practical applications. Representative algorithms include
matching pursuit (MP)~\cite{mallat1993matching}, orthogonal matching
pursuit (OMP)~\cite{pati1993orthogonal,tropp2004greed,tropp2007signal,davenport2010analysis,zhang2011sparse,mo2012remarks,wang2012Recovery,wen2013improved,wang2015support}
and orthogonal least squares (OLS)~\cite{chen1989orthogonal,rebollo2002optimized,foucart2013stability,soussen2013joint}.
Both OMP and OLS identify the support of the underlying sparse signal
by adding one index at a time, and estimate the sparse coefficients
over the enlarged support. The main difference between OMP and OLS
lies in the greedy rule of updating the support at each iteration. While OMP finds a
column that is most strongly correlated with the signal residual,
OLS seeks to maximally reduce the power of the current residual with an enlarged
support set. It has been shown that OLS has better convergence
property but is computationally more expensive than the OMP algorithm~\cite{soussen2013joint}.


In this paper, with the aim of improving the recovery accuracy and also reducing the computational cost of
OLS, we propose a new method called multiple orthogonal least
squares (MOLS), which can be viewed as an extension of the OLS algorithm in that multiple indices are allowed to be chosen at a time. Our
method is inspired by that those sub-optimal candidates in
each of the OLS identification are likely to be reliable and could be utilized to better reduce the power of signal residual for each iteration, thereby accelerating the convergence of the algorithm.
The main steps of the MOLS algorithm are specified in Table
\ref{tab:mols}. Owing to selection of multiple ``good''
candidates in each time, MOLS converges in much fewer iterations and improves the
computational efficiency over the
conventional OLS algorithm.

\setlength{\arrayrulewidth}{1.5pt}
\begin{table}[t]
\begin{center}
\caption{The MOLS Algorithm} \label{tab:mols} 
\begin{tabular}{@{}ll}
\hline \\ \vspace{-13pt} \\
\textbf{Input}       &measurement matrix $\mathbf{\Phi} \in \mathcal{R}^{m \times n}$,\\
                     &measurements vector $\mathbf{y} \in \mathcal{R}^{m}$, \\
                     &sparsity level $K$, \\
                     &and selection parameter $L \leq \min \{K,  \frac{m}{K} \}$. \\
\textbf{Initialize}  &iteration count $k = 0$, \\
                     &estimated support $\mathcal{T}^{0} = \emptyset$, \\
                     &and residual vector $\mathbf{r}^{0} = \mathbf{y}$.
                     \\
\textbf{While}       &($\|\mathbf{r}^k\|_2 \geq \epsilon$ and $k < K$) or $Lk <K$, \textbf{do}\\
                     & $k = k + 1$. \\
                     & Identify \hspace{1.15mm}$\mathcal{S}^{k} = \underset{\mathcal{S} : | \mathcal{S} | =L}{\arg \min} \sum_{i \in \mathcal{S}} \| \mathbf{P}^{\bot}_{\mathcal{T}^{k - 1} \cup \{i\}} \mathbf{y} \|_{2}^{2}$. \\
                     & Enlarge ~$\mathcal{T}^{k} = \mathcal{T}^{k - 1} \cup \mathcal{S}^{k}$. \\
                     & Estimate \hspace{0.41mm}$\mathbf{x}^{k} = \underset{\mathbf{u}:\textit{supp}(\mathbf{u}) = \mathcal{T}^k}{\arg \min} \|\mathbf{y}-\mathbf{\Phi} \mathbf{u}\|_2$. \\
                     & Update \hspace{2.5mm}$\mathbf{r}^{k} = \mathbf{y} - \mathbf{\Phi} \mathbf{x}^{k}$. \\
\textbf{End}         & \\
\textbf{Output}      & the estimated support $\hat{\mathcal{T}} =
\underset{\mathcal{S}:|\mathcal{S}| = K}{\arg \min} \|\mathbf{x}^k -
\mathbf{x}^k_\mathcal{S}\|_2$ and the\\
& estimated signal $\hat{\mathbf{x}}$ satisfying $\hat{\mathbf{x}}_{\Omega \setminus \hat{\mathcal{T}}} = \mathbf{0}$ and $\hat{\mathbf{x}}_{\hat{\mathcal{T}}} = \mathbf{\Phi}_{\hat{\mathcal{T}}}^\dag \mathbf{y}$.  \\
\vspace{-5pt} \\
\hline
\end{tabular}
\end{center}
\end{table}

\setlength{\arrayrulewidth}{1pt}

Greedy methods with a similar flavor to MOLS in adding multiple
indices per iteration include stagewise OMP (StOMP)~\cite{donoho2006sparse}, regularized OMP (ROMP)~\cite{needell2010signal}, and generalized OMP (gOMP)~\cite{wang2012Generalized} (also known as orthogonal super greedy
algorithm (OSGA)~\cite{liu2012orthogonal}), etc. These algorithms
identify candidates at each iteration according to correlations
between columns of the measurement matrix and the residual vector.
Specifically, StOMP picks indices whose magnitudes of correlation
exceed a deliberately designed threshold. ROMP first chooses a set
of $K$ indices with strongest correlations and then narrows down the
candidates to a subset based on a predefined regularization rule.
The gOMP algorithm finds a fixed number of indices with strongest correlations in
each selection.
Other greedy methods adopting a different strategy of adding as well
as pruning indices from the list include compressive sampling
matching pursuit (CoSaMP)~\cite{needell2009cosamp} and subspace
pursuit (SP)~\cite{dai2009subspace} and hard thresholding pursuit
(HTP)~\cite{foucart2011hard}, etc.

The contributions of this paper are summarized as follows.
\begin{enumerate}[i)]
  \item We propose a new algorithm, referred to as MOLS, for solving sparse recovery problems. We analyze the MOLS algorithm using the restricted isometry property (RIP) introduced in the compressed sensing (CS) theory~\cite{candes2005decoding} (see Definition~\ref{def:rip} below). Our analysis shows that MOLS ($L>1$) exactly recovers any $K$-sparse signal within $K$ iterations if the  measurement matrix $\mathbf{\Phi}$ obeys the RIP
  with isometry constant
  \begin{equation}
    \delta_{LK} < \frac{\sqrt{L}}{\sqrt{K} + 2 \sqrt{L}}. \label{eq:good}
  \end{equation}
  For the special case when $L=1$, MOLS reduces to the conventional OLS algorithm. We establish the condition for the exact sparse recovery with OLS as
  \begin{equation}
    \label{eq:ols} \delta_{K+1} < \frac{1}{\sqrt{K} + 2}.
  \end{equation}
This condition is nearly sharp in the sense that, even with a slight relaxation (e.g., relaxing to
  $\delta_{K+1} = \frac{1}{\sqrt{K}}$), the exact recovery with OLS may not be guaranteed.

%

  \item We analyze recovery performance of MOLS in the presence of noise. Our result demonstrates that stable recovery of sparse signals can be achieved with MOLS when the signal-to-noise ratio (SNR) scales linearly with the sparsity level of input signals. In particular, for the case of OLS (i.e., when $L = 1$), we show that the scaling law of the SNR is necessary for exact support recovery of sparse signals.


\end{enumerate}

The rest of this paper is organized as follows: In Section~\ref{sec:II}, we
introduce notations, definitions, and lemmas that will be used in this paper. In
Section~\ref{sec:III}, we give a useful observation regarding the identification step of MOLS. In Section
\ref{sec:IV} and~\ref{sec:V}, we analyze the theoretical performance of MOLS in recovering sparse signals. In Section~\ref{sec:VI}, we study the empirical performance of the MOLS algorithm.
Concluding remarks are given in Section~\ref{sec:VII}.

\section{Preliminaries} \label{sec:II}

\subsection{Notations}

We first briefly summarize notations used in this paper. Let $\Omega
= \{1,2, \cdots, n\}$ and let $\mathcal{T}=
\textit{supp}(\mathbf{x}) = \{i|i \in \Omega,x_{i} \neq 0\}$ denote
the support of vector $\mathbf{x}$. For $\mathcal{S} \subseteq
\Omega$, $|\mathcal{S}|$ is the cardinality of $\mathcal{S}$.
$\mathcal{T} \setminus \mathcal{S}$ is the set of all elements
contained in $\mathcal{T}$ but not in $\mathcal{S}$.
$\mathbf{x}_{\mathcal{S}} \in \mathcal{R}^{|\mathcal{S}|}$ is the
restriction of the vector $\mathbf{x}$ to the elements with indices
in $\mathcal{S}$. $\mathbf{\Phi}_{\mathcal{S}} \in \mathcal{R}^{m
\times | \mathcal{S} |}$ is a submatrix of $\mathbf{\Phi}$ that only
contains columns indexed by $\mathcal{S}$. If
$\mathbf{\Phi}_{\mathcal{S}}$ is full column rank, then
$\mathbf{\Phi}_{\mathcal{S}}^{\dagger} = (
\mathbf{\Phi}'_{\mathcal{S}} \mathbf{\Phi}_{\mathcal{S}} )^{-1}
\mathbf{\Phi}'_{\mathcal{S}}$ is the pseudoinverse of
$\mathbf{\Phi}_{\mathcal{S}}$. $\text{span} (
\mathbf{\Phi}_{\mathcal{S}} )$ is the span of columns in
$\mathbf{\Phi}_{\mathcal{S}}$. $\mathbf{P}_{\mathcal{S}} =
\mathbf{\Phi}_{\mathcal{S}} \mathbf{\Phi}_{\mathcal{S}}^{\dagger}$
is the projection onto $\text{span} ( \mathbf{\Phi}_{\mathcal{S}}
)$. $\mathbf{P}_{\mathcal{S}}^{\bot} = \mathbf{I} -
\mathbf{P}_{\mathcal{S}}$ is the projection onto the orthogonal
complement of $\text{span} ( \mathbf{\Phi}_{\mathcal{S}} )$, where
$\mathbf{I}$ is the identity matrix. For mathematical convenience,
we assume that $\mathbf{\Phi}$ has unit $\ell_{2}$-norm columns
throughout the paper.\footnote{In~\cite{blumensath2007difference},
it has been shown that the behavior of OLS is unchanged whether the
columns of $\mathbf{\Phi}$ have unit $\ell_2$-norm or not. As MOLS
is a direct extension of the OLS algorithm, it can be verified that
the behavior of MOLS is also unchanged whether $\mathbf{\Phi}$ has
unit $\ell_2$-norm columns or not.}

\subsection{Definitions and Lemmas}

\begin{definition}[RIP~\cite{candes2005decoding}] \label{def:rip}
A measurement matrix $\mathbf{\Phi}$ is said to satisfy
the RIP of order $K$ if there exists a constant $\delta \in (0, 1)$
such that
\begin{equation}
  \label{eq:RIP}
  (1 - \delta) \| \mathbf{x} \|_{2}^2 \leq \|
  \mathbf{\Phi x} \|_{2}^2 \leq (1 + \delta) \| \mathbf{x}
  \|_{2}^2
\end{equation}
for all $K$-sparse vectors $\mathbf{x}$. In particular, the minimum
of all constants $\delta$ satisfying (\ref{eq:RIP}) is called the isometry
constant $\delta_{K}$.
\end{definition}

The following lemmas are useful for our analysis.

\begin{lemma}
  [Lemma 3 in {\cite{candes2005decoding}}]\label{lem:mono}If a measurement
  matrix satisfies the RIP of both orders $K_{1}$ and $K_{2}$ where $K_{1} \leq K_{2}$, then
  $\delta_{K_{1}} \leq \delta_{K_{2}}$. This
  property is often referred to as the monotonicity of the isometry constant.
\end{lemma}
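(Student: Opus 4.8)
The plan is to exploit the fact that the class of $K_1$-sparse vectors is contained in the class of $K_2$-sparse vectors whenever $K_1 \leq K_2$, and then to appeal to the minimality built into the definition of the isometry constant. No estimation is required; this is purely a set-inclusion and extremality argument.

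First I would recall that, by Definition~\ref{def:rip}, $\delta_{K_2}$ is the smallest constant $\delta \in (0,1)$ for which the two-sided bound $(1-\delta)\|\mathbf{x}\|_2^2 \leq \|\mathbf{\Phi x}\|_2^2 \leq (1+\delta)\|\mathbf{x}\|_2^2$ holds for every $K_2$-sparse vector $\mathbf{x}$. The crucial observation is that any vector $\mathbf{x}$ with $\|\mathbf{x}\|_0 \leq K_1$ also satisfies $\|\mathbf{x}\|_0 \leq K_2$, so the inequality~\eqref{eq:RIP}, being valid for all $K_2$-sparse vectors, in particular holds for all $K_1$-sparse vectors with the very same constant $\delta_{K_2}$.

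Next I would note that this shows $\delta_{K_2}$ is an \emph{admissible} RIP constant of order $K_1$, i.e.\ it lies in the set of constants $\delta$ for which~\eqref{eq:RIP} holds over the $K_1$-sparse vectors. Since $\delta_{K_1}$ is by definition the minimum of all such admissible constants, we immediately obtain $\delta_{K_1} \leq \delta_{K_2}$. Equivalently, writing $\delta_K = \sup_{\mathbf{x}\neq \mathbf 0,\,\|\mathbf{x}\|_0 \leq K} \bigl|\,\|\mathbf{\Phi x}\|_2^2/\|\mathbf{x}\|_2^2 - 1\,\bigr|$, the monotonicity is just the statement that a supremum taken over a smaller set cannot exceed the supremum taken over a larger one.

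I do not anticipate any genuine obstacle here; the entire content is the nesting $\{\mathbf{x}:\|\mathbf{x}\|_0\leq K_1\}\subseteq\{\mathbf{x}:\|\mathbf{x}\|_0\leq K_2\}$ together with the extremal characterization of $\delta_K$. The only point I would take care to state cleanly is the equivalence between the two-sided inequality in~\eqref{eq:RIP} and the bound on $\bigl|\,\|\mathbf{\Phi x}\|_2^2/\|\mathbf{x}\|_2^2 - 1\,\bigr|$, which makes the comparison of the two suprema transparent and renders the conclusion immediate.
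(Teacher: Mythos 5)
Your argument is correct: the nesting of the $K_1$-sparse vectors inside the $K_2$-sparse vectors shows $\delta_{K_2}$ is admissible at order $K_1$, and minimality of $\delta_{K_1}$ gives the conclusion. The paper does not reprove this lemma (it is cited from the reference), but your set-inclusion and extremality argument is exactly the standard proof of this fact, so there is nothing to add.
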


\begin{lemma}
  [Consequences of RIP {\cite{needell2009cosamp,kwon2013multipath}}]\label{lem:rips} Let $\mathcal{S}
  \subseteq \Omega$. If $\delta_{| \mathcal{S} |} <1$ then for any vector $\mathbf{u} \in
  \mathcal{R}^{| \mathcal{S} |}$,
  \begin{eqnarray}
    &&(1- \delta_{| \mathcal{S} |} ) \left\| \mathbf{u} \right\|_{2} \leq \left\|
    \mathbf{\Phi}_{\mathcal{S}}'  \mathbf{\Phi}_{\mathcal{S}} \mathbf{u} \right\|_{2} \leq ( 1+
    \delta_{| \mathcal{S} |} ) \left\| \mathbf{u} \right\|_{2}, \nonumber\\
    &&~~~~~~\frac{\| \mathbf{u} \|_{2}}{1+ \delta_{| \mathcal{S} |}} \leq \| (
    \mathbf{\Phi}_{\mathcal{S}}'  \mathbf{\Phi}_{\mathcal{S}} )^{-1} \mathbf{u} \|_{2} \leq
    \frac{\| \mathbf{u} \|_{2}}{1- \delta_{| \mathcal{S} |}}. \nonumber
  \end{eqnarray}
\end{lemma}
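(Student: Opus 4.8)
The plan is to reduce both claims to a single statement about the eigenvalues of the Gram matrix $\mathbf{G} := \mathbf{\Phi}_{\mathcal{S}}' \mathbf{\Phi}_{\mathcal{S}}$ and then to apply the spectral theorem. Write $\delta := \delta_{|\mathcal{S}|}$ for brevity. The first step is the observation that the RIP is really a statement about the column-restricted matrix $\mathbf{\Phi}_{\mathcal{S}}$: for any $\mathbf{u} \in \mathcal{R}^{|\mathcal{S}|}$, the vector $\mathbf{\Phi}_{\mathcal{S}} \mathbf{u}$ equals $\mathbf{\Phi}$ applied to the $n$-dimensional, $|\mathcal{S}|$-sparse vector supported on $\mathcal{S}$ with entries $\mathbf{u}$. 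Hence the defining inequality (\ref{eq:RIP}), taken with $\delta = \delta_{|\mathcal{S}|}$, transfers verbatim to give $(1-\delta)\|\mathbf{u}\|_2^2 \leq \|\mathbf{\Phi}_{\mathcal{S}} \mathbf{u}\|_2^2 \leq (1+\delta)\|\mathbf{u}\|_2^2$. Since $\|\mathbf{\Phi}_{\mathcal{S}} \mathbf{u}\|_2^2 = \mathbf{u}' \mathbf{G} \mathbf{u}$, this is precisely a Rayleigh-quotient bound confining the eigenvalues of the symmetric positive-definite matrix $\mathbf{G}$ to the interval $[1-\delta,\, 1+\delta]$; in particular $\delta < 1$ forces $\mathbf{G}$ to be invertible, so that $\mathbf{G}^{-1}$ in the second chain is well defined.

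Next I would invoke the spectral theorem to write $\mathbf{G} = \sum_j \lambda_j \mathbf{v}_j \mathbf{v}_j'$ with orthonormal eigenvectors $\{\mathbf{v}_j\}$ and eigenvalues $\lambda_j \in [1-\delta,\, 1+\delta]$. Expanding an arbitrary $\mathbf{u} = \sum_j c_j \mathbf{v}_j$ gives $\mathbf{G} \mathbf{u} = \sum_j \lambda_j c_j \mathbf{v}_j$, so that $\|\mathbf{G}\mathbf{u}\|_2^2 = \sum_j \lambda_j^2 c_j^2$ while $\|\mathbf{u}\|_2^2 = \sum_j c_j^2$. Bounding each $\lambda_j^2$ between $(1-\delta)^2$ and $(1+\delta)^2$ and taking square roots yields the first chain of inequalities. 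For the second chain, the same decomposition gives $\mathbf{G}^{-1} = \sum_j \lambda_j^{-1} \mathbf{v}_j \mathbf{v}_j'$ with $\lambda_j^{-1} \in [\frac{1}{1+\delta},\, \frac{1}{1-\delta}]$, hence $\|\mathbf{G}^{-1}\mathbf{u}\|_2^2 = \sum_j \lambda_j^{-2} c_j^2$; bounding $\lambda_j^{-2}$ between $\frac{1}{(1+\delta)^2}$ and $\frac{1}{(1-\delta)^2}$ and taking square roots gives the second chain.

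The argument is essentially routine once the reduction is made, so there is no serious obstacle; the only point requiring care is the very first step—recognizing that the RIP hypothesis, stated for $|\mathcal{S}|$-sparse vectors in $\mathcal{R}^n$, passes to \emph{all} of $\mathcal{R}^{|\mathcal{S}|}$ through the identification $\mathbf{u} \mapsto \mathbf{\Phi}_{\mathcal{S}}\mathbf{u}$, together with the standard fact that a Rayleigh-quotient bound localizes the spectrum of a symmetric matrix. Once the eigenvalue localization of $\mathbf{G}$ is secured, both conclusions follow immediately from the spectral theorem and need no additional estimates.
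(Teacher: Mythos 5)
Your proof is correct. Note, however, that the paper does not actually prove this lemma: it is quoted from the literature (\cite{needell2009cosamp,kwon2013multipath}) and used as a black box, so there is no in-paper argument to compare against. Your route---transferring the RIP inequality to $\mathbf{\Phi}_{\mathcal{S}}$ via the identification of $\mathcal{R}^{|\mathcal{S}|}$ with vectors supported on $\mathcal{S}$, reading the resulting Rayleigh-quotient bound as a localization of the spectrum of $\mathbf{G}=\mathbf{\Phi}_{\mathcal{S}}'\mathbf{\Phi}_{\mathcal{S}}$ in $[1-\delta,1+\delta]$, and then applying the spectral theorem to $\mathbf{G}$ and $\mathbf{G}^{-1}$---is the standard one and is essentially the same singular-value/eigenvalue argument the authors do spell out for the sibling result, Lemma~\ref{lem:rips2}, in Appendix~\ref{app:new}. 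All steps check out: positivity of $1-\delta$ guarantees invertibility of $\mathbf{G}$ and makes the squaring and square-rooting of the eigenvalue bounds legitimate, so both chains of inequalities follow as you state.
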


\begin{lemma}
  [Lemma 2.1 in {\cite{candes2008restricted}}]\label{lem:correlationrip} Let
  $\mathcal{S}_{1},\mathcal{S}_{2} \subseteq \Omega$ and $\mathcal{S}_{1} \cap \mathcal{S}_{2} =
  \emptyset$. If $\delta_{|\mathcal{S}_{1} | + |\mathcal{S}_{2} |} <1$, then
  $\| \mathbf{\Phi}_{\mathcal{S}_{1}}'  \mathbf{\Phi v} \|_{2} \leq \delta_{|\mathcal{S}_{1} | +
     |\mathcal{S}_{2} |} \left\| \mathbf{v} \right\|_{2}$
  holds for any vector $\mathbf{v} \in \mathcal{R}^{n}$ supported on $\mathcal{S}_{2}$.
\end{lemma}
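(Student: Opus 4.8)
The plan is to recast the claim as a spectral-norm bound on the cross-Gram matrix $\mathbf{\Phi}_{\mathcal{S}_1}'\mathbf{\Phi}_{\mathcal{S}_2}$, writing $\delta := \delta_{|\mathcal{S}_1|+|\mathcal{S}_2|}$ for brevity. Since $\mathbf{v}$ is supported on $\mathcal{S}_2$, I first observe that $\mathbf{\Phi v} = \mathbf{\Phi}_{\mathcal{S}_2}\mathbf{v}_{\mathcal{S}_2}$, so that $\| \mathbf{\Phi}_{\mathcal{S}_1}'\mathbf{\Phi v}\|_2 = \| \mathbf{\Phi}_{\mathcal{S}_1}'\mathbf{\Phi}_{\mathcal{S}_2}\mathbf{v}_{\mathcal{S}_2}\|_2$ and $\|\mathbf{v}\|_2 = \|\mathbf{v}_{\mathcal{S}_2}\|_2$. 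Using the variational form $\|\mathbf{w}\|_2 = \max_{\|\mathbf{b}\|_2=1}\langle \mathbf{b},\mathbf{w}\rangle$, it suffices to bound the bilinear form $\langle \mathbf{\Phi}_{\mathcal{S}_1}\mathbf{b},\, \mathbf{\Phi}_{\mathcal{S}_2}\mathbf{a}\rangle$ by $\delta$ for all unit vectors $\mathbf{a}\in\mathcal{R}^{|\mathcal{S}_2|}$ and $\mathbf{b}\in\mathcal{R}^{|\mathcal{S}_1|}$; maximizing over $\mathbf{b}$ recovers $\|\mathbf{\Phi}_{\mathcal{S}_1}'\mathbf{\Phi}_{\mathcal{S}_2}\mathbf{a}\|_2$, and homogeneity then upgrades the unit-norm case to a general $\mathbf{v}$.

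The next step is a polarization identity. Because $\mathcal{S}_1$ and $\mathcal{S}_2$ are disjoint, I can stitch $\mathbf{a}$ and $\mathbf{b}$ into two vectors $\mathbf{z}_{\pm}\in\mathcal{R}^n$ supported on $\mathcal{S}_1\cup\mathcal{S}_2$, defined by placing $\mathbf{b}$ on the coordinates indexed by $\mathcal{S}_1$ and $\pm\mathbf{a}$ on those indexed by $\mathcal{S}_2$. Each $\mathbf{z}_{\pm}$ is $(|\mathcal{S}_1|+|\mathcal{S}_2|)$-sparse — this is exactly where disjointness enters — and $\|\mathbf{z}_{\pm}\|_2^2 = \|\mathbf{a}\|_2^2+\|\mathbf{b}\|_2^2 = 2$. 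Expanding $\|\mathbf{\Phi}\mathbf{z}_{\pm}\|_2^2 = \|\mathbf{\Phi}_{\mathcal{S}_1}\mathbf{b}\pm\mathbf{\Phi}_{\mathcal{S}_2}\mathbf{a}\|_2^2$ and subtracting cancels the diagonal terms, leaving $\|\mathbf{\Phi}\mathbf{z}_{+}\|_2^2-\|\mathbf{\Phi}\mathbf{z}_{-}\|_2^2 = 4\langle \mathbf{\Phi}_{\mathcal{S}_1}\mathbf{b},\,\mathbf{\Phi}_{\mathcal{S}_2}\mathbf{a}\rangle$.

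Finally I would apply the RIP of order $|\mathcal{S}_1|+|\mathcal{S}_2|$ to each $\mathbf{z}_{\pm}$: the upper RIP bound gives $\|\mathbf{\Phi}\mathbf{z}_{+}\|_2^2 \leq 2(1+\delta)$ and the lower bound gives $\|\mathbf{\Phi}\mathbf{z}_{-}\|_2^2 \geq 2(1-\delta)$. Substituting into the polarization identity bounds the inner product by $\tfrac14\bigl(2(1+\delta)-2(1-\delta)\bigr) = \delta$. Interchanging the roles of $\mathbf{z}_{+}$ and $\mathbf{z}_{-}$ yields the matching lower bound, so $|\langle \mathbf{\Phi}_{\mathcal{S}_1}\mathbf{b},\,\mathbf{\Phi}_{\mathcal{S}_2}\mathbf{a}\rangle|\leq\delta$, which after the reduction in the first paragraph completes the proof.

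I do not anticipate a serious obstacle, since the statement is a routine consequence of the RIP; the only points needing care are (i) tracking disjointness so that the concatenated vector has sparsity exactly $|\mathcal{S}_1|+|\mathcal{S}_2|$ — otherwise an isometry constant of the wrong order would creep in — and (ii) the normalization bookkeeping that produces the clean constant $\delta$ rather than a spurious factor. An alternative route, viewing $\mathbf{\Phi}_{\mathcal{S}_1}'\mathbf{\Phi}_{\mathcal{S}_2}$ as an off-diagonal block of $\mathbf{\Phi}_{\mathcal{S}_1\cup\mathcal{S}_2}'\mathbf{\Phi}_{\mathcal{S}_1\cup\mathcal{S}_2}-\mathbf{I}$ and bounding its norm by that of the full matrix (whose norm is at most $\delta$ by Definition~\ref{def:rip}), works equally well, but the polarization argument is more self-contained.
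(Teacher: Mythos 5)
Your proof is correct. The paper does not prove this lemma itself but imports it verbatim as Lemma~2.1 of the cited reference \cite{candes2008restricted}, and your polarization argument (concatenating $\mathbf{b}$ and $\pm\mathbf{a}$ into $(|\mathcal{S}_1|+|\mathcal{S}_2|)$-sparse vectors, applying both RIP bounds, and subtracting) is essentially the original proof given there, with the disjointness hypothesis and the normalization handled exactly where they need to be.
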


\vspace{1mm}

\begin{lemma}[Proposition 3.1 in~\cite{needell2009cosamp}] \label{lem:rip5}
Let $\mathcal{S} \subset \Omega$. If $\delta_{|\mathcal{S} |} < 1$, then for any vector $\mathbf{u} \in \mathcal{R}^m$,
$
\|\mathbf{\Phi}'_\mathcal{S} \mathbf{u}\|_2 \leq \sqrt{1 + \delta_{|\mathcal{S}|}} \|\mathbf{u}\|_2.$
\end{lemma}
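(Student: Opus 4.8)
The plan is to reduce the claim to the upper RIP inequality applied to a suitably embedded vector, exploiting the adjoint relationship between $\mathbf{\Phi}_{\mathcal{S}}$ and $\mathbf{\Phi}'_{\mathcal{S}}$. The quantity $\|\mathbf{\Phi}'_{\mathcal{S}} \mathbf{u}\|_2$ measures how much $\mathbf{\Phi}'_{\mathcal{S}}$ can stretch $\mathbf{u}$, and since a matrix and its transpose share the same largest singular value, controlling it amounts to controlling the top singular value of $\mathbf{\Phi}_{\mathcal{S}}$, which is exactly what the RIP upper bound supplies.

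Concretely, I would set $\mathbf{w} := \mathbf{\Phi}'_{\mathcal{S}} \mathbf{u} \in \mathcal{R}^{|\mathcal{S}|}$ and expand $\|\mathbf{w}\|_2^2 = \langle \mathbf{w}, \mathbf{\Phi}'_{\mathcal{S}} \mathbf{u} \rangle = \langle \mathbf{\Phi}_{\mathcal{S}} \mathbf{w}, \mathbf{u} \rangle$, where the last equality is the defining adjoint identity of the transpose. Applying the Cauchy--Schwarz inequality to the right-hand side then gives $\|\mathbf{w}\|_2^2 \leq \|\mathbf{\Phi}_{\mathcal{S}} \mathbf{w}\|_2 \, \|\mathbf{u}\|_2$.

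Next I would bound $\|\mathbf{\Phi}_{\mathcal{S}} \mathbf{w}\|_2$ via the RIP. Embedding $\mathbf{w}$ into the $n$-dimensional vector $\tilde{\mathbf{w}}$ supported on $\mathcal{S}$ (so that $\mathbf{\Phi}_{\mathcal{S}} \mathbf{w} = \mathbf{\Phi} \tilde{\mathbf{w}}$ and $\tilde{\mathbf{w}}$ is $|\mathcal{S}|$-sparse), the upper inequality in Definition~\ref{def:rip} yields $\|\mathbf{\Phi}_{\mathcal{S}} \mathbf{w}\|_2 \leq \sqrt{1 + \delta_{|\mathcal{S}|}}\, \|\mathbf{w}\|_2$. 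Substituting this into the previous display produces $\|\mathbf{w}\|_2^2 \leq \sqrt{1 + \delta_{|\mathcal{S}|}}\, \|\mathbf{w}\|_2 \, \|\mathbf{u}\|_2$, and dividing through by $\|\mathbf{w}\|_2$ (the case $\mathbf{w} = \mathbf{0}$ being trivial) gives the stated bound $\|\mathbf{\Phi}'_{\mathcal{S}} \mathbf{u}\|_2 \leq \sqrt{1 + \delta_{|\mathcal{S}|}}\, \|\mathbf{u}\|_2$.

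There is no genuine obstacle here: the lemma is an elementary consequence of the RIP. The only point requiring a moment of care is the passage from the RIP, stated for $|\mathcal{S}|$-sparse vectors in $\mathcal{R}^n$, to a bound on $\mathbf{\Phi}_{\mathcal{S}}$ acting on $\mathcal{R}^{|\mathcal{S}|}$; this is exactly what the zero-padding embedding above takes care of. An equivalent route would be spectral: write $\|\mathbf{\Phi}'_{\mathcal{S}} \mathbf{u}\|_2^2 = \mathbf{u}' \mathbf{\Phi}_{\mathcal{S}} \mathbf{\Phi}'_{\mathcal{S}} \mathbf{u}$ and note that $\mathbf{\Phi}_{\mathcal{S}} \mathbf{\Phi}'_{\mathcal{S}}$ and $\mathbf{\Phi}'_{\mathcal{S}} \mathbf{\Phi}_{\mathcal{S}}$ share the same nonzero eigenvalues, whose maximum is at most $1 + \delta_{|\mathcal{S}|}$ by the RIP; but the Cauchy--Schwarz argument is shorter and avoids invoking eigenvalue facts.
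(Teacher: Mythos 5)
Your proof is correct: the adjoint identity $\|\mathbf{\Phi}'_{\mathcal{S}}\mathbf{u}\|_2^2 = \langle \mathbf{\Phi}_{\mathcal{S}}\mathbf{w}, \mathbf{u}\rangle$ with $\mathbf{w}=\mathbf{\Phi}'_{\mathcal{S}}\mathbf{u}$, followed by Cauchy--Schwarz and the RIP upper bound applied to the zero-padded $\mathbf{w}$, gives exactly the claimed inequality, and you correctly dispose of the degenerate case $\mathbf{w}=\mathbf{0}$. Note, however, that the paper does not prove this lemma at all --- it is imported verbatim as Proposition 3.1 of the CoSaMP paper of Needell and Tropp, so there is no in-paper argument to compare against. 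The standard proof in that reference, and the one the present paper uses for the companion Lemma~\ref{lem:rips2} in Appendix~\ref{app:new}, is the spectral one you sketch at the end: the RIP confines the singular values of $\mathbf{\Phi}_{\mathcal{S}}$ to $[\sqrt{1-\delta_{|\mathcal{S}|}},\sqrt{1+\delta_{|\mathcal{S}|}}]$, and $\mathbf{\Phi}'_{\mathcal{S}}$ has the same operator norm as $\mathbf{\Phi}_{\mathcal{S}}$. Your Cauchy--Schwarz route buys a slightly more elementary derivation that avoids invoking the SVD; the spectral route buys the matching lower bound on the nonzero singular values for free, which is what Lemma~\ref{lem:rips2} needs. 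Either is acceptable here.
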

\vspace{1mm}
\begin{lemma}[Lemma 5 in~\cite{cai2011orthogonal}] \label{lem:rip6} Let
  $\mathcal{S}_{1},\mathcal{S}_{2} \subseteq \Omega$. Then the minimum and maximum eigenvalues of $\mathbf{\Phi}'_{\mathcal{S}_1}  \mathbf{P}^\bot_{\mathcal{S}_2}  \mathbf{\Phi}_{\mathcal{S}_1  }$ satisfy
\begin{eqnarray}
 \lambda_{\min} (  \mathbf{\Phi}'_{\mathcal{S}_1}  \mathbf{P}^\bot_{\mathcal{S}_2}  \mathbf{\Phi}_{\mathcal{S}_1  })  &\geq& \lambda_{\min} (\mathbf{\Phi}'_{\mathcal{S}_1 \cup \mathcal{S}_2} \mathbf{\Phi}_{\mathcal{S}_1 \cup \mathcal{S}_2}), \nonumber \\
  \lambda_{\max} (  \mathbf{\Phi}'_{\mathcal{S}_1}  \mathbf{P}^\bot_{\mathcal{S}_2}  \mathbf{\Phi}_{\mathcal{S}_1  })  &\leq& \lambda_{\max} (\mathbf{\Phi}'_{\mathcal{S}_1 \cup \mathcal{S}_2} \mathbf{\Phi}_{\mathcal{S}_1 \cup \mathcal{S}_2}). \nonumber
\end{eqnarray}

\end{lemma}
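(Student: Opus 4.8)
The plan is to reduce both bounds to the Rayleigh--Ritz (Courant--Fischer) characterization of the extreme eigenvalues of a real symmetric matrix, exploiting the fact that $\mathbf{P}^{\bot}_{\mathcal{S}_2}$ is a symmetric idempotent (an orthogonal projector). Write $\mathbf{M} := \mathbf{\Phi}'_{\mathcal{S}_1}\mathbf{P}^{\bot}_{\mathcal{S}_2}\mathbf{\Phi}_{\mathcal{S}_1}$ and $\mathcal{S} := \mathcal{S}_1 \cup \mathcal{S}_2$. First I would record that $\mathbf{M}$ is symmetric and positive semidefinite, since for every $\mathbf{u} \in \mathcal{R}^{|\mathcal{S}_1|}$ one has $\mathbf{u}'\mathbf{M}\mathbf{u} = \|\mathbf{P}^{\bot}_{\mathcal{S}_2}\mathbf{\Phi}_{\mathcal{S}_1}\mathbf{u}\|_2^2$ (using $(\mathbf{P}^{\bot}_{\mathcal{S}_2})' = \mathbf{P}^{\bot}_{\mathcal{S}_2} = (\mathbf{P}^{\bot}_{\mathcal{S}_2})^2$). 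The crucial reformulation is that this residual norm is the value of a least-squares problem,
\[
\mathbf{u}'\mathbf{M}\mathbf{u} = \min_{\mathbf{w} \in \mathcal{R}^{|\mathcal{S}_2|}} \big\|\mathbf{\Phi}_{\mathcal{S}_1}\mathbf{u} - \mathbf{\Phi}_{\mathcal{S}_2}\mathbf{w}\big\|_2^2,
\]
because the orthogonal projection $\mathbf{P}_{\mathcal{S}_2}\mathbf{\Phi}_{\mathcal{S}_1}\mathbf{u}$ is precisely the best approximation to $\mathbf{\Phi}_{\mathcal{S}_1}\mathbf{u}$ from $\text{span}(\mathbf{\Phi}_{\mathcal{S}_2})$.

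For the upper bound I would combine Rayleigh--Ritz with the nonexpansiveness of the projector: $\lambda_{\max}(\mathbf{M}) = \max_{\|\mathbf{u}\|_2 = 1}\|\mathbf{P}^{\bot}_{\mathcal{S}_2}\mathbf{\Phi}_{\mathcal{S}_1}\mathbf{u}\|_2^2 \leq \max_{\|\mathbf{u}\|_2=1}\|\mathbf{\Phi}_{\mathcal{S}_1}\mathbf{u}\|_2^2$. Then, extending any such $\mathbf{u}$ by zeros on the coordinates of $\mathcal{S}\setminus\mathcal{S}_1$ to a vector $\mathbf{z}$ with $\|\mathbf{z}\|_2 = \|\mathbf{u}\|_2$ and $\mathbf{\Phi}_{\mathcal{S}_1}\mathbf{u} = \mathbf{\Phi}_{\mathcal{S}}\mathbf{z}$, I obtain $\|\mathbf{\Phi}_{\mathcal{S}_1}\mathbf{u}\|_2^2 = \mathbf{z}'\mathbf{\Phi}'_{\mathcal{S}}\mathbf{\Phi}_{\mathcal{S}}\mathbf{z} \leq \lambda_{\max}(\mathbf{\Phi}'_{\mathcal{S}}\mathbf{\Phi}_{\mathcal{S}})$, which is the claimed inequality. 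This step is just Cauchy interlacing made explicit, and it needs no disjointness since $\mathcal{S}_1 \subseteq \mathcal{S}$.

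For the lower bound I would use the least-squares form together with a concatenation argument. Since $\lambda_{\min}(\mathbf{M}) = \min_{\|\mathbf{u}\|_2 = 1}\min_{\mathbf{w}}\|\mathbf{\Phi}_{\mathcal{S}_1}\mathbf{u} - \mathbf{\Phi}_{\mathcal{S}_2}\mathbf{w}\|_2^2$, for fixed $\mathbf{u}$ and $\mathbf{w}$ I would assemble the vector $\mathbf{z}$ supported on $\mathcal{S}$ that equals $\mathbf{u}$ on $\mathcal{S}_1$ and $-\mathbf{w}$ on $\mathcal{S}_2$, so that $\mathbf{\Phi}_{\mathcal{S}_1}\mathbf{u} - \mathbf{\Phi}_{\mathcal{S}_2}\mathbf{w} = \mathbf{\Phi}_{\mathcal{S}}\mathbf{z}$ and $\|\mathbf{z}\|_2^2 = \|\mathbf{u}\|_2^2 + \|\mathbf{w}\|_2^2 \geq \|\mathbf{u}\|_2^2 = 1$. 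Then $\|\mathbf{\Phi}_{\mathcal{S}}\mathbf{z}\|_2^2 \geq \lambda_{\min}(\mathbf{\Phi}'_{\mathcal{S}}\mathbf{\Phi}_{\mathcal{S}})\|\mathbf{z}\|_2^2 \geq \lambda_{\min}(\mathbf{\Phi}'_{\mathcal{S}}\mathbf{\Phi}_{\mathcal{S}})$, and taking the minimum over $\mathbf{u}$ and $\mathbf{w}$ yields the bound. The step I expect to be the only real obstacle is this norm bookkeeping: the inequality $\|\mathbf{z}\|_2 \geq \|\mathbf{u}\|_2$ is exactly what converts the constrained minimum over the enlarged index set into the desired eigenvalue, and the concatenation of $\mathbf{u}$ and $-\mathbf{w}$ into a single vector on $\mathcal{S}$ is unambiguous precisely when $\mathcal{S}_1 \cap \mathcal{S}_2 = \emptyset$, which is the regime in which the lemma is invoked; I would state that disjointness explicitly, noting that if $\mathcal{S}_1$ and $\mathcal{S}_2$ overlap then the columns of $\mathbf{\Phi}_{\mathcal{S}_1}$ indexed in $\mathcal{S}_2$ are annihilated by $\mathbf{P}^{\bot}_{\mathcal{S}_2}$ and the lower bound must be read accordingly.
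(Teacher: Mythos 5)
Your proof is correct. Note first that the paper offers no proof of this lemma at all: it is imported verbatim as Lemma~5 of the cited reference (Cai and Wang, 2011), so there is no in-paper argument to compare against. Your Rayleigh--Ritz reduction is the standard and complete way to establish it: the identity $\mathbf{u}'\mathbf{M}\mathbf{u} = \|\mathbf{P}^{\bot}_{\mathcal{S}_2}\mathbf{\Phi}_{\mathcal{S}_1}\mathbf{u}\|_2^2 = \min_{\mathbf{w}}\|\mathbf{\Phi}_{\mathcal{S}_1}\mathbf{u}-\mathbf{\Phi}_{\mathcal{S}_2}\mathbf{w}\|_2^2$ together with the zero-extension (for the upper bound) and the concatenation $\mathbf{z}=(\mathbf{u},-\mathbf{w})$ with $\|\mathbf{z}\|_2\geq\|\mathbf{u}\|_2$ (for the lower bound) gives both inequalities cleanly. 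Your caveat about disjointness is also well taken and worth keeping: as stated, with $\mathcal{S}_1\cap\mathcal{S}_2\neq\emptyset$ the lower bound is simply false (take $\mathcal{S}_1=\mathcal{S}_2$ a singleton with a unit-norm column, so $\lambda_{\min}(\mathbf{M})=0$ while $\lambda_{\min}(\mathbf{\Phi}'_{\mathcal{S}}\mathbf{\Phi}_{\mathcal{S}})=1$), so the hypothesis $\mathcal{S}_1\cap\mathcal{S}_2=\emptyset$ is genuinely required there; the paper omits it from the statement but only ever invokes the lower bound with disjoint sets (e.g., $\mathcal{T}\setminus\mathcal{T}^k$ against $\mathcal{T}^k$), and the one place it uses the $\lambda_{\max}$ bound with possibly overlapping sets is harmless since, as you observe, that direction needs no disjointness.
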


\begin{lemma} \label{lem:rips2} Let $\mathcal{S}
  \subseteq \Omega$. If $\delta_{| \mathcal{S} |} <1$ then for any vector $\mathbf{u} \in
  \mathcal{R}^{| \mathcal{S} |}$,
\begin{equation}
      \frac{\left\| \mathbf{u} \right\|_{2}}{\sqrt{1+ \delta_{| \mathcal{S} |}}}  \leq \| (\mathbf{\Phi}_{\mathcal{S}}^\dag)' \mathbf{u} \|_{2} \leq
    \frac{\left\| \mathbf{u} \right\|_{2}}{\sqrt{1- \delta_{| \mathcal{S} |}}}.
    \label{eq:ab0}
  \end{equation}
\end{lemma}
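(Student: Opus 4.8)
The plan is to reduce everything to the spectrum of the Gram matrix $\mathbf{G} := \mathbf{\Phi}'_{\mathcal{S}} \mathbf{\Phi}_{\mathcal{S}}$, which the RIP controls directly. Since $\delta_{|\mathcal{S}|} < 1$, the submatrix $\mathbf{\Phi}_{\mathcal{S}}$ has full column rank, so $\mathbf{G}$ is invertible and the pseudoinverse $\mathbf{\Phi}_{\mathcal{S}}^\dagger = \mathbf{G}^{-1} \mathbf{\Phi}'_{\mathcal{S}}$ is well defined. Transposing and using that $\mathbf{G}^{-1}$ is symmetric gives $(\mathbf{\Phi}_{\mathcal{S}}^\dagger)' = \mathbf{\Phi}_{\mathcal{S}} \mathbf{G}^{-1}$, which is the operator whose action on $\mathbf{u}$ we must bound.

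The key computation is to expand the squared norm and exploit the cancellation built into the pseudoinverse. Writing $\mathbf{w} = \mathbf{G}^{-1} \mathbf{u}$, I would show
\[
\|(\mathbf{\Phi}_{\mathcal{S}}^\dagger)' \mathbf{u}\|_2^2 = \|\mathbf{\Phi}_{\mathcal{S}} \mathbf{w}\|_2^2 = \mathbf{w}' \mathbf{\Phi}'_{\mathcal{S}} \mathbf{\Phi}_{\mathcal{S}} \mathbf{w} = \mathbf{w}' \mathbf{G} \mathbf{w} = \mathbf{u}' \mathbf{G}^{-1} \mathbf{u},
\]
where the final equality uses $\mathbf{w}' \mathbf{G} \mathbf{w} = (\mathbf{G}^{-1} \mathbf{u})' \mathbf{G} (\mathbf{G}^{-1} \mathbf{u}) = \mathbf{u}' \mathbf{G}^{-1} \mathbf{u}$. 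Thus the whole problem collapses to bounding the quadratic form $\mathbf{u}' \mathbf{G}^{-1} \mathbf{u}$.

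To finish, I would invoke the RIP of order $|\mathcal{S}|$, which forces every eigenvalue of $\mathbf{G}$ into $[1 - \delta_{|\mathcal{S}|}, 1 + \delta_{|\mathcal{S}|}]$ (this is Definition~\ref{def:rip} applied to vectors supported on $\mathcal{S}$, and is also recorded in Lemma~\ref{lem:rips}). Consequently the eigenvalues of $\mathbf{G}^{-1}$ lie in $[1/(1 + \delta_{|\mathcal{S}|}), 1/(1 - \delta_{|\mathcal{S}|})]$, and the Rayleigh--Ritz bounds give
\[
\frac{\|\mathbf{u}\|_2^2}{1 + \delta_{|\mathcal{S}|}} \leq \mathbf{u}' \mathbf{G}^{-1} \mathbf{u} \leq \frac{\|\mathbf{u}\|_2^2}{1 - \delta_{|\mathcal{S}|}}.
\]
Combining this with the identity above and taking square roots yields the claimed inequalities in~(\ref{eq:ab0}).

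There is no genuine obstacle here; the only point requiring care is the transpose/symmetry bookkeeping that produces $(\mathbf{\Phi}_{\mathcal{S}}^\dagger)' = \mathbf{\Phi}_{\mathcal{S}} \mathbf{G}^{-1}$ together with the cancellation $\mathbf{w}' \mathbf{G} \mathbf{w} = \mathbf{u}' \mathbf{G}^{-1} \mathbf{u}$. An equivalent and perhaps cleaner route is through the singular value decomposition $\mathbf{\Phi}_{\mathcal{S}} = \mathbf{U} \mathbf{\Sigma} \mathbf{V}'$, where the RIP forces the singular values into $[\sqrt{1 - \delta_{|\mathcal{S}|}}, \sqrt{1 + \delta_{|\mathcal{S}|}}]$; then $(\mathbf{\Phi}_{\mathcal{S}}^\dagger)' = \mathbf{U} \mathbf{\Sigma}^{-1} \mathbf{V}'$ has singular values in $[1/\sqrt{1 + \delta_{|\mathcal{S}|}}, 1/\sqrt{1 - \delta_{|\mathcal{S}|}}]$, which delivers the operator-norm bounds, and hence~(\ref{eq:ab0}), directly.
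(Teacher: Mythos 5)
Your argument is correct, and the singular-value-decomposition route you sketch at the end is precisely the paper's own proof: the paper writes $\mathbf{\Phi}_{\mathcal{S}} = \mathbf{U}\mathbf{\Sigma}\mathbf{V}'$, deduces $(\mathbf{\Phi}_{\mathcal{S}}^\dag)' = \mathbf{U}\mathbf{\Sigma}^{-1}\mathbf{V}'$, and bounds the singular values by $1/\sqrt{1-\delta_{|\mathcal{S}|}}$. Your primary route via the identity $\|(\mathbf{\Phi}_{\mathcal{S}}^\dag)'\mathbf{u}\|_2^2 = \mathbf{u}'(\mathbf{\Phi}'_{\mathcal{S}}\mathbf{\Phi}_{\mathcal{S}})^{-1}\mathbf{u}$ and Rayleigh--Ritz is the same spectral computation in Gram-matrix form, with the minor merit that it makes the lower bound explicit, whereas the paper only writes out the upper bound.
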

The upper bound in \eqref{eq:ab0} has appeared in \cite[Proposition 3.1]{needell2009cosamp} and we give a proof for the upper bound in Appendix~\ref{app:new}.

\section{Observation} \label{sec:III}

Let us begin with an interesting and important observation regarding the identification
step of MOLS as shown in Table~\ref{tab:mols}. At the $(k+1)$-th
iteration ($k \geq 0$), MOLS adds to $\mathcal{T}^{k}$ a set of $L$
indices,
\begin{equation}
  \label{eq:golsrulej} \mathcal{S}^{k+1} = \arg \min_{\mathcal{S} : | \mathcal{S} | =L}
  \sum_{i \in \mathcal{S}} \| \mathbf{P}^{\bot}_{\mathcal{T}^{k} \cup \{i\}} \mathbf{y}
  \|_{2}^{2}.
\end{equation}
Intuitively, a straightforward implementation of
\eqref{eq:golsrulej} requires to sort all elements in $\{\|
\mathbf{P}^{\bot}_{\mathcal{T}^{k} \cup \{i\}} \mathbf{y} \|_{2}^{2}
\}_{i \in \Omega \setminus \mathcal{T}^{k}}$ and then find the
smallest $L$ ones (and their corresponding indices). This
implementation, however, is computationally expensive as it requires
to construct $n-Lk$ different orthogonal projections (i.e.,
$\mathbf{P}^{\bot}_{\mathcal{T}^{k} \cup \{i\}}$, $\forall i \in
\Omega \setminus \mathcal{T}^{k}$). Therefore, it is highly
desirable to find a cost-effective alternative to
\eqref{eq:golsrulej} for the identification step of MOLS.

Interestingly, the following proposition illustrates that \eqref{eq:golsrulej} can be substantially simplified.  It is inspired by the technical report of Blumensath and Davies~\cite{blumensath2007difference}, in which a geometric interpretation of OLS is given in terms of orthogonal projections.

\begin{proposition} \label{prop:p1}
At the $(k + 1)$-th iteration, the MOLS algorithm identifies a set of $L$ indices:
\begin{align}\label{eq:golsrule11111}
  \mathcal{S}^{k + 1} =&
   \arg \max_{\mathcal{S} : | \mathcal{S} | =L}   \sum_{i \in \mathcal{S}}
\frac{| \langle \phi_{i}, \mathbf{r}^k \rangle |}{\|
\mathbf{P}^{\bot}_{\mathcal{T}^{k}} \phi_{i} \|_{2}}\\ \label{eq:golsrule11110}
   =&\arg \max_{\mathcal{S} : | \mathcal{S} | =L}   \sum_{i \in \mathcal{S}}
\bigg| \bigg\langle \frac{ \mathbf{P}^{\bot}_{\mathcal{T}^{k}}
\phi_{i} }{\| \mathbf{P}^{\bot}_{\mathcal{T}^{k}} \phi_{i} \|_{2}},
\mathbf{r}^k \bigg\rangle \bigg|.
\end{align}
\end{proposition}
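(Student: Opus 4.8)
The plan is to rewrite the objective $\|\mathbf{P}^{\bot}_{\mathcal{T}^{k}\cup\{i\}}\mathbf{y}\|_2^2$ for each individual candidate index $i$, showing that minimizing it over $i$ is equivalent to maximizing the ratio $|\langle\phi_i,\mathbf{r}^k\rangle|/\|\mathbf{P}^{\bot}_{\mathcal{T}^{k}}\phi_i\|_2$. Since the MOLS identification rule \eqref{eq:golsrulej} simply selects the $L$ indices with the smallest values of $\|\mathbf{P}^{\bot}_{\mathcal{T}^{k}\cup\{i\}}\mathbf{y}\|_2^2$, establishing this per-index equivalence immediately yields that the same $L$ indices maximize the summed ratio, which is exactly \eqref{eq:golsrule11111}.

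The key algebraic step is a decomposition of the projection onto $\text{span}(\mathbf{\Phi}_{\mathcal{T}^{k}\cup\{i\}})$. I would write $\mathbf{P}_{\mathcal{T}^{k}\cup\{i\}}=\mathbf{P}_{\mathcal{T}^{k}}+\mathbf{P}_{\mathcal{T}^{k}}^{\perp}\text{(rank-one piece)}$; concretely, the normalized vector $\mathbf{P}^{\bot}_{\mathcal{T}^{k}}\phi_i/\|\mathbf{P}^{\bot}_{\mathcal{T}^{k}}\phi_i\|_2$ is the unit direction in $\text{span}(\mathbf{\Phi}_{\mathcal{T}^{k}\cup\{i\}})$ orthogonal to $\text{span}(\mathbf{\Phi}_{\mathcal{T}^{k}})$. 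Using the Pythagorean decomposition of $\mathbf{y}$ against these two orthogonal subspaces, I get
\begin{equation}
\|\mathbf{P}^{\bot}_{\mathcal{T}^{k}\cup\{i\}}\mathbf{y}\|_2^2
=\|\mathbf{P}^{\bot}_{\mathcal{T}^{k}}\mathbf{y}\|_2^2
-\bigg|\bigg\langle\frac{\mathbf{P}^{\bot}_{\mathcal{T}^{k}}\phi_i}{\|\mathbf{P}^{\bot}_{\mathcal{T}^{k}}\phi_i\|_2},\mathbf{y}\bigg\rangle\bigg|^2.
\nonumber
\end{equation}
The first term is constant in $i$, so minimizing the left-hand side is the same as maximizing the subtracted inner-product term, which is precisely the summand in \eqref{eq:golsrule11110}.

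To finish, I would reconcile \eqref{eq:golsrule11110} with \eqref{eq:golsrule11111}. Since $\mathbf{P}^{\bot}_{\mathcal{T}^{k}}$ is an orthogonal projection, it is symmetric and idempotent, so $\langle\mathbf{P}^{\bot}_{\mathcal{T}^{k}}\phi_i,\mathbf{y}\rangle=\langle\phi_i,\mathbf{P}^{\bot}_{\mathcal{T}^{k}}\mathbf{y}\rangle=\langle\phi_i,\mathbf{r}^k\rangle$, where the last equality uses the fact that the residual $\mathbf{r}^k=\mathbf{P}^{\bot}_{\mathcal{T}^{k}}\mathbf{y}$ (the least-squares residual equals the projection of $\mathbf{y}$ onto the orthogonal complement of the selected span). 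Dividing by the normalizing factor $\|\mathbf{P}^{\bot}_{\mathcal{T}^{k}}\phi_i\|_2$ converts \eqref{eq:golsrule11110} into \eqref{eq:golsrule11111}, completing the argument.

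The main obstacle is the rank-one projection update: I must verify carefully that $\mathbf{P}_{\mathcal{T}^{k}\cup\{i\}}-\mathbf{P}_{\mathcal{T}^{k}}$ equals the orthogonal projection onto the one-dimensional span of $\mathbf{P}^{\bot}_{\mathcal{T}^{k}}\phi_i$. This requires the standard but slightly delicate fact that augmenting a subspace by one vector $\phi_i$ adds exactly the direction $\phi_i$ orthogonalized against the existing subspace, and that $\mathbf{P}^{\bot}_{\mathcal{T}^{k}}\phi_i\neq\mathbf{0}$ (guaranteed since $i\notin\mathcal{T}^{k}$ and the relevant columns are linearly independent under the RIP assumption). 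Once this decomposition is justified, the remainder is routine inner-product manipulation.
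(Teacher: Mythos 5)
Your proposal is correct and follows essentially the same route as the paper: both reduce the set minimization in \eqref{eq:golsrulej} to a per-index comparison via the rank-one projection update $\mathbf{P}_{\mathcal{T}^{k} \cup \{i\}} = \mathbf{P}_{\mathcal{T}^{k}} + \mathbf{P}^{\bot}_{\mathcal{T}^{k}} \phi_{i} ( \phi'_{i} \mathbf{P}^{\bot}_{\mathcal{T}^{k}} \phi_{i} )^{-1} \phi'_{i} \mathbf{P}^{\bot}_{\mathcal{T}^{k}}$, identify the added term with $| \langle \phi_{i}, \mathbf{r}^k \rangle |^{2} / \| \mathbf{P}^{\bot}_{\mathcal{T}^{k}} \phi_{i} \|_{2}^{2}$, and pass between the two stated forms using the symmetry and idempotence of $\mathbf{P}^{\bot}_{\mathcal{T}^{k}}$. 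The only cosmetic difference is that the paper verifies the rank-one update algebraically via the partitioned inverse formula, whereas you invoke the equivalent geometric (Gram--Schmidt) fact that augmenting the subspace by $\phi_i$ adds exactly the direction $\mathbf{P}^{\bot}_{\mathcal{T}^{k}} \phi_{i}$.
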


The proof of this proposition is given in Appendix~\ref{app:p1}. It is essentially identical to some analysis in~\cite{rebollo2002optimized} (which is particularly for OLS), but with extension to the case of selecting multiple indices per iteration (i.e., the MOLS case). This extension is important in that it not only enables a low-complexity implementation for MOLS, but also will play an key role in the
performance analysis of MOLS in Section~\ref{sec:IV} and~\ref{sec:V}.
We thus include the proof for completeness.

One can interpret from \eqref{eq:golsrule11111} that
to identify $\mathcal{S}^{k+1}$, it suffices to find the $L$ largest
values in $\big\{ \frac{| \langle \phi_{i}, \mathbf{r}^{k} \rangle
|}{\| \mathbf{P}^{\bot}_{\mathcal{T}^{k}} \phi_{i} \|_{2}}
\big\}_{i \in \Omega \setminus \mathcal{T}^{k}}$, which is
much simpler than \eqref{eq:golsrulej} as it involves only one
projection operator (i.e.,
$\mathbf{P}^{\bot}_{\mathcal{T}^{k}}$).
Indeed, by numerical experiments, we
have confirmed that the simplification offers massive reduction in the computational cost.


Following the arguments in~\cite{blumensath2007difference,soussen2013joint}, we give a geometric interpretation of the selection rule in
MOLS: the columns of measurement matrix are projected onto the
subspace that is orthogonal to the span of the active columns, and
the $L$ normalized projected columns that are best correlated with
the residual vector are selected.


\section{Exact Sparse Recovery with MOLS} \label{sec:IV}

\subsection{Main Results} \label{sec:mainresults}

In this section, we study the condition of MOLS for exact recovery of sparse signals. For convenience of  stating the results, we say that MOLS makes a success at an
iteration if it selects at least one correct index at the iteration.
Clearly if MOLS makes a success in each iteration, it will select all support indices within $K$ iterations. When all support indices are selected, MOLS can recover the sparse signal exactly.

\begin{theorem}
  \label{thm:atleast10}
  Let $\mathbf{x} \in \mathcal{R}^n$ be any $K$-sparse signal and let $\mathbf{\Phi} \in \mathcal{R}^{m \times n}$ be the measurement matrix. Also, let $L$ be the number of indices selected at each iteration of MOLS. Then if $\mathbf{\Phi}$ satisfies the RIP with
  \begin{equation}
    \left\{\begin{array}{ll}
      \delta_{LK} < \frac{\sqrt{L}}{\sqrt{K} + 2\sqrt{L}}, & L>1,\\
      \delta_{K+1} < \frac{1}{\sqrt{K} + 2}, & L=1,
    \end{array}\right. \label{eq:jjjjffffa}
  \end{equation}
  MOLS exactly recovers $\mathbf{x}$ from the
  measurements $\mathbf{y} = \mathbf{\Phi x}$ within $K$ iterations.
\end{theorem}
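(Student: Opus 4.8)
**The plan is to prove the theorem by induction on the iteration count, showing that at every iteration MOLS selects at least one correct support index**, so that all $K$ support indices are collected within $K$ iterations; once the full support $\mathcal{T} \subseteq \mathcal{T}^k$ is captured, the least-squares estimate is exact and the algorithm recovers $\mathbf{x}$. The induction hypothesis would be that after $k$ iterations, $\mathcal{T}^k$ contains at least $k$ correct indices (one new correct index per iteration), and in particular $\mathbf{r}^k = \mathbf{P}^\bot_{\mathcal{T}^k}\mathbf{y}$ is supported, in the sense of the residual correlation, on the remaining undetected columns.

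The crux is the \emph{success at each iteration} step: assuming $\mathcal{T}^k$ does not yet contain all of $\mathcal{T}$, I would show MOLS picks at least one index from $\mathcal{T} \setminus \mathcal{T}^k$. By Proposition~\ref{prop:p1}, the selection maximizes $\sum_{i\in\mathcal{S}} |\langle \phi_i, \mathbf{r}^k\rangle| / \|\mathbf{P}^\bot_{\mathcal{T}^k}\phi_i\|_2$, so it suffices to compare the largest such normalized correlation over correct indices against that over incorrect indices. Concretely, I would set up a sufficient condition of the form
\begin{equation}
\max_{i \in \mathcal{T}\setminus\mathcal{T}^k} \frac{|\langle \phi_i, \mathbf{r}^k\rangle|}{\|\mathbf{P}^\bot_{\mathcal{T}^k}\phi_i\|_2} > \max_{j \in \Omega\setminus(\mathcal{T}\cup\mathcal{T}^k)} \frac{|\langle \phi_j, \mathbf{r}^k\rangle|}{\|\mathbf{P}^\bot_{\mathcal{T}^k}\phi_j\|_2}, \nonumber
\end{equation}
which guarantees that the top-scoring index is correct, hence at least one correct index enters $\mathcal{S}^{k+1}$. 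To make this tractable I would instead bound the $L$-term sums: lower-bound the best achievable score using $L$ correct indices (or however many correct indices remain) and upper-bound the score attainable using only incorrect indices, then show the former exceeds the latter.

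For the lower bound on correct-index correlations, since $\mathbf{r}^k = \mathbf{P}^\bot_{\mathcal{T}^k}\mathbf{y} = \mathbf{P}^\bot_{\mathcal{T}^k}\mathbf{\Phi}_{\mathcal{T}\setminus\mathcal{T}^k}\mathbf{x}_{\mathcal{T}\setminus\mathcal{T}^k}$, I would express $\langle \phi_i, \mathbf{r}^k\rangle = \langle \mathbf{P}^\bot_{\mathcal{T}^k}\phi_i, \mathbf{r}^k\rangle$ and invoke Lemma~\ref{lem:rip6} (eigenvalue bounds for $\mathbf{\Phi}'_{\mathcal{S}_1}\mathbf{P}^\bot_{\mathcal{S}_2}\mathbf{\Phi}_{\mathcal{S}_1}$) together with Lemma~\ref{lem:rips} to control the energy of the projected correct columns relative to $\|\mathbf{r}^k\|_2$. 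The denominators $\|\mathbf{P}^\bot_{\mathcal{T}^k}\phi_i\|_2$ are at most $1$ (projections are non-expansive), which helps the correct side. For the upper bound on incorrect-index correlations, I would use Lemma~\ref{lem:correlationrip} to bound $\|\mathbf{\Phi}'_{\mathcal{S}_j}\mathbf{r}^k\|_2$ by $\delta$ times the residual-associated energy, while lower-bounding the denominators $\|\mathbf{P}^\bot_{\mathcal{T}^k}\phi_j\|_2$ away from zero via the RIP (a bound like $\|\mathbf{P}^\bot_{\mathcal{T}^k}\phi_j\|_2 \geq \sqrt{1-\delta_{|\mathcal{T}^k|+1}}$ using Lemma~\ref{lem:rip6} with a singleton). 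Combining these, the decisive inequality reduces to a comparison between a term scaling like $\sqrt{L/K}\,(1-\delta)$-ish on the correct side and $\delta$-scaled term on the incorrect side, and the stated threshold $\delta_{LK} < \sqrt{L}/(\sqrt{K}+2\sqrt{L})$ is precisely what forces the correct side to win.

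\textbf{The main obstacle} I anticipate is the careful bookkeeping of the normalizing denominators $\|\mathbf{P}^\bot_{\mathcal{T}^k}\phi_i\|_2$, which couple the numerator correlations in a non-trivial way and differ between correct and incorrect indices; getting the constant in the RIP threshold to come out sharply as $\sqrt{L}/(\sqrt{K}+2\sqrt{L})$ (rather than a looser bound) requires handling these denominators tightly rather than crudely bounding them by $1$. A secondary subtlety is that the number of remaining correct indices may drop below $L$ in later iterations, so the lower-bound argument must be phrased in terms of $\min\{L, |\mathcal{T}\setminus\mathcal{T}^k|\}$ correct indices; I would need to verify the inequality still closes in that regime, and confirm that the residual being nonzero (so that the algorithm has not already succeeded) is compatible with the induction. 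Finally, I would note that the $L=1$ case with threshold $\delta_{K+1} < 1/(\sqrt{K}+2)$ follows as a specialization of the same argument, since $\sqrt{L}/(\sqrt{K}+2\sqrt{L})$ with $L=1$ gives exactly $1/(\sqrt{K}+2)$ and $\delta_{LK}=\delta_K \leq \delta_{K+1}$ by monotonicity (Lemma~\ref{lem:mono}).
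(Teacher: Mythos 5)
Your overall strategy --- induction on the iteration count, using the reformulated selection rule of Proposition~\ref{prop:p1} to compare normalized correlations of correct versus incorrect columns, with the RIP lemmas supplying the bounds --- is exactly the paper's. But the central quantitative device is missing, and both comparison criteria you float in its place fail to deliver the stated constant. Your displayed condition (largest correct score exceeds largest incorrect score) is sufficient for success but far stronger than needed: carried through, it yields a threshold of order $1/(\sqrt{K}+2)$ with no dependence on $L$, i.e., no improvement over OLS, whereas the whole point is the $\sqrt{L}$ gain in the numerator. Your fallback --- comparing the sum of scores over $L$ correct indices against the sum over $L$ incorrect ones --- runs into the obstacle you yourself flag (fewer than $L$ correct indices may remain) and does not close cleanly. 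The paper's resolution (Proposition~\ref{lem:upperbound}) is asymmetric: compare the single largest correct score $u_1$ against the $L$-th largest incorrect score $v_L$; if $u_1>v_L$ then the index achieving $u_1$ is among the top $L$ overall and is selected, no matter how many correct indices remain. The $\sqrt{L}$ then enters through the elementary bound $\sqrt{L}\,v_L \leq \big(\sum_{i\in\mathcal{F}} |\langle\phi_i,\mathbf{r}^k\rangle|^2/\|\mathbf{P}^{\bot}_{\mathcal{T}^k}\phi_i\|_2^2\big)^{1/2}$ over the top-$L$ incorrect set $\mathcal{F}$, combined with Lemma~\ref{lem:correlationrip} applied to $\mathbf{\Phi}'_{\mathcal{F}}\mathbf{P}^{\bot}_{\mathcal{T}^k}\mathbf{\Phi}_{\mathcal{T}\setminus\mathcal{T}^k}\mathbf{x}_{\mathcal{T}\setminus\mathcal{T}^k}$, while $u_1$ is lower-bounded by the root-mean-square over all $K-\ell$ remaining correct indices, producing the $\sqrt{K-\ell}$ in \eqref{eq:small}. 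Without this $u_1$-versus-$v_L$ comparison your argument cannot produce $\delta_{LK}<\sqrt{L}/(\sqrt{K}+2\sqrt{L})$.

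A secondary gap: you treat the first iteration as subsumed by the general step, but it must be analyzed separately, since at $k=0$ all denominators equal $1$ and the contradiction argument gives the distinct requirement $\delta_{K+L}<\sqrt{L}/(\sqrt{K}+\sqrt{L})$ of \eqref{eq:gggal}. For $L\geq 2$ this is dominated by \eqref{eq:k+1}, but for $L=1$ it is precisely the first-iteration bound that forces the RIP order up to $K+1$ in the stated condition --- the general iteration with $L=1$ only ever involves constants of order $K$. Your monotonicity remark explains why the general step is covered by $\delta_{K+1}<1/(\sqrt{K}+2)$, but not where the order $K+1$ comes from. Finally, passing from the raw inequality \eqref{eq:sufficientommp4} to the clean threshold needs the auxiliary estimate of Appendix~\ref{app:cond}; that step is routine, but your sketch does not account for it.
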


Note that when $L=1$, MOLS reduces to the conventional OLS algorithm. Theorem
\ref{thm:atleast10} suggests that under $\delta_{K+1} <
\frac{1}{\sqrt{K} +2}$, OLS can recover any $K$-sparse signal in exact $K$ iterations.
Similar results have also been established for the OMP algorithm.  In~\cite{mo2012remarks,wang2012Recovery}, it has been shown that $\delta_{K+1} < \frac{1}{\sqrt{K} + 1}$ is sufficient for OMP to exactly recover $K$-sparse signals. The condition is recently improved to $\delta_{K + 1} < \frac{\sqrt{4K + 1} - 1}{2K}$~\cite{chang2014improved}, by utilizing techniques developed in~\cite{chang2013achievable}. It is worth mentioning that there exist examples of measurement matrices satisfying $\delta_{K+1} = \frac{1}{\sqrt{K}}$ and $K$-sparse signals, for which OMP makes wrong selection at the first iteration and thus fails to recover the signals in $K$ iterations~\cite{mo2012remarks,wang2012Recovery}.
Since OLS coincides with OMP for the first iteration, those examples naturally apply to OLS, which therefore implies that $\delta_{K + 1} < \frac{1}{\sqrt{K}}$ is a necessary condition for the OLS algorithm. We would like to mention that the claim of $\delta_{K + 1} < \frac{1}{\sqrt{K}}$ being necessary for exact recovery with OLS has also been proved in~\cite[Lemma 1]{herzet2012exact}. Considering the fact that $\frac{1}{\sqrt{K} + 2}$ converges to $\frac{1}{\sqrt{K}}$ as $K$ goes large, the proposed condition $\delta_{K+1} < \frac{1}{\sqrt{K} + 2}$ is nearly sharp.

The proof of Theorem~\ref{thm:atleast10} follows along a similar
line as the proof in~\cite[Section III]{wang2012Generalized}, in which conditions
for exact recovery with gOMP were proved using mathematical induction, but with two key
distinctions. The first distinction lies in the way we lower bound
the correlations between correct columns and the current residual.
As will be seen in Proposition~\ref{lem:upperbound}, by employing an
improved analysis, we obtain a tighter bound for MOLS than the
corresponding result for gOMP~\cite[Lemma 3.7]{wang2012Generalized},
which consequently leads to better recovery conditions for the MOLS
algorithm. The second distinction is in how we obtain recovery conditions for the case of $L = 1$. While in~\cite[Theorem 3.11]{wang2012Generalized}
the condition for the first iteration of OMP directly
applies to the general iteration and hence becomes an overall condition for OMP (i.e., the condition for success of
the first iteration also guarantees the success of succeeding
iterations, see~\cite[Lemma 3.10]{wang2012Generalized}), such is not the case for the OLS algorithm due to the
difference in the identification rule.
This means that to obtain the overall condition for OLS, we need to consider the first iteration and the general iteration individually, which makes the underlying analysis for OLS more complex and also leads to a more restrictive condition than the OMP algorithm.

\subsection{Proof of Theorem~\ref{thm:atleast10}} \label{sec:proofoftheorem1}

The proof works by mathematical induction. We first establish a
condition that guarantees success of MOLS at the first iteration.
Then we assume that MOLS has been successful in previous $k$
iterations ($1 \leq k < K$). Under this assumption, we derive a
condition under which MOLS also makes a success at the $(k + 1)$-th
iteration. Finally, we combine these two conditions to obtain an
overall condition.

\subsubsection{Success of the first iteration}
From {\eqref{eq:golsrule11111}}, MOLS selects at the
  first iteration the index set
  \begin{eqnarray}
    \mathcal{T}^{1}
     &\hspace{-3mm} = & \hspace{-3mm} \arg   \max_{\mathcal{S} : | \mathcal{S} | =L}   \sum_{i \in \mathcal{S}}
    \frac{| \langle \phi_{i}, \mathbf{r}^{k} \rangle |}{\|
    \mathbf{P}^{\bot}_{\mathcal{T}^{k}} \phi_{i} \|_{2}} \nonumber\\
    &\hspace{-3mm} \overset{(a)}{=} & \hspace{-3mm} \arg   \max_{\mathcal{S} : | \mathcal{S} | =L}   \sum_{i \in \mathcal{S}} |
    \langle \phi_{i}, \mathbf{r}^{k} \rangle | \nonumber\\
     &\hspace{-3mm} = & \hspace{-3mm} \arg   \max_{\mathcal{S} : | \mathcal{S} | =L}    \|
    \mathbf{\Phi}'_{\mathcal{S}} \mathbf{y} \|_{1} = \arg   \max_{\mathcal{S} : | \mathcal{S} | =L}    \|
    \mathbf{\Phi}'_{\mathcal{S}} \mathbf{y} \|_{2}.~~~ \label{eq:13ma}
  \end{eqnarray}
  where (a) is because $k = 0$ so that $\| \mathbf{P}^{\bot}_{\mathcal{T}^{k}} \phi_{i} \|_{2} = \|
  \phi_{i} \|_{2} =1$.
  By noting that $L \leq K$,\footnote{\label{note1}Note that the average of $L$ largest elements in $\{ |\langle \phi_i', \mathbf{y} \rangle |\}_{i \in \Omega}$ must be no less than that of any other subset of $\{ |\langle \phi_i', \mathbf{y} \rangle |\}_{i \in \Omega}$ whose cardinality is no less than $L$. Hence, $\sqrt{\frac{1}{L} \sum_{i \in \mathcal{T}^1} |\langle \phi_i', \mathbf{y} \rangle |^2} \geq \sqrt{\frac{1}{K} \sum_{i \in \mathcal{T}} |\langle \phi_i', \mathbf{y} \rangle |^2}$.}
  \begin{eqnarray}
    \left\| \mathbf{\Phi}_{\mathcal{T}^{1}}' \mathbf{y} \right\|_{2} & = &  \max_{\mathcal{S} : | \mathcal{S}
    | =L} \| \mathbf{\Phi}'_{\mathcal{S}} \mathbf{y} \|_{2} \nonumber \\
    & \geq &
    \sqrt{\frac{L}{K}} \left\| \mathbf{\Phi}_{\mathcal{T}}' \mathbf{y} \right\|_{2}
    = \sqrt{\frac{L}{K}}  \left\| \mathbf{\Phi}_{\mathcal{T}}'  \mathbf{\Phi}_{\mathcal{T}}
    \mathbf{x}_{\mathcal{T}} \right\|_{2} \nonumber\\
    & \geq & \sqrt{\frac{L}{K}}  (1- \delta_{K} ) \| \mathbf{x} \|_{2}.
    \label{eq:try}
  \end{eqnarray}
  On the other hand, if no correct index is chosen at the first iteration
  (i.e., $\mathcal{T}^{1} \cap \mathcal{T}= \emptyset$), then
\begin{equation}
      \left\| \mathbf{\Phi}'_{\mathcal{T}^{1}} \mathbf{y} \right\|_{2}
    = \left\| \mathbf{\Phi}'_{\mathcal{T}^{1}}  \mathbf{\Phi}_{\mathcal{T}} \mathbf{x}_{\mathcal{T}} \right\|_{2} \overset{\text{Lemma}~\ref{lem:correlationrip} }{\leq}  \delta_{K+L} \left\| \mathbf{x} \right\|_{2}. \label{eq:36}
  \end{equation}
  This,
  however,  contradicts {\eqref{eq:try}} if $
    \delta_{K+L} < \sqrt{\frac{L}{K}}  (1- \delta_{K} )$
or
  \begin{equation}
    \label{eq:gggal} \delta_{K+L} < \frac{\sqrt{L}}{\sqrt{K} + \sqrt{L}}.
  \end{equation}
  Therefore, under (\ref{eq:gggal}), at least one correct index is chosen
  at the first iteration of MOLS.

\subsubsection{Success of the general iteration}

  Assume that MOLS has selected at least one correct index at each of the previous $k$ ($1 \leq k < K$) iterations and denote by $\ell$ the number of correct indices
  in $\mathcal{T}^{k}$. Then $
    \ell = |\mathcal{T} \cap \mathcal{T}^{k} | \geq k.$
  Also, assume that $\mathcal{T}^{k}$ does not contain all correct indices, that is, $\ell < K$. Under these assumptions, we will establish a condition that ensures MOLS to select at least one correct index at the $(k+1)$-th iteration.

  For analytical convenience, we introduce the following two quantities: i) $u_{1}$ denotes the largest value of $\frac{| \langle \phi_{i},
  \mathbf{r}^{k} \rangle |}{\| \mathbf{P}^{\bot}_{\mathcal{T}^{k}} \phi_{i} \|_{2}}$, $i
  \in \mathcal{T} \backslash \mathcal{T}^k$ and ii)
  $v_{L}$ denotes the $L$-th largest value of $\frac{| \langle
  \phi_{i}, \mathbf{r}^{k} \rangle |}{\| \mathbf{P}^{\bot}_{\mathcal{T}^{k}} \phi_{i}
  \|_{2}}$, $i \in \Omega \setminus (\mathcal{T} \cup \mathcal{T}^{k} )$. It is clear that if
\begin{equation}
  u_{1} > v_{L}, \label{eq:uv1L}
  \end{equation}
  $u_{1}$ belongs to the set of $L$ largest elements among all elements in $\big\{ \frac{| \langle
  \phi_{i}, \mathbf{r}^{k} \rangle |}{\| \mathbf{P}^{\bot}_{\mathcal{T}^{k}} \phi_{i}
  \|_{2}}\big\}_{i \in \Omega \setminus \mathcal{T}^{k}}$. Then  it follows from {\eqref{eq:golsrule11111}}
  that at least one correct
  index (i.e., the one corresponding to $u_{1}$) will be selected at the
  $(k+1)$-th iteration of MOLS. The following proposition gives a lower
  bound for $u_{1}$ and an upper bound for $v_{L}$.

  \begin{proposition}
    \label{lem:upperbound} We have
    \begin{eqnarray}
      u_{1} & \hspace{-3mm}\geq & \hspace{-3mm}\frac{1- \delta_{K + Lk -
      \ell}}{\sqrt{K - \ell}}  \left\| \mathbf{x}_{\mathcal{T} \backslash \mathcal{T}^k}  \right\|_{2},  \label{eq:small} \\
      v_{L} & \hspace{-3mm}\leq & \hspace{-3mm}\left(1 + \frac{\delta_{Lk + 1}^2}{1- \delta_{Lk} -
      \delta_{Lk+1}^{2}} \right)^{1/2}  \nonumber \\
      & & \times  \left( \delta_{L+K- \ell} +
      \frac{\delta_{L+Lk} \delta_{Lk+K- \ell}}{1- \delta_{Lk}} \right)
      \frac{\left\| \mathbf{x}_{\mathcal{T} \backslash \mathcal{T}^k}
      \right\|_{2}}{\sqrt{L}}. ~~
      \label{eq:large}
    \end{eqnarray}
  \end{proposition}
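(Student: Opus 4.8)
The plan is to bound the numerator $|\langle \phi_i, \mathbf{r}^k\rangle|$ and the denominator $\|\mathbf{P}^\bot_{\mathcal{T}^k}\phi_i\|_2$ separately and then recombine them. The common starting point is the observation that, since the columns indexed by $\mathcal{T}\cap\mathcal{T}^k\subseteq\mathcal{T}^k$ lie in $\text{span}(\mathbf{\Phi}_{\mathcal{T}^k})$ and are annihilated by $\mathbf{P}^\bot_{\mathcal{T}^k}$, the residual admits the representation $\mathbf{r}^k = \mathbf{P}^\bot_{\mathcal{T}^k}\mathbf{y} = \mathbf{P}^\bot_{\mathcal{T}^k}\mathbf{\Phi}_{\mathcal{T}\setminus\mathcal{T}^k}\mathbf{x}_{\mathcal{T}\setminus\mathcal{T}^k}$. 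I would also record the cardinalities $|\mathcal{T}^k| = Lk$, $|\mathcal{T}\setminus\mathcal{T}^k| = K-\ell$, and $|\mathcal{T}\cup\mathcal{T}^k| = K+Lk-\ell$, since these fix the orders of the isometry constants that appear.

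For the lower bound \eqref{eq:small} on $u_1$, I would first discard the denominator using $\|\mathbf{P}^\bot_{\mathcal{T}^k}\phi_i\|_2\leq\|\phi_i\|_2 = 1$, giving $u_1\geq\max_{i\in\mathcal{T}\setminus\mathcal{T}^k}|\langle\phi_i,\mathbf{r}^k\rangle|$. Because the maximum of $K-\ell$ nonnegative numbers is at least their root-mean-square, this yields $u_1\geq\frac{1}{\sqrt{K-\ell}}\|\mathbf{\Phi}'_{\mathcal{T}\setminus\mathcal{T}^k}\mathbf{r}^k\|_2$. Substituting the residual expression gives $\mathbf{\Phi}'_{\mathcal{T}\setminus\mathcal{T}^k}\mathbf{r}^k = \mathbf{\Phi}'_{\mathcal{T}\setminus\mathcal{T}^k}\mathbf{P}^\bot_{\mathcal{T}^k}\mathbf{\Phi}_{\mathcal{T}\setminus\mathcal{T}^k}\mathbf{x}_{\mathcal{T}\setminus\mathcal{T}^k}$, whose norm I would bound below by the smallest eigenvalue of the symmetric positive semidefinite matrix $\mathbf{\Phi}'_{\mathcal{T}\setminus\mathcal{T}^k}\mathbf{P}^\bot_{\mathcal{T}^k}\mathbf{\Phi}_{\mathcal{T}\setminus\mathcal{T}^k}$. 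Lemma~\ref{lem:rip6} bounds this eigenvalue below by $\lambda_{\min}(\mathbf{\Phi}'_{\mathcal{T}\cup\mathcal{T}^k}\mathbf{\Phi}_{\mathcal{T}\cup\mathcal{T}^k})\geq 1-\delta_{K+Lk-\ell}$ (using Lemma~\ref{lem:rips}), which delivers \eqref{eq:small}.

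For the upper bound \eqref{eq:large} on $v_L$, I would treat the denominator and numerator in turn. For any incorrect index $i\in\Omega\setminus(\mathcal{T}\cup\mathcal{T}^k)$, writing $\|\mathbf{P}^\bot_{\mathcal{T}^k}\phi_i\|_2^2 = 1-\|\mathbf{P}_{\mathcal{T}^k}\phi_i\|_2^2$ and bounding $\|\mathbf{P}_{\mathcal{T}^k}\phi_i\|_2^2 \leq \frac{1}{1-\delta_{Lk}}\|\mathbf{\Phi}'_{\mathcal{T}^k}\phi_i\|_2^2 \leq \frac{\delta_{Lk+1}^2}{1-\delta_{Lk}}$ (via Lemmas~\ref{lem:rips} and~\ref{lem:correlationrip}) produces the uniform estimate $\frac{1}{\|\mathbf{P}^\bot_{\mathcal{T}^k}\phi_i\|_2}\leq\bigl(\frac{1-\delta_{Lk}}{1-\delta_{Lk}-\delta_{Lk+1}^2}\bigr)^{1/2}$; an algebraic rearrangement recasts this as the leading factor $\bigl(1+\frac{\delta_{Lk+1}^2}{1-\delta_{Lk}-\delta_{Lk+1}^2}\bigr)^{1/2}$ in \eqref{eq:large}. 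For the numerators, let $F$ be the set of $L$ incorrect indices with the largest $|\langle\phi_i,\mathbf{r}^k\rangle|$; then the $L$-th largest numerator is at most $\frac{1}{\sqrt{L}}\|\mathbf{\Phi}'_F\mathbf{r}^k\|_2$. I would split $\mathbf{P}^\bot_{\mathcal{T}^k} = \mathbf{I}-\mathbf{P}_{\mathcal{T}^k}$ in $\mathbf{\Phi}'_F\mathbf{r}^k = \mathbf{\Phi}'_F\mathbf{P}^\bot_{\mathcal{T}^k}\mathbf{\Phi}_{\mathcal{T}\setminus\mathcal{T}^k}\mathbf{x}_{\mathcal{T}\setminus\mathcal{T}^k}$: the identity part gives $\delta_{L+K-\ell}\|\mathbf{x}_{\mathcal{T}\setminus\mathcal{T}^k}\|_2$ by Lemma~\ref{lem:correlationrip}, while the $\mathbf{P}_{\mathcal{T}^k}$ part is controlled by chaining Lemma~\ref{lem:correlationrip} (for $\mathbf{\Phi}'_F\mathbf{\Phi}_{\mathcal{T}^k}$ and for $\mathbf{\Phi}'_{\mathcal{T}^k}\mathbf{\Phi}_{\mathcal{T}\setminus\mathcal{T}^k}$) with Lemma~\ref{lem:rips} (for $(\mathbf{\Phi}'_{\mathcal{T}^k}\mathbf{\Phi}_{\mathcal{T}^k})^{-1}$), yielding $\frac{\delta_{L+Lk}\delta_{Lk+K-\ell}}{1-\delta_{Lk}}\|\mathbf{x}_{\mathcal{T}\setminus\mathcal{T}^k}\|_2$. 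Adding the two parts and multiplying by the denominator factor and $1/\sqrt{L}$ assembles \eqref{eq:large}.

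The main obstacle is the $v_L$ bound: it requires correctly tracking the orders of the isometry constants through the telescoping chain of lemma applications for the $\mathbf{P}_{\mathcal{T}^k}$ term, and recognizing that the denominator must be controlled \emph{uniformly} over all incorrect indices before the root-mean-square reduction, so that the single factor $\bigl(1+\frac{\delta_{Lk+1}^2}{1-\delta_{Lk}-\delta_{Lk+1}^2}\bigr)^{1/2}$ can be pulled outside the sum. This projection-aware treatment of the denominator is exactly what sharpens the MOLS estimate relative to the corresponding gOMP bound.
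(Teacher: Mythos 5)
Your proposal is correct and follows essentially the same route as the paper's Appendix~C proof: max-versus-RMS plus $\|\mathbf{P}^{\bot}_{\mathcal{T}^{k}}\phi_{i}\|_{2}\leq 1$ and Lemma~\ref{lem:rip6} for \eqref{eq:small}, and for \eqref{eq:large} the uniform denominator bound via $\|\mathbf{P}_{\mathcal{T}^{k}}\phi_{i}\|_{2}^{2}\leq\delta_{Lk+1}^{2}/(1-\delta_{Lk})$, the split $\mathbf{P}^{\bot}_{\mathcal{T}^{k}}=\mathbf{I}-\mathbf{P}_{\mathcal{T}^{k}}$, and the chained applications of Lemmas~\ref{lem:rips} and~\ref{lem:correlationrip}. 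The only (harmless) deviations are that you bound $\|\mathbf{A}\mathbf{z}\|_{2}\geq\lambda_{\min}(\mathbf{A})\|\mathbf{z}\|_{2}$ directly for the symmetric PSD matrix rather than going through the paper's Cauchy--Schwarz/idempotency step, and you pick the top-$L$ set by numerator rather than by ratio, which changes nothing since the correlation bounds hold for any $L$-subset disjoint from $\mathcal{T}\cup\mathcal{T}^{k}$.
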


  \begin{proof}
    See Appendix~\ref{app:upperbound}.
  \end{proof}
By noting that $1 \leq k \leq \ell <K$ and $1\leq L \leq K$, we can use
  monotonicity of isometry constant to obtain
  \begin{eqnarray}
    K-\ell<LK & \Rightarrow & \delta_{K-\ell} \leq \delta_{LK}, \nonumber\\
    Lk+K-\ell\leq LK & \Rightarrow & \delta_{Lk+K-\ell} \leq \delta_{LK}, \nonumber\\
    Lk<LK & \Rightarrow & \delta_{Lk} \leq \delta_{LK}, \label{eq:monoto} \\
    Lk+1 \leq LK & \Rightarrow & \delta_{Lk+1} \leq \delta_{LK}, \nonumber\\
    L+Lk \leq LK & \Rightarrow & \delta_{L+Lk} \leq \delta_{LK}.
     \nonumber
  \end{eqnarray}
  Using {\eqref{eq:small}} and (\ref{eq:monoto}), we have
  \begin{equation}
  u_{1} \geq \frac{(1- \delta_{LK}) \| \mathbf{x}_{\mathcal{T} \backslash \mathcal{T}^k} \|_{2}}{\sqrt{K - \ell}}. \label{eq:37}
  \end{equation}
  Also, using {\eqref{eq:large}} and (\ref{eq:monoto}), we have
  \begin{eqnarray}
    \label{eq:36} v_{L}
     & \hspace{-2mm}\leq & \hspace{-2mm} \left(\hspace{-.5mm} 1 \hspace{-.5mm} + \hspace{-.5mm} \frac{\delta_{LK}^2}{1- \delta_{LK} - \delta_{LK}^{2}} \right)^{\hspace{-1mm}1/2}\hspace{-1mm} \left(\hspace{-.5mm} \delta_{LK} \hspace{-.5mm} + \hspace{-.5mm} \frac{\delta_{LK}^{2}}{1- \delta_{LK}} \right) \hspace{-1mm}  \frac{\left\| \mathbf{x}_{\mathcal{T} \backslash \mathcal{T}^k}  \right\|_{2}}{\sqrt{L}} \nonumber\\
    &\hspace{-2mm} = & \hspace{-2mm}\frac{\delta_{LK} \left\| \mathbf{x}_{\mathcal{T} \backslash \mathcal{T}^k}  \right\|_{2} }{\sqrt{L}( 1- \delta_{LK} - \delta_{LK}^{2}
    )^{1/2} (1- \delta_{LK} )^{1/2}}.
  \end{eqnarray}
From {\eqref{eq:37}} and {\eqref{eq:36}}, $u_{1} > v_{L}$ holds true whenever
  \begin{eqnarray}
    \label{eq:sufficientommp4}
    \frac{1 - \delta_{LK}}{\sqrt{K - \ell}}  > \frac{\delta_{LK}}{\sqrt{L}( 1- \delta_{LK} - \delta_{LK}^{2}
    )^{1/2} (1- \delta_{LK} )^{1/2}}.
  \end{eqnarray}
Equivalently (see Appendix~\ref{app:cond}),
  \begin{equation}
    \label{eq:k+1} \delta_{LK} < \frac{\sqrt{L}}{\sqrt{K} + 2 \sqrt{L}}.
  \end{equation}
  Therefore, under {\eqref{eq:k+1}}, MOLS selects at least one correct index
  at the $(k+1)$-th iteration.

\subsubsection{Overall condition}
So far, we have obtained condition~{\eqref{eq:gggal}} for the
success of MOLS at the first iteration and condition~{\eqref{eq:k+1}} for the success of the general iteration. We
now combine them to get an overall condition that ensures
selection of all support indices within $K$ iterations of MOLS.
Clearly the overall condition is governed the more restrictive one between {\eqref{eq:gggal}} and {\eqref{eq:k+1}}. We consider the following two cases:

  \begin{itemize}

    \item $L \geq 2$: Since $\delta_{LK} \geq \delta_{K+L}$ and also $
      \frac{\sqrt{L}}{\sqrt{K} + \sqrt{L}} > \frac{\sqrt{L}}{\sqrt{K} +
      2 \sqrt{L}},$~{\eqref{eq:k+1}} is more restrictive than {\eqref{eq:gggal}} and hence becomes the  overall condition of MOLS for this case.

   \item $L=1$: In this case, MOLS reduces to the conventional OLS algorithm and conditions \eqref{eq:gggal} and \eqref{eq:k+1} become
   \begin{equation} \label{eq:gggalx}
   \delta_{K+1} < \frac{1}{\sqrt{K} + 1}~\text{and}~
   \delta_{K} < \frac{1}{\sqrt{K} + 2},
   \end{equation}
respectively. One can easily check that both conditions in~{\eqref{eq:gggalx}} hold true if
       \begin{equation}
      \label{eq:ols1} \delta_{K+1} < \frac{1}{\sqrt{K} + 2}.
    \end{equation}
Therefore, under \eqref{eq:ols1}, OLS exactly recovers the support
of $K$-sparse signals in $K$ iterations.

  \end{itemize}

We have obtained the condition ensuring selection of all support indices within $K$ iterations of MOLS.
%
    When all support indices are selected, we have $\mathcal{T} \subseteq \mathcal{T}^{l}$
  where $l$ $(\leq K)$ denotes the number of actually performed iterations. Since $L \leq \min \{K, \frac{m}{K}\}$ by Table~\ref{tab:mols}, the number of totally selected indices of MOLS, (i.e., $lK$) does not exceed $m$, and hence the sparse signal can be recovered with a least squares (LS)
  projection:
  \begin{eqnarray}
    \mathbf{x}^{l}_{\mathcal{T}^{l}} & = & \arg   \min_{\mathbf{u}} \| \mathbf{y} - \mathbf{\Phi}_{\mathcal{T}^{l}} \mathbf{u} \|_{2} \nonumber\\
    & = & \mathbf{\Phi}^{\dag}_{\mathcal{T}^{l}} \mathbf{y}
    = \mathbf{\Phi}^{\dag}_{\mathcal{T}^{l}}  \mathbf{\Phi}_{\mathcal{T}^{l}}
    \mathbf{x}_{\mathcal{T}^{l}} = \mathbf{x}_{\mathcal{T}^{l}}. \label{eq:300}
  \end{eqnarray}
  As a result, the residual vector becomes zero ($\mathbf{r}^{l} = \mathbf{y} - \mathbf{\Phi} \mathbf{x}^{l}  = \mathbf{0}$), and hence the algorithm terminates and returns exact recovery of the sparse signal ($\hat{\mathbf{x}} = \mathbf{x}$).

\subsection{Convergence Rate} \label{sec:conv}
We can gain good insights by studying the rate of convergence of MOLS. In the following theorem, we show that the residual power of MOLS decays exponentially with the number of iterations.

\begin{theorem} \label{thm:8}
For any $0 \leq k < K$, the residual of MOLS satisfies
\begin{equation}
\|\mathbf{r}^{k + 1}\|_2^2 \leq (\alpha(k, L))^{k + 1} \|\mathbf{y}\|_2^2,
\end{equation}
where $$\alpha(k, L) := 1 - \frac{L (1- \delta_{Lk} - \delta_{Lk+1}^{2}) (1 - \delta_{K + Lk})^2}{K (1 + \delta_{L}) (1- \delta_{Lk}) (1 + \delta_{K + Lk})}.$$
\end{theorem}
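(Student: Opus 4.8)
The plan is to prove a per-iteration geometric decay $\|\mathbf{r}^{k+1}\|_2^2 \le \alpha(k,L)\,\|\mathbf{r}^k\|_2^2$ and then compound it $k+1$ times. First I would express the one-step residual reduction as a projection: since $\mathcal{T}^k\subseteq\mathcal{T}^{k+1}$ and $\mathbf{r}^k=\mathbf{P}^\bot_{\mathcal{T}^k}\mathbf{y}$ already lies in the range of $\mathbf{P}^\bot_{\mathcal{T}^k}$, a Pythagorean identity gives $\|\mathbf{r}^k\|_2^2-\|\mathbf{r}^{k+1}\|_2^2=\|\mathbf{P}_{\mathbf{A}}\mathbf{r}^k\|_2^2$, where $\mathbf{A}=\mathbf{P}^\bot_{\mathcal{T}^k}\mathbf{\Phi}_{\mathcal{S}^{k+1}}$ collects the newly added directions. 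Writing this projection through its normal equations and using $\mathbf{A}'\mathbf{r}^k=\mathbf{\Phi}'_{\mathcal{S}^{k+1}}\mathbf{r}^k$ (because $\mathbf{P}^\bot_{\mathcal{T}^k}$ is symmetric and $\mathbf{P}^\bot_{\mathcal{T}^k}\mathbf{r}^k=\mathbf{r}^k$), I would lower-bound the reduction by $\|\mathbf{\Phi}'_{\mathcal{S}^{k+1}}\mathbf{r}^k\|_2^2/\lambda_{\max}(\mathbf{A}'\mathbf{A})$, and then use $\mathbf{P}^\bot_{\mathcal{T}^k}\preceq\mathbf{I}$ together with the RIP to get $\lambda_{\max}(\mathbf{A}'\mathbf{A})\le 1+\delta_L$. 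This accounts for the factor $1+\delta_L$ in the denominator of $\alpha$.

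Next I would convert the bound into the quantities used by the selection rule. Setting $w_i=\langle\phi_i,\mathbf{r}^k\rangle/\|\mathbf{P}^\bot_{\mathcal{T}^k}\phi_i\|_2$, so that $\mathcal{S}^{k+1}$ is exactly the set of the $L$ largest $|w_i|$ by Proposition~\ref{prop:p1}, I would invoke the estimate $\|\mathbf{P}^\bot_{\mathcal{T}^k}\phi_i\|_2^2\ge (1-\delta_{Lk}-\delta_{Lk+1}^2)/(1-\delta_{Lk})$ for $i\notin\mathcal{T}^k$ (which follows from $\|\mathbf{\Phi}'_{\mathcal{T}^k}\phi_i\|_2\le\delta_{Lk+1}$ and the lower eigenvalue bound of $(\mathbf{\Phi}'_{\mathcal{T}^k}\mathbf{\Phi}_{\mathcal{T}^k})^{-1}$, as in the proof of Proposition~\ref{lem:upperbound}). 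This supplies the factors $(1-\delta_{Lk}-\delta_{Lk+1}^2)$ and $(1-\delta_{Lk})$ and yields $\|\mathbf{\Phi}'_{\mathcal{S}^{k+1}}\mathbf{r}^k\|_2^2\ge\frac{1-\delta_{Lk}-\delta_{Lk+1}^2}{1-\delta_{Lk}}\sum_{i\in\mathcal{S}^{k+1}}|w_i|^2$.

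The step I expect to be the crux is producing the $L/K$ factor and connecting to the unknown support. Let $\ell=|\mathcal{T}\cap\mathcal{T}^k|$. Since $\mathcal{S}^{k+1}$ holds the $L$ largest $|w_i|^2$ over $\Omega\setminus\mathcal{T}^k$, a counting/averaging argument identical to the one justifying~\eqref{eq:try}—split into the cases $K-\ell\ge L$ (compare averages over the $L$ largest and over $\mathcal{T}\setminus\mathcal{T}^k$) and $K-\ell<L$ (the top $L$ already dominate the sum over $\mathcal{T}\setminus\mathcal{T}^k$)—gives $\sum_{i\in\mathcal{S}^{k+1}}|w_i|^2\ge\frac{L}{K}\sum_{i\in\mathcal{T}\setminus\mathcal{T}^k}|w_i|^2$. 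I would then drop the sub-unit normalizers and use $\mathbf{r}^k=\mathbf{P}^\bot_{\mathcal{T}^k}\mathbf{\Phi}_{\mathcal{T}\setminus\mathcal{T}^k}\mathbf{x}_{\mathcal{T}\setminus\mathcal{T}^k}$ with Lemma~\ref{lem:rip6} to obtain $\|\mathbf{\Phi}'_{\mathcal{T}\setminus\mathcal{T}^k}\mathbf{r}^k\|_2\ge(1-\delta_{K+Lk})\|\mathbf{x}_{\mathcal{T}\setminus\mathcal{T}^k}\|_2$, which explains the factor $(1-\delta_{K+Lk})^2$. The same decomposition with the upper eigenvalue bound of Lemma~\ref{lem:rip6} gives $\|\mathbf{x}_{\mathcal{T}\setminus\mathcal{T}^k}\|_2^2\ge\|\mathbf{r}^k\|_2^2/(1+\delta_{K+Lk})$, supplying the last denominator factor. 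Chaining these inequalities produces $\|\mathbf{r}^{k+1}\|_2^2\le\alpha(k,L)\,\|\mathbf{r}^k\|_2^2$.

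To finish, I would iterate the one-step bound to get $\|\mathbf{r}^{k+1}\|_2^2\le\big(\prod_{j=0}^{k}\alpha(j,L)\big)\|\mathbf{y}\|_2^2$ and then show that $\alpha(\cdot,L)$ is nondecreasing in its first argument: by monotonicity of the isometry constants, both $1-\delta_{Lj+1}^2/(1-\delta_{Lj})$ and $(1-\delta_{K+Lj})^2/(1+\delta_{K+Lj})$ decrease in $j$, so $\alpha(j,L)\le\alpha(k,L)$ for $j\le k$; since each ratio $\|\mathbf{r}^{j+1}\|_2^2/\|\mathbf{r}^j\|_2^2$ forces $\alpha(j,L)\in[0,1)$ under the stated RIP, the product is at most $(\alpha(k,L))^{k+1}$. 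The main obstacle is genuinely the third paragraph—making the averaging argument valid regardless of whether at least $L$ correct indices remain, and keeping the chain of eigenvalue bounds tight enough to reproduce exactly the constants in $\alpha(k,L)$; the monotonicity-in-$k$ step, though elementary, is also essential and easy to overlook.
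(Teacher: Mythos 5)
Your proposal is correct and follows essentially the same route as the paper's proof in Appendix~\ref{app:8}: a one-step bound $\|\mathbf{r}^{k+1}\|_2^2\le\alpha(k,L)\|\mathbf{r}^k\|_2^2$ obtained from the projection onto the newly selected columns (giving $1+\delta_L$), the selection rule plus the normalizer bound \eqref{eq:C4} (giving $(1-\delta_{Lk}-\delta_{Lk+1}^2)/(1-\delta_{Lk})$), the $L/K$ averaging, and Lemma~\ref{lem:rip6} with the RIP (giving $(1-\delta_{K+Lk})^2/(1+\delta_{K+Lk})$), followed by compounding via monotonicity of $\alpha(\cdot,L)$. Your minor variants --- the exact Pythagorean identity with $\mathbf{P}^\bot_{\mathcal{T}^k}\mathbf{\Phi}_{\mathcal{S}^{k+1}}$ in place of the paper's $\|\mathbf{P}_{\mathcal{T}^{k+1}}\mathbf{r}^k\|_2\ge\|\mathbf{P}_{\mathcal{S}^{k+1}}\mathbf{r}^k\|_2$, and working with $\mathbf{x}_{\mathcal{T}\setminus\mathcal{T}^k}$ rather than $(\mathbf{x}-\mathbf{x}^k)_{\mathcal{T}\cup\mathcal{T}^k}$ --- are cosmetic and reproduce the same constants.
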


The proof is given in Appendix~\ref{app:8}. Using Theorem~\ref{thm:8}, one can roughly compare the rate of convergence of OLS and MOLS. When $\mathbf{\Phi}$ has small isometry constants, the upper bound of the convergence rate of MOLS is better than that of OLS in a factor of $
 \frac{\alpha(k, 1)}{\alpha(K,L)} \big(\approx \frac{K - 1}{K - L}\big).$ We will see later in the experimental section that MOLS has a faster convergence than the OLS algorithm.

%
%
%
%
%
%

\section{Sparse Recovery with MOLS under Noise} \label{sec:V}

\subsection{Main Results}
In this section, we consider the general scenario where the measurements are contaminated with noise as
\begin{equation}
\mathbf{y} = \mathbf{\Phi x} + \mathbf{v}.  \label{eq:noizemodel}
\end{equation}
Note that in this scenario exact sparse recovery of $\mathbf{x}$ is not possible, Thus we employ the $\ell_2$-norm distortion (i.e., $\|\mathbf{x} - \hat{\mathbf{x}}\|_2$) as a performance measure and will derive conditions ensuring an upper bound for the recovery distortion.

Recall from Table~\ref{tab:mols} that MOLS runs until neither ``$\|\mathbf{r}^k\|_2 \geq \epsilon$ and $k < K$'' nor ``$Lk < K$'' is true.\footnote{The constraint $Lk \geq K$ actually ensures MOLS to select at least $K$ candidates before stopping. These candidate are then narrowed down to exact $K$ ones as the final output of the algorithm.} Since $L \geq 1$, one can see that the algorithm terminates when $\|\mathbf{r}^k\|_2 < \epsilon$ or $k = K$. In the following we will analyze the recovery distortion of MOLS based on these two termination cases. We first consider the case that MOLS is finished by the rule $\|\mathbf{r}^k\|_2 < \epsilon$. The following theorem provides an upper bound on $\|\mathbf{x} - \hat{\mathbf{x}}\|_2$ for this case.

\vspace{1mm}

\begin{theorem} \label{thm:noi1}
Consider the measurement model in~\eqref{eq:noizemodel}. If MOLS satisfies $\|\mathbf{r}^l\|_2 \leq \epsilon$ after $l$ ($< K)$ iterations and $\mathbf{\Phi}$ satisfies the RIP of orders $Ll + K$ and $2K$, then the output $\hat{\mathbf{x}}$ satisfies
\begin{equation}
\|\mathbf{x} - \hat{\mathbf{x}}\|_2 \leq \frac{2 \epsilon \sqrt{1 - \delta_{2K}} \hspace{-.5mm} + \hspace{-.5mm} 2 (\sqrt{1 - \delta_{2K}} \hspace{-.5mm} + \hspace{-.5mm} \sqrt{1 - \delta_{Ll + K}}) \|\mathbf{v}\|_2 }{\sqrt{(1 - \delta_{Ll + K})(1 + \delta_{2K})}}. \label{eq:lx}
\end{equation}
\end{theorem}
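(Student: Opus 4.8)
The plan is to pass from the signal-domain error to a measurement-domain residual and then bound that residual by $\epsilon$, $\|\mathbf{v}\|_2$, and the energy discarded when the final $K$-subset $\hat{\mathcal{T}}$ is carved out of $\mathcal{T}^l$. Observe first that $\hat{\mathbf{x}}$ is supported on $\hat{\mathcal{T}}$ with $|\hat{\mathcal{T}}|=K$ and $\mathbf{x}$ on $\mathcal{T}$ with $|\mathcal{T}|=K$, so $\mathbf{x}-\hat{\mathbf{x}}$ lives on at most $2K$ indices and the RIP of order $2K$ (Lemma~\ref{lem:rips}) gives $\|\mathbf{x}-\hat{\mathbf{x}}\|_2 \le \|\mathbf{\Phi}(\mathbf{x}-\hat{\mathbf{x}})\|_2/\sqrt{1-\delta_{2K}}$. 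Since $\hat{\mathbf{x}}_{\hat{\mathcal{T}}}=\mathbf{\Phi}_{\hat{\mathcal{T}}}^{\dag}\mathbf{y}$ means $\mathbf{\Phi}\hat{\mathbf{x}}=\mathbf{P}_{\hat{\mathcal{T}}}\mathbf{y}$, and $\mathbf{\Phi x}=\mathbf{y}-\mathbf{v}$, I would rewrite $\mathbf{\Phi}(\mathbf{x}-\hat{\mathbf{x}})=\mathbf{P}^{\bot}_{\hat{\mathcal{T}}}\mathbf{y}-\mathbf{v}$, so that $\|\mathbf{\Phi}(\mathbf{x}-\hat{\mathbf{x}})\|_2 \le \|\mathbf{P}^{\bot}_{\hat{\mathcal{T}}}\mathbf{y}\|_2+\|\mathbf{v}\|_2$ and everything hinges on controlling $\|\mathbf{P}^{\bot}_{\hat{\mathcal{T}}}\mathbf{y}\|_2$.

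To bound $\|\mathbf{P}^{\bot}_{\hat{\mathcal{T}}}\mathbf{y}\|_2$ I exploit that the orthogonal projection is the best $\hat{\mathcal{T}}$-fit, so $\|\mathbf{P}^{\bot}_{\hat{\mathcal{T}}}\mathbf{y}\|_2\le\|\mathbf{y}-\mathbf{\Phi}_{\hat{\mathcal{T}}}\mathbf{x}^l_{\hat{\mathcal{T}}}\|_2$ for the particular coefficients $\mathbf{x}^l_{\hat{\mathcal{T}}}$. Writing $\mathbf{y}-\mathbf{\Phi}_{\hat{\mathcal{T}}}\mathbf{x}^l_{\hat{\mathcal{T}}}=\mathbf{r}^l+\mathbf{\Phi}_{\mathcal{D}}\mathbf{x}^l_{\mathcal{D}}$ with $\mathcal{D}:=\mathcal{T}^l\setminus\hat{\mathcal{T}}$ and using $\|\mathbf{r}^l\|_2\le\epsilon$, the task reduces to bounding $\|\mathbf{\Phi}_{\mathcal{D}}\mathbf{x}^l_{\mathcal{D}}\|_2\le\sqrt{1+\delta_{|\mathcal{D}|}}\,\|\mathbf{x}^l_{\mathcal{D}}\|_2$. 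The crux is the combinatorial inequality $\|\mathbf{x}^l_{\mathcal{D}}\|_2\le\|\mathbf{x}^l_{\mathcal{T}^l\setminus\mathcal{T}}\|_2$: because $\hat{\mathcal{T}}$ retains the $K$ largest-magnitude entries of $\mathbf{x}^l$ while $\mathcal{T}^l$ contains at least $|\mathcal{D}|=Ll-K$ indices outside $\mathcal{T}$, a counting argument gives $|\mathcal{D}\cap\mathcal{T}|\le|\hat{\mathcal{T}}\setminus\mathcal{T}|$, and since every retained entry dominates every discarded one, the squared energy on $\mathcal{D}\cap\mathcal{T}$ is majorized by that on $\hat{\mathcal{T}}\setminus\mathcal{T}$; adding the common $\mathcal{D}\setminus\mathcal{T}$ part yields the claim.

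It then remains to bound $\|\mathbf{x}^l_{\mathcal{T}^l\setminus\mathcal{T}}\|_2$. Here I note that $\mathbf{x}^l-\mathbf{x}$ is supported on $\mathcal{T}^l\cup\mathcal{T}$ (at most $Ll+K$ indices) and $\mathbf{\Phi}(\mathbf{x}^l-\mathbf{x})=\mathbf{\Phi}_{\mathcal{T}^l}\mathbf{x}^l-\mathbf{\Phi}_{\mathcal{T}}\mathbf{x}_{\mathcal{T}}=(\mathbf{y}-\mathbf{r}^l)-(\mathbf{y}-\mathbf{v})=\mathbf{v}-\mathbf{r}^l$, so the RIP of order $Ll+K$ gives $\|\mathbf{x}^l-\mathbf{x}\|_2\le(\epsilon+\|\mathbf{v}\|_2)/\sqrt{1-\delta_{Ll+K}}$, whence $\|\mathbf{x}^l_{\mathcal{T}^l\setminus\mathcal{T}}\|_2$ obeys the same bound. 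Chaining the four estimates and collapsing the isometry orders via monotonicity (Lemma~\ref{lem:mono}) delivers a bound of the advertised form. The step I expect to fight hardest is this combinatorial majorization together with the constant bookkeeping: one must verify $|\mathcal{D}\cap\mathcal{T}|\le|\hat{\mathcal{T}}\setminus\mathcal{T}|$ carefully, keep every RIP invocation within orders $2K$ and $Ll+K$ (replacing $\delta_{|\mathcal{D}|}$ and the like by these), and track which error contributions are scaled by $\sqrt{1-\delta_{2K}}$, $\sqrt{1-\delta_{Ll+K}}$, and $\sqrt{1+\delta_{2K}}$ so as to land exactly on the displayed coefficients; the remaining manipulations (projection optimality, the triangle inequalities, and the two RIP sandwichings) are routine.
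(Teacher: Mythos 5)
Your route is sound but genuinely different from the paper's. The paper's proof sandwiches the auxiliary quantity $\|(\mathbf{x}^l)_{\hat{\mathcal{T}}}-\mathbf{x}\|_2$: from above by $2\|\mathbf{x}^l-\mathbf{x}\|_2\le 2(\epsilon+\|\mathbf{v}\|_2)/\sqrt{1-\delta_{Ll+K}}$ (best-$K$-term optimality of $(\mathbf{x}^l)_{\hat{\mathcal{T}}}$ together with $\mathbf{\Phi}(\mathbf{x}^l-\mathbf{x})=\mathbf{v}-\mathbf{r}^l$ and the order-$(Ll+K)$ RIP), and from below by a chain using the LS optimality of $\hat{\mathbf{x}}_{\hat{\mathcal{T}}}=\mathbf{\Phi}_{\hat{\mathcal{T}}}^{\dag}\mathbf{y}$ and the order-$2K$ RIP; combining the two gives \eqref{eq:lx}. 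You instead work one-directionally in the measurement domain via $\mathbf{\Phi}(\mathbf{x}-\hat{\mathbf{x}})=\mathbf{P}^{\bot}_{\hat{\mathcal{T}}}\mathbf{y}-\mathbf{v}$ and control $\|\mathbf{P}^{\bot}_{\hat{\mathcal{T}}}\mathbf{y}\|_2$ through the suboptimal coefficient vector $\mathbf{x}^l_{\hat{\mathcal{T}}}$. Every step of your chain is valid, including the counting $|\mathcal{D}\cap\mathcal{T}|\le|\hat{\mathcal{T}}\setminus\mathcal{T}|$ for $\mathcal{D}=\mathcal{T}^l\setminus\hat{\mathcal{T}}$; but the combinatorial majorization is an unnecessarily long road to $\|\mathbf{x}^l_{\mathcal{D}}\|_2=\|\mathbf{x}^l-(\mathbf{x}^l)_{\hat{\mathcal{T}}}\|_2\le\|\mathbf{x}^l-\mathbf{x}\|_2$, which is immediate because $(\mathbf{x}^l)_{\hat{\mathcal{T}}}$ is the best $K$-term approximation of $\mathbf{x}^l$ and $\mathbf{x}$ is $K$-sparse --- that one-liner is precisely what the paper uses.

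The one substantive caveat concerns the constants you defer to ``bookkeeping.'' Your chain delivers
\[
\|\mathbf{x}-\hat{\mathbf{x}}\|_2\le\frac{\bigl(\sqrt{1-\delta_{Ll+K}}+\sqrt{1+\delta_{Ll+K}}\bigr)\,(\epsilon+\|\mathbf{v}\|_2)}{\sqrt{(1-\delta_{2K})(1-\delta_{Ll+K})}},
\]
which is a perfectly good stable-recovery bound (tighter than \eqref{eq:lx} for small isometry constants) but neither equals nor implies \eqref{eq:lx} uniformly: as $\delta_{2K}\to 1$ your bound diverges while the right-hand side of \eqref{eq:lx} stays bounded. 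So this route cannot land exactly on the displayed coefficients. You should not read too much into that mismatch, however: the particular placement of $1+\delta_{2K}$ versus $1-\delta_{2K}$ in \eqref{eq:lx} is inherited from the lower-bound chain \eqref{eq:79eq1}, where the order-$2K$ RIP is invoked with the $1\pm\delta_{2K}$ factors interchanged (a correct reading gives $\|\mathbf{u}\|_2\ge\|\mathbf{\Phi}\mathbf{u}\|_2/\sqrt{1+\delta_{2K}}$, not $/\sqrt{1-\delta_{2K}}$). The right fix is to state and prove the theorem with your own explicit constants rather than to chase the displayed ones.
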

\begin{proof}
See Appendix \ref{app:noi1}.
\end{proof}

\vspace{0.1in}

Next, we consider the second case where MOLS terminates after $K$ iterations.
In this case, we parameterize the dependence on the noise $\mathbf{v}$ and the signal $\mathbf{x}$ with two quantities: i) the signal-to-noise ratio (SNR) and ii) the minimum-to-average ratio (MAR)~\cite{fletcher2012orthogonal} which are defined as
\begin{equation}
{snr}:= \frac{\|\mathbf{\Phi x}\|_2^2}{\|\mathbf{v}\|_2^2}~\text{and}~\kappa: = \frac{\min_{j \in \mathcal{T}} |x_{j}|}{\|\mathbf{x}\|_2 /{\sqrt K}}, \label{eq:snrmar}
\end{equation}
respectively.
 The following theorem provides an upper bound on the $\ell_2$-norm of the recovery distortion of MOLS.

 \vspace{1mm}

\begin{theorem} \label{thm:noi} Consider the measurement model in~\eqref{eq:noizemodel}. If the measurement matrix $\mathbf{\Phi}$ satisfies \eqref{eq:jjjjffffa} and the SNR satisfies
  \begin{equation}
    \left\{\begin{array}{ll}
      \sqrt{snr} \geq \frac{ 2 (1 + \delta_{K + 1})}{\kappa(1 - (\sqrt{K} + 2) \delta_{K + 1}) } \sqrt K, & L=1,\\
      \sqrt{snr} \geq \frac{(\sqrt L + 1) (1 + \delta_{LK}) }{\kappa  (\sqrt L - (\sqrt{K} + 2 \sqrt L) \delta_{LK})} \sqrt K, & L>1,
    \end{array}\right. \label{eq:jjjjffffaa}
  \end{equation}
then MOLS chooses all support indices in $K$ iterations (i.e., ${\mathcal{T}}^K \supseteq \mathcal{T}$) and generates an estimate of $\mathbf{x}$ satisfying
  \begin{equation}
    \hspace{.5mm}\left\{\begin{array}{ll}
      \|\hat{\mathbf{x}} \hspace{-.25mm} - \hspace{-.25mm} \mathbf{x}\|_2 \leq \frac{\|\mathbf{v}\|_2}{\sqrt{1 - \delta_{K}}}, & L=1,\\
      \|\hat{\mathbf{x}} \hspace{-.25mm} - \hspace{-.25mm} \mathbf{x}\|_2 \leq \hspace{-.5mm} \big(1 \hspace{-.5mm} + \hspace{-.5mm} \sqrt{\frac{1 - \delta_{2K}}{1 - \delta_{LK}}} \big) \frac{2 \|\mathbf{v}\|_2}{\sqrt{1 + \delta_{2K}}}, & L>1.
    \end{array}\right. \label{eq:stable}
\end{equation}
\end{theorem}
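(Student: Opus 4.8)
The plan is to split the argument into two parts that mirror the structure of the noiseless Theorem~\ref{thm:atleast10}. First I would show that under the SNR condition~\eqref{eq:jjjjffffaa}, MOLS selects at least one correct index at every one of its $K$ iterations, so that $\mathcal{T}\subseteq\mathcal{T}^K$. Second, conditioned on $\mathcal{T}\subseteq\mathcal{T}^K$, I would bound the recovery distortion $\|\hat{\mathbf{x}}-\mathbf{x}\|_2$ of the pruned least-squares output, treating the clean $L=1$ case separately from the $L>1$ case, which must contend with the best-$K$-term pruning step.

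For the support-recovery part, I would rerun the induction of Section~\ref{sec:proofoftheorem1}, redefining $u_1$ and $v_L$ with the noisy residual $\mathbf{r}^k=\mathbf{P}^\bot_{\mathcal{T}^k}(\mathbf{\Phi x}+\mathbf{v})$. Writing each normalized correlation as a signal part driven by $\mathbf{P}^\bot_{\mathcal{T}^k}\mathbf{\Phi x}$ plus a noise part driven by $\mathbf{v}$, the signal parts reproduce exactly the noiseless bounds of Proposition~\ref{lem:upperbound}, while the normalized noise parts are controlled uniformly by $\|\mathbf{v}\|_2$ through Cauchy--Schwarz on the unit-norm projected columns together with Lemma~\ref{lem:rip5} and Lemma~\ref{lem:rips2}. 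This gives $u_1\ge(\text{noiseless lower bound})-c_1\|\mathbf{v}\|_2$ and $v_L\le(\text{noiseless upper bound})+c_2\|\mathbf{v}\|_2$. The success criterion $u_1>v_L$ then reduces to requiring the strictly positive RIP margin established in~\eqref{eq:small} and~\eqref{eq:large} (proportional to $\|\mathbf{x}_{\mathcal{T}\setminus\mathcal{T}^k}\|_2$ under~\eqref{eq:jjjjffffa}) to dominate the accumulated noise proportional to $\|\mathbf{v}\|_2$. To convert this into~\eqref{eq:jjjjffffaa} I would lower-bound the remaining signal energy through the MAR, $\|\mathbf{x}_{\mathcal{T}\setminus\mathcal{T}^k}\|_2\ge\min_{j\in\mathcal{T}}|x_j|=\kappa\|\mathbf{x}\|_2/\sqrt{K}$ (worst case of a single missing index), and relate $\|\mathbf{x}\|_2$ to the noise via RIP, $\|\mathbf{x}\|_2\ge\|\mathbf{\Phi x}\|_2/\sqrt{1+\delta_{LK}}=\sqrt{snr}\,\|\mathbf{v}\|_2/\sqrt{1+\delta_{LK}}$. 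Substituting and simplifying yields the two thresholds in~\eqref{eq:jjjjffffaa}, with the $L=1$ versus $L>1$ split inheriting the first-iteration versus general-iteration distinction (and the orders $\delta_{K+1}$, $\delta_{LK}$) exactly as in the noiseless overall-condition step.

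For the distortion part, the $L=1$ case is immediate: since each iteration adds one correct index, $\mathcal{T}^K=\mathcal{T}$ and no pruning occurs, so $\hat{\mathbf{x}}_{\mathcal{T}}=\mathbf{\Phi}_{\mathcal{T}}^\dag\mathbf{y}=\mathbf{x}_{\mathcal{T}}+\mathbf{\Phi}_{\mathcal{T}}^\dag\mathbf{v}$, and the singular-value bound on $\mathbf{\Phi}_{\mathcal{T}}^\dag$ gives $\|\hat{\mathbf{x}}-\mathbf{x}\|_2=\|\mathbf{\Phi}_{\mathcal{T}}^\dag\mathbf{v}\|_2\le\|\mathbf{v}\|_2/\sqrt{1-\delta_K}$. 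For $L>1$, I would first bound the intermediate estimate: since $\mathcal{T}\subseteq\mathcal{T}^K$, the least-squares solution obeys $\mathbf{x}^K_{\mathcal{T}^K}=\mathbf{x}_{\mathcal{T}^K}+\mathbf{\Phi}_{\mathcal{T}^K}^\dag\mathbf{v}$, whence $\|\mathbf{x}^K-\mathbf{x}\|_2\le\|\mathbf{v}\|_2/\sqrt{1-\delta_{LK}}$. Because $\hat{\mathcal{T}}$ is the best $K$-term support of $\mathbf{x}^K$ and $\mathbf{x}$ is itself $K$-sparse, a triangle-inequality argument yields $\|\mathbf{x}-\mathbf{x}^K_{\hat{\mathcal{T}}}\|_2\le 2\|\mathbf{x}^K-\mathbf{x}\|_2$. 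Finally, using $\mathbf{\Phi}(\mathbf{x}-\hat{\mathbf{x}})=\mathbf{P}^\bot_{\hat{\mathcal{T}}}\mathbf{\Phi x}-\mathbf{P}_{\hat{\mathcal{T}}}\mathbf{v}$, that $\mathbf{x}-\hat{\mathbf{x}}$ is supported on $\mathcal{T}\cup\hat{\mathcal{T}}$ of size at most $2K$, and $\|\mathbf{P}^\bot_{\hat{\mathcal{T}}}\mathbf{\Phi x}\|_2=\|\mathbf{P}^\bot_{\hat{\mathcal{T}}}\mathbf{\Phi}(\mathbf{x}-\mathbf{x}^K_{\hat{\mathcal{T}}})\|_2\le\sqrt{1+\delta_{2K}}\,\|\mathbf{x}-\mathbf{x}^K_{\hat{\mathcal{T}}}\|_2$, the RIP of order $2K$ assembles these pieces into~\eqref{eq:stable}; this is essentially the mechanism already exploited in Theorem~\ref{thm:noi1}.

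The main obstacle I anticipate is twofold. In the support-recovery part, the delicate step is bounding the normalized noise contributions to $u_1$ and $v_L$ uniformly over all candidate indices while keeping the constants tight enough to recover exactly the margins $1-(\sqrt K+2)\delta_{K+1}$ and $\sqrt L-(\sqrt K+2\sqrt L)\delta_{LK}$ that appear in~\eqref{eq:jjjjffffaa}; any slack here would weaken the SNR threshold. In the distortion part, the genuinely nontrivial piece is the $L>1$ pruning estimate: one must show that discarding all but the $K$ largest entries of $\mathbf{x}^K$ cannot lose more than a controlled multiple of $\|\mathbf{x}^K-\mathbf{x}\|_2$ of signal energy, which is precisely what lets the $\delta_{LK}$ (intermediate accuracy) and $\delta_{2K}$ (final re-projection) factors combine into the stated bound.
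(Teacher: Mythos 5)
Your overall architecture is the paper's: an induction establishing that at least one correct index is selected at every iteration (via noisy analogues of $u_{1}$ and $v_{L}$, which is exactly Proposition~\ref{prop:upperbound15}), followed by a distortion bound conditioned on $\mathcal{T}\subseteq\mathcal{T}^{K}$ (exactly Proposition~\ref{prop:r2}, including the factor-of-two best-$K$-term pruning argument and the order-$2K$ re-projection through $\mathbf{P}^{\bot}_{\hat{\mathcal{T}}}\mathbf{\Phi}\mathbf{x}$). The $L=1$ and $L>1$ distortion bounds you sketch, and the merge of the first-iteration condition (order $K+L$ constants) with the general-iteration condition (order $LK$), are the paper's essentially verbatim.

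The one step that fails as written is your lower bound on the residual signal energy. You propose $\|\mathbf{x}_{\mathcal{T}\setminus\mathcal{T}^{k}}\|_{2}\geq\min_{j\in\mathcal{T}}|x_{j}|=\kappa\|\mathbf{x}\|_{2}/\sqrt{K}$, the ``worst case of a single missing index.'' But the lower bound on $u_{1}'$ carries a factor $1/\sqrt{K-\ell'}$ (see \eqref{eq:small5}), and the paper cancels it precisely by keeping the cardinality factor, $\|\mathbf{x}_{\mathcal{T}\setminus\mathcal{T}^{k}}\|_{2}\geq\sqrt{|\mathcal{T}\setminus\mathcal{T}^{k}|}\,\min_{j\in\mathcal{T}}|x_{j}|=\kappa\sqrt{(K-\ell')/K}\,\|\mathbf{x}\|_{2}$. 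With your weaker bound the signal term in $u_{1}'$ is reduced by $\sqrt{K-\ell'}$ while the noise term $\sqrt{1+\delta}\,\|\mathbf{v}\|_{2}/\sqrt{K-\ell'}$ is unchanged; tracing this through $u_{1}'>v_{L}'$ in an early iteration, where $K-\ell'$ is of order $K$, forces roughly $\sqrt{snr}\gtrsim K/(\kappa\sqrt{L})$, i.e., $snr$ growing quadratically in $K$, which does not yield \eqref{eq:jjjjffffaa}. The fix is to retain the $\sqrt{K-\ell'}$ factor throughout and only discard it at the very end via $1\leq K-\ell'<K$, as the paper does in Appendix~\ref{app:cond5}; with that correction the rest of your plan goes through and coincides with the paper's proof.
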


One can interpret from Theorem~\ref{thm:noi} that MOLS can catch all support indices of $\mathbf{x}$ in $K$ iterations when the SNR scales linearly with the sparsity $K$. In particular, for the special case of $L = 1$, the algorithm exactly recovers the support of $\mathbf{x}$ (i.e., $\hat{\mathcal{T}} = \mathcal{T}$). It is worth noting that the SNR being proportional to $K$ is necessary for exact support recovery with OLS. In fact, there exist a measurement matrix $\mathbf{\Phi}$ satisfying \eqref{eq:jjjjffffa} and a $K$-sparse signal, for which the OLS algorithm fails to recover the support of the signal under
\begin{equation}
snr \geq K.
\end{equation}
\vspace{-4mm}
\begin{example}
Consider an identity matrix $\mathbf{\Phi}^{m \times m}$, a $K$-sparse signal $\mathbf{x} \in \mathcal{R}^m$ with all nonzero elements equal to one, and an $1$-sparse noise vector $\mathbf{v} \in \mathcal{R}^m$ as follows,
    \begin{equation}
  \mathbf{\Phi} = \left[ \begin{array}{cccc}
    1 &      &        &   \\
      & 1    &        &   \\
      &      & \ddots &   \\
      &      &        & 1
  \end{array} \right]\hspace{-1mm},~
  \mathbf{x} = \left[ \begin{array}{c}
    1\\
    \vdots\\
    1\\
    0\\
    \vdots \\
    0
  \end{array} \right]\hspace{-1mm}, ~\text{and}~
  \mathbf{v} = \left[ \begin{array}{c}
    0\\
    \vdots \\
    0 \\
    1
  \end{array} \right]\hspace{-1mm}. \nonumber
\nonumber
\end{equation}
Then the measurements are given by
\begin{equation}
\mathbf{y} = ~~\overbrace {\hspace{-2mm}\left[ {\begin{array}{*{20}c}
\hspace{-1mm}1 & \cdots & 1 \hspace{-2mm} \\
\end{array}} \right.}^{K} ~
\overbrace {\hspace{-2mm}\left. {\begin{array}{*{20}c}
0 & \cdots & 0 \hspace{-2mm} \\
\end{array}} \right.}^{m - K - 1} \left. {\begin{array}{*{20}c}
1\\
\end{array}} \hspace{-1.25mm}\right]'\hspace{-1mm}.\nonumber
\end{equation}
In this case, we have $\delta_{K + 1} = 0$ (so that condition \eqref{eq:jjjjffffa} is fulfilled) and $snr = K$; however, OLS may fail to recover the support of $\mathbf{x}$. Specifically, OLS is not guaranteed to make a correct selection at the first iteration.
\end{example}
%

%
%
%

\subsection{Proof of Theorem~\ref{thm:noi}} \label{sec:proofoftheorem4}

Our proof of Theorem~\ref{thm:noi} extends the proof technique in
\cite[Theorem 3.4 and 3.5]{wang2012Generalized} (which studied the
recovery condition for the gOMP algorithm in the noiseless
situation) by considering the measurement noise. We mention
that~\cite[Theorem 4.2]{wang2012Generalized} also provided a noisy case analysis
based on the $\ell_2$-norm distortion of signal recovery, but the
corresponding result is far inferior to the result established in
Theorem~\ref{thm:noi}. Indeed, while the result
in~\cite{wang2012Generalized} suggested a recovery distortion upper
bounded by $\mathcal{O}(\sqrt K) \|\mathbf{v}\|_2$, our result shows
that the recovery distortion with MOLS is at most proportional to
the noise power. The result in Theorem~\ref{thm:noi} is also closely
related to the results in~\cite{wu2013exact,chang2014improved}, in
which the researchers considered the OMP algorithm with data driven
stopping rules (i.e., residual based stopping rules), and
established conditions for exact support recovery that depend on the
minimum magnitude of nonzero elements of input signals. It can be
shown that the results of~\cite{wu2013exact,chang2014improved}
essentially require a same scaling law of the SNR as the result in
Theorem~\ref{thm:noi}.

The key idea in the proof is to derive a condition ensuring MOLS to select at least one good index at each iterations. As long as at least one good index is chosen in each iteration, all support indices will be included in $K$ iterations of MOLS (i.e., $\mathcal{T}^K \supseteq \mathcal{T}$) and consequently the algorithm produces a stable recovery of $\mathbf{x}$.

\begin{proposition} \label{prop:r2}
Consider the measurement model in~\eqref{eq:noizemodel}. If the measurement matrix $\mathbf{\Phi}$ satisfies the RIP of order $LK$, then MOLS satisfies \eqref{eq:stable} provided that $\mathcal{T}^K \supseteq \mathcal{T}$.
\end{proposition}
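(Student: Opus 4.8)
The plan is to use the hypothesis $\mathcal{T}\subseteq\mathcal{T}^K$ to first control the order-$K$ least-squares iterate $\mathbf{x}^K=\mathbf{\Phi}_{\mathcal{T}^K}^{\dagger}\mathbf{y}$, and then to track how the two post-processing steps of MOLS (the top-$K$ truncation that produces $\hat{\mathcal{T}}$, and the final least-squares fit $\hat{\mathbf{x}}_{\hat{\mathcal{T}}}=\mathbf{\Phi}_{\hat{\mathcal{T}}}^{\dagger}\mathbf{y}$) propagate this control to a bound on $\|\mathbf{x}-\hat{\mathbf{x}}\|_2$. The starting point is that, since $\mathcal{T}\subseteq\mathcal{T}^K$, we have $\mathbf{\Phi x}\in\mathrm{span}(\mathbf{\Phi}_{\mathcal{T}^K})$, so the restriction of $\mathbf{x}^K$ to $\mathcal{T}^K$ equals $\mathbf{x}_{\mathcal{T}^K}+\mathbf{\Phi}_{\mathcal{T}^K}^{\dagger}\mathbf{v}$, giving $\|\mathbf{x}-\mathbf{x}^K\|_2=\|\mathbf{\Phi}_{\mathcal{T}^K}^{\dagger}\mathbf{v}\|_2\le\|\mathbf{v}\|_2/\sqrt{1-\delta_{LK}}=:\eta$ by Lemma~\ref{lem:rips2} applied to $\mathcal{S}=\mathcal{T}^K$ (the spectral-norm bound is the same for $\mathbf{\Phi}_{\mathcal{S}}^{\dagger}$ and $(\mathbf{\Phi}_{\mathcal{S}}^{\dagger})'$ since transposition preserves singular values). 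RIP of order $LK$ is exactly the available hypothesis, and for $L\ge 2$ it also furnishes RIP of order $2K$ by monotonicity (Lemma~\ref{lem:mono}), so every isometry constant I need is in force.

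First I would dispose of the case $L=1$. Here $|\mathcal{T}^K|=K$, so $\mathcal{T}\subseteq\mathcal{T}^K$ forces $\mathcal{T}^K=\mathcal{T}$; the truncation step then returns $\hat{\mathcal{T}}=\mathcal{T}$ and $\hat{\mathbf{x}}_{\mathcal{T}}=\mathbf{\Phi}_{\mathcal{T}}^{\dagger}\mathbf{y}=\mathbf{x}_{\mathcal{T}}+\mathbf{\Phi}_{\mathcal{T}}^{\dagger}\mathbf{v}$. Hence $\|\hat{\mathbf{x}}-\mathbf{x}\|_2=\|\mathbf{\Phi}_{\mathcal{T}}^{\dagger}\mathbf{v}\|_2\le\|\mathbf{v}\|_2/\sqrt{1-\delta_{K}}$, which is precisely the first line of \eqref{eq:stable}. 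This case needs nothing beyond Lemma~\ref{lem:rips2}.

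For $L>1$ the set $\hat{\mathcal{T}}$ (the $K$ largest-magnitude entries of $\mathbf{x}^K$) need not coincide with $\mathcal{T}$, and I would run a three-step argument. (i) Since $\hat{\mathcal{T}}$ minimizes $\|\mathbf{x}^K-\mathbf{x}^K_{\mathcal{S}}\|_2$ over $|\mathcal{S}|=K$ and $\mathcal{T}$ is a competitor, $\|\mathbf{x}^K-\mathbf{x}^K_{\hat{\mathcal{T}}}\|_2\le\|\mathbf{x}^K-\mathbf{x}^K_{\mathcal{T}}\|_2=\|(\mathbf{x}^K-\mathbf{x})_{\Omega\setminus\mathcal{T}}\|_2\le\eta$ (using that $\mathbf{x}$ vanishes off $\mathcal{T}$); the triangle inequality then gives $\|\mathbf{x}-\mathbf{x}^K_{\hat{\mathcal{T}}}\|_2\le 2\eta$. (ii) Because $\mathbf{x}^K_{\hat{\mathcal{T}}}$ is supported on $\hat{\mathcal{T}}$, we have $\mathbf{P}^{\bot}_{\hat{\mathcal{T}}}\mathbf{\Phi}\mathbf{x}^K_{\hat{\mathcal{T}}}=\mathbf{0}$, so $\mathbf{P}^{\bot}_{\hat{\mathcal{T}}}\mathbf{\Phi x}=\mathbf{P}^{\bot}_{\hat{\mathcal{T}}}\mathbf{\Phi}(\mathbf{x}-\mathbf{x}^K_{\hat{\mathcal{T}}})$, and since $\mathbf{x}-\mathbf{x}^K_{\hat{\mathcal{T}}}$ is supported on $\mathcal{T}\cup\hat{\mathcal{T}}$ (at most $2K$ indices) the order-$2K$ RIP yields $\|\mathbf{P}^{\bot}_{\hat{\mathcal{T}}}\mathbf{\Phi x}\|_2\le\|\mathbf{\Phi}(\mathbf{x}-\mathbf{x}^K_{\hat{\mathcal{T}}})\|_2\le 2\eta\sqrt{1+\delta_{2K}}$. (iii) Finally, using $\mathbf{\Phi}\hat{\mathbf{x}}=\mathbf{P}_{\hat{\mathcal{T}}}\mathbf{y}$ I would write $\mathbf{\Phi}(\mathbf{x}-\hat{\mathbf{x}})=\mathbf{P}^{\bot}_{\hat{\mathcal{T}}}\mathbf{\Phi x}-\mathbf{P}_{\hat{\mathcal{T}}}\mathbf{v}$, whose two terms are mutually orthogonal, and convert the resulting estimate on $\|\mathbf{\Phi}(\mathbf{x}-\hat{\mathbf{x}})\|_2$ into a bound on $\|\mathbf{x}-\hat{\mathbf{x}}\|_2$ by the order-$2K$ RIP (the difference is again $2K$-sparse), together with $\|\mathbf{P}_{\hat{\mathcal{T}}}\mathbf{v}\|_2\le\|\mathbf{v}\|_2$ and the substitution $\eta=\|\mathbf{v}\|_2/\sqrt{1-\delta_{LK}}$.

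The main obstacle is step (iii): organizing these triangle inequalities so that the isometry constants collapse into exactly the coefficient $\big(1+\sqrt{(1-\delta_{2K})/(1-\delta_{LK})}\big)\,2/\sqrt{1+\delta_{2K}}$ of \eqref{eq:stable}. A naive passage from $\|\mathbf{\Phi}(\mathbf{x}-\hat{\mathbf{x}})\|_2$ to $\|\mathbf{x}-\hat{\mathbf{x}}\|_2$ produces a $1/\sqrt{1-\delta_{2K}}$ factor rather than the sharper $1/\sqrt{1+\delta_{2K}}$ that appears in \eqref{eq:stable}, so recovering the stated constant will require care in the order in which the bounds are combined and in exploiting the orthogonality of $\mathbf{P}^{\bot}_{\hat{\mathcal{T}}}\mathbf{\Phi x}$ and $\mathbf{P}_{\hat{\mathcal{T}}}\mathbf{v}$. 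I would keep $\eta$ symbolic throughout the combination and substitute only at the very end, so that the two contributions of \eqref{eq:stable} — the noise term $2\|\mathbf{v}\|_2/\sqrt{1+\delta_{2K}}$ and the approximation term carrying the ratio $\sqrt{(1-\delta_{2K})/(1-\delta_{LK})}$ — emerge transparently, and then verify that the resulting expression matches the second line of \eqref{eq:stable} verbatim.
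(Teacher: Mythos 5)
Your case $L=1$ and your step (i) coincide with the paper's proof: the paper likewise bounds $\|\hat{\mathbf{x}}-\mathbf{x}\|_2=\|\mathbf{\Phi}_{\mathcal{T}}^{\dagger}\mathbf{v}\|_2\le\|\mathbf{v}\|_2/\sqrt{1-\delta_K}$ for $L=1$, and for $L>1$ it derives exactly your estimate $\|(\mathbf{x}^K)_{\hat{\mathcal{T}}}-\mathbf{x}\|_2\le 2\|\mathbf{v}\|_2/\sqrt{1-\delta_{LK}}$ in \eqref{eq:zuuuo} via the best-$K$-term-approximation argument. Where you diverge is the conversion of this into a bound on $\|\mathbf{x}-\hat{\mathbf{x}}\|_2$: you propose the orthogonal decomposition $\mathbf{\Phi}(\mathbf{x}-\hat{\mathbf{x}})=\mathbf{P}^{\bot}_{\hat{\mathcal{T}}}\mathbf{\Phi x}-\mathbf{P}_{\hat{\mathcal{T}}}\mathbf{v}$, whereas the paper (in \eqref{eq:79eq1}, reused as \eqref{eq:79eq}) runs a chain of triangle inequalities through $\|\mathbf{\Phi}(\mathbf{x}^K)_{\hat{\mathcal{T}}}-\mathbf{y}\|_2\ge\|\mathbf{\Phi}\hat{\mathbf{x}}-\mathbf{y}\|_2$, exploiting that $\hat{\mathbf{x}}$ is the least-squares fit on $\hat{\mathcal{T}}$. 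Both conversions are legitimate, and yours is in fact slightly tighter because the orthogonality replaces one triangle inequality by a Pythagorean identity.

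The genuine gap is the one you flag yourself and hope to fix ``by care in the order of combination'': it cannot be fixed. Carried out correctly, your route yields $\|\mathbf{x}-\hat{\mathbf{x}}\|_2\le\big(1+2\sqrt{(1+\delta_{2K})/(1-\delta_{LK})}\big)\|\mathbf{v}\|_2/\sqrt{1-\delta_{2K}}$, and the paper's route yields $\big(1+\sqrt{(1+\delta_{2K})/(1-\delta_{LK})}\big)\,2\|\mathbf{v}\|_2/\sqrt{1-\delta_{2K}}$ --- in both cases the prefactor is $1/\sqrt{1-\delta_{2K}}$, not the $1/\sqrt{1+\delta_{2K}}$ appearing in \eqref{eq:stable}. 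The paper reaches the stated constant only because the two RIP invocations in \eqref{eq:79eq1} are applied in the reversed directions: the first step lower-bounds $\|(\mathbf{x}^{l})_{\hat{\mathcal{T}}}-\mathbf{x}\|_2$ by $\|\mathbf{\Phi}((\mathbf{x}^{l})_{\hat{\mathcal{T}}}-\mathbf{x})\|_2/\sqrt{1-\delta_{2K}}$ and the last step lower-bounds $\|\mathbf{\Phi}(\hat{\mathbf{x}}-\mathbf{x})\|_2$ by $\sqrt{1+\delta_{2K}}\,\|\hat{\mathbf{x}}-\mathbf{x}\|_2$, both of which are the wrong sides of \eqref{eq:RIP} (a $2K$-sparse vector $\mathbf{z}$ satisfies $\|\mathbf{z}\|_2\le\|\mathbf{\Phi z}\|_2/\sqrt{1-\delta_{2K}}$ and $\|\mathbf{\Phi z}\|_2\le\sqrt{1+\delta_{2K}}\|\mathbf{z}\|_2$, not the reverse). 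So your plan, executed honestly, proves the proposition with $1+\delta_{2K}$ and $1-\delta_{2K}$ interchanged in the second line of \eqref{eq:stable}; it does not, and no correct argument along either route will, reproduce the displayed constant verbatim. You should state and prove the corrected constant rather than chase the printed one.
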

\begin{proof}
See Appendix \ref{app:r2}.
\end{proof}

\vspace{0.1in}

Now we proceed to derive the condition ensuring the success of MOLS at each iteration.
Again, the notion ``success'' means that MOLS selects at least one good index at this iteration. We first derive a condition for the success of MOLS at the first iteration. Then we assume that MOLS has been successful in the previous $k$ iterations and derive a condition guaranteeing MOLS to make a success as well at the $(k + 1)$-th iteration. Finally, we combine these two conditions to obtain an overall condition for MOLS.

\subsubsection{Success at the first iteration}
  From \eqref{eq:13ma}, we know that at the first iteration,  MOLS selects the set $
    \mathcal{T}^{1}$ of $L$ indices such that $
    \| \mathbf{\Phi}'_{\mathcal{T}^{1}} \mathbf{y} \|_{2} = \max_{\mathcal{S} : | \mathcal{S}
    | =L} \| \mathbf{\Phi}'_{\mathcal{S}} \mathbf{y} \|_{2}.$
Since $L \leq K$,
  \begin{eqnarray}
    \left\| \mathbf{\Phi}_{\mathcal{T}^{1}}' \mathbf{y} \right\|_{2}
    \hspace{-2mm}& \geq &\hspace{-2mm}
    \sqrt{\frac{L}{K}} \left\| \mathbf{\Phi}_{\mathcal{T}}' \mathbf{y} \right\|_{2}
    =
    \sqrt{\frac{L}{K}} \left\| \mathbf{\Phi}_{\mathcal{T}}' \mathbf{\Phi x} + \mathbf{\Phi}_{\mathcal{T}}' \mathbf{v} \right\|_{2}
    \nonumber\\
    \hspace{-2mm}& \overset{(a)}{\geq} &\hspace{-2mm} \sqrt{\frac{L}{K}}  \left(\| \mathbf{\Phi}_{\mathcal{T}}'  \mathbf{\Phi}_{\mathcal{T}}
    \mathbf{x}_{\mathcal{T}} \|_{2} - \|\mathbf{\Phi}_{\mathcal{T}}' \mathbf{v}\|_{2} \right) \nonumber\\
    \hspace{-2mm}& \overset{(b)}{\geq} &\hspace{-2mm} \sqrt{\frac{L}{K}}  \left((1- \delta_{K} ) \| \mathbf{x} \|_{2} -  \sqrt{1 + \delta_K} \|\mathbf{v}\|_{2} \right),~~~~~
    \label{eq:try5}
  \end{eqnarray}
  where (a) is from the triangle inequality and (b) is from the RIP and Lemma~\ref{lem:rip5}.

  On the other hand, if no correct index is chosen at the first iteration
  (i.e., $\mathcal{T}^{1} \cap \mathcal{T}= \emptyset$), then
  \begin{eqnarray}
    \left\| \mathbf{\Phi}'_{\mathcal{T}^{1}} \mathbf{y} \right\|_{2}
    &=& \left\| \mathbf{\Phi}'_{\mathcal{T}^{1}}  \mathbf{\Phi}_{\mathcal{T}} \mathbf{x}_{\mathcal{T}}  + \mathbf{\Phi}_{\mathcal{T}}' \mathbf{v} \right\|_{2}
    \nonumber \\
    &\leq& \left\| \mathbf{\Phi}'_{\mathcal{T}^{1}}  \mathbf{\Phi}_{\mathcal{T}} \mathbf{x}_{\mathcal{T}}\|_2  + \| \mathbf{\Phi}_{\mathcal{T}}' \mathbf{v} \right\|_{2}
    \nonumber \\
    & \overset{\text{Lemma}~\ref{lem:correlationrip},~\ref{lem:rip5} }{\leq} & \delta_{K+L} \left\| \mathbf{x} \right\|_{2} + \sqrt{1 + \delta_K} \|\mathbf{v}\|_{2}.~~~ \label{eq:365}
  \end{eqnarray}
  This,
  however,  contradicts {\eqref{eq:try5}} if
\begin{equation}
   \hspace{-.5mm} \delta_{K+L} \hspace{-.5mm} \| \mathbf{x} \|_{2} \hspace{-.25mm} + \hspace{-.25mm} \sqrt{\hspace{-.5mm}1 \hspace{-1mm} + \hspace{-.75mm} \delta_K} \|\mathbf{v}\|_{2} \hspace{-.5mm} < \hspace{-.75mm} \sqrt{\hspace{-.5mm}\frac{L}{K}} \hspace{-.5mm} \left( \hspace{-.5mm}(1 \hspace{-.5mm}-\hspace{-.5mm} \delta_{K} ) \| \mathbf{x} \|_{2}\hspace{-.75mm} - \hspace{-.75mm} \sqrt{\hspace{-.5mm}1 \hspace{-.75mm} + \hspace{-.5mm} \delta_K} \|\mathbf{v}\|_{2} \hspace{-.5mm} \right)\hspace{-0.5mm}.\nonumber
  \end{equation}
  Equivalently,
  \begin{eqnarray}
  \hspace{-1.25mm}\left( \hspace{-.75mm} (1  \hspace{-.5mm} -  \hspace{-.5mm} \delta_{K + 1})  \hspace{-.25mm} \sqrt{\hspace{-.5mm}\frac{L}{K}}  \hspace{-.5mm} -  \hspace{-.5mm} \delta_{K + L}  \hspace{-.5mm} \right) \hspace{-1mm}  \frac{ \|\mathbf{x}\|_2}{ \|\mathbf{v}\|_2}  \hspace{-.75mm} >  \hspace{-.75mm} \left( \hspace{-.5mm} 1  \hspace{-.5mm} +  \hspace{-.5mm} \sqrt{\hspace{-.5mm}\frac{L}{K}}  \right)  \hspace{-1mm} \sqrt{1  \hspace{-.5mm} +  \hspace{-.5mm} \delta_{K + L}}.\hspace{-1mm}  \label{eq:11000121}
  \end{eqnarray}
Furthermore, since
\begin{equation}
 \|\mathbf{\Phi x}\|_2 \overset{\text{RIP}}{\leq} \sqrt{1 + \delta_{K}} \|\mathbf{x}\|_2 \overset{\text{Lemma}~\ref{lem:mono}}{\leq} \sqrt{1 + \delta_{K + L}} \|\mathbf{x}\|_2,
 \end{equation} using \eqref{eq:snrmar} we can show that \eqref{eq:11000121} holds true if
\begin{equation}
\sqrt{snr} > \frac{(1 + \delta_{K + L}) (\sqrt K + \sqrt L)}{\sqrt L - (\sqrt K + \sqrt L) \delta_{K + L}}.
    \label{eq:k+1n}
  \end{equation}
  Therefore, under (\ref{eq:k+1n}), at least one correct index is chosen
  at the first iteration of MOLS.

\subsubsection{Success at the $(k+1)$-th iteration}

  Similar to the analysis of MOLS in the noiseless case in Section~\ref{sec:IV}, we assume that MOLS selects at least one correct index at each of the previous
  $k$ ($1 \leq k < K$) iterations and denote by $\ell'$ the number of correct indices
  in $\mathcal{T}^{k}$. Then, $
    \ell' = |\mathcal{T} \cap \mathcal{T}^{k} | \geq k.$
  Also, we assume that $\mathcal{T}^{k}$ does not contain all correct indices ($\ell' <K$).
  Under these assumptions, we derive a condition that ensures MOLS
 to select at least one correct index at the $(k+1)$-th iteration.

We introduce two quantities that are useful for stating results. Let $u'_{1}$ denote the largest value of $\frac{| \langle \phi_{i},
  \mathbf{r}^{k} \rangle |}{\| \mathbf{P}^{\bot}_{\mathcal{T}^{k}} \phi_{i} \|_{2}}$, $i
  \in \mathcal{T}$ and let
  $v'_{L}$ denote the $L$-th largest value of $\frac{| \langle
  \phi_{i}, \mathbf{r}^{k} \rangle |}{\| \mathbf{P}^{\bot}_{\mathcal{T}^{k}} \phi_{i}
  \|_{2}}$, $i \in \Omega \setminus (\mathcal{T} \cup \mathcal{T}^{k} )$. It is clear that if \begin{equation}
  u'_{1} > v'_{L},
  \end{equation}
  then $u'_{1}$ belongs to the set of $L$ largest elements among all elements in $\big\{ \frac{| \langle
  \phi_{i}, \mathbf{r}^{k} \rangle |}{\| \mathbf{P}^{\bot}_{\mathcal{T}^{k}} \phi_{i}
  \|_{2}}\big\}_{i \in \Omega \setminus \mathcal{T}^{k}}$. Then it follows from {\eqref{eq:golsrule11111}}
  that at least one correct
  index (i.e., the one corresponding to $u'_{1}$) will be selected at the
  $(k+1)$-th iteration. The following proposition gives a lower
  bound for $u'_{1}$ and an upper bound for $v'_{L}$.

 \vspace{1mm}
  \begin{proposition}
    \label{prop:upperbound15} We have
    \begin{eqnarray}
u'_{1} \hspace{-3mm} &\geq& \hspace{-3mm} \frac{( 1 - \delta_{K + Lk -
      \ell'} )  \left\| \mathbf{x}_{\mathcal{T} \backslash \mathcal{T}^k}  \right\|_{2}  \hspace{-1mm} - \hspace{-1mm} \sqrt{1 + \delta_{K + Lk - \ell'}} \left\| \mathbf{v} \right\|_{2}}{\sqrt{K - \ell'}},~~~~~  \label{eq:small5} \\
      v'_{L}
      \hspace{-3mm} & \leq & \hspace{-3mm} \frac{1}{\sqrt{L}} \left( \left( \delta_{L+K- \ell'} +
      \frac{\delta_{L+Lk} \delta_{Lk+K- \ell'}}{1- \delta_{Lk}} \right)
      {\left\| \mathbf{x}_{\mathcal{T} \backslash \mathcal{T}^k}
      \right\|_{2}}  \right. \nonumber \\
      && \hspace{-2mm} \left. + \sqrt{1 + \delta_{L + Lk}} \|\mathbf{v}\|_2 \right) \left(1 + \frac{\delta_{Lk + 1}^2}{1- \delta_{Lk} - \delta_{Lk+1}^{2}} \right)^{\hspace{-1mm}1/2} \hspace{-1mm}.
      \label{eq:large5}
    \end{eqnarray}
  \end{proposition}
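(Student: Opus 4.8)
The plan is to mirror the noiseless argument behind Proposition~\ref{lem:upperbound}, carrying the extra summand $\mathbf{v}$ through every step. The starting point is the residual decomposition $\mathbf{r}^k = \mathbf{P}^\bot_{\mathcal{T}^k}\mathbf{y} = \mathbf{P}^\bot_{\mathcal{T}^k}\mathbf{\Phi}_{\mathcal{T}\backslash\mathcal{T}^k}\mathbf{x}_{\mathcal{T}\backslash\mathcal{T}^k} + \mathbf{P}^\bot_{\mathcal{T}^k}\mathbf{v}$, valid because $\mathbf{P}^\bot_{\mathcal{T}^k}$ annihilates the columns indexed by $\mathcal{T}\cap\mathcal{T}^k$. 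Throughout write $D := \mathcal{T}\backslash\mathcal{T}^k$, so that $|D| = K-\ell'$ and $|\mathcal{T}\cup\mathcal{T}^k| = K+Lk-\ell'$. Note also that for $i \in \mathcal{T}\cap\mathcal{T}^k$ both $\langle\phi_i,\mathbf{r}^k\rangle$ and $\mathbf{P}^\bot_{\mathcal{T}^k}\phi_i$ vanish, so the maximum defining $u'_1$ is attained on $D$.

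For the lower bound \eqref{eq:small5} on $u'_1$, I first use $\|\mathbf{P}^\bot_{\mathcal{T}^k}\phi_i\|_2 \leq 1$ to pass from normalized to plain correlations, $u'_1 \geq \max_{i\in D}|\langle\phi_i,\mathbf{r}^k\rangle|$, and then bound this maximum below by the root-mean-square, $u'_1 \geq \|\mathbf{\Phi}'_D\mathbf{r}^k\|_2/\sqrt{K-\ell'}$. Substituting the decomposition and applying the triangle inequality separates a signal part from a noise part. The matrix $\mathbf{\Phi}'_D\mathbf{P}^\bot_{\mathcal{T}^k}\mathbf{\Phi}_D$ is a Gram matrix, hence symmetric positive semidefinite, so the signal part is at least $\lambda_{\min}(\mathbf{\Phi}'_D\mathbf{P}^\bot_{\mathcal{T}^k}\mathbf{\Phi}_D)\|\mathbf{x}_D\|_2$; by Lemma~\ref{lem:rip6} with $\mathcal{S}_1 = D$, $\mathcal{S}_2 = \mathcal{T}^k$ this eigenvalue is at least $\lambda_{\min}(\mathbf{\Phi}'_{\mathcal{T}\cup\mathcal{T}^k}\mathbf{\Phi}_{\mathcal{T}\cup\mathcal{T}^k}) \geq 1-\delta_{K+Lk-\ell'}$. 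For the noise part, the same lemma gives $\|\mathbf{P}^\bot_{\mathcal{T}^k}\mathbf{\Phi}_D\|_{2\to 2}^2 = \lambda_{\max}(\mathbf{\Phi}'_D\mathbf{P}^\bot_{\mathcal{T}^k}\mathbf{\Phi}_D) \leq 1+\delta_{K+Lk-\ell'}$, whence $\|\mathbf{\Phi}'_D\mathbf{P}^\bot_{\mathcal{T}^k}\mathbf{v}\|_2 \leq \sqrt{1+\delta_{K+Lk-\ell'}}\,\|\mathbf{v}\|_2$. Combining these yields \eqref{eq:small5}.

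For the upper bound \eqref{eq:large5} on $v'_L$, let $F \subseteq \Omega\setminus(\mathcal{T}\cup\mathcal{T}^k)$ be the set of $L$ incorrect, not-yet-selected indices attaining the largest normalized correlations. Since $v'_L$ is the smallest of these, $(v'_L)^2 \leq \frac{1}{L}\sum_{i\in F}|\langle\phi_i,\mathbf{r}^k\rangle|^2/\|\mathbf{P}^\bot_{\mathcal{T}^k}\phi_i\|_2^2$. I isolate the denominator factor first: using $\|\mathbf{\Phi}'_{\mathcal{T}^k}\phi_i\|_2 \leq \delta_{Lk+1}$ (Lemma~\ref{lem:correlationrip}) together with the inverse bound of Lemma~\ref{lem:rips}, I obtain $\|\mathbf{P}_{\mathcal{T}^k}\phi_i\|_2^2 \leq \delta_{Lk+1}^2/(1-\delta_{Lk})$, hence $\|\mathbf{P}^\bot_{\mathcal{T}^k}\phi_i\|_2^2 \geq (1-\delta_{Lk}-\delta_{Lk+1}^2)/(1-\delta_{Lk})$, which is exactly the inverse of the factor $1+\delta_{Lk+1}^2/(1-\delta_{Lk}-\delta_{Lk+1}^2)$. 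Pulling this uniform factor out of the sum reduces the task to bounding $\|\mathbf{\Phi}'_F\mathbf{r}^k\|_2$, which I again split via the residual decomposition and the triangle inequality; the noise piece $\|\mathbf{\Phi}'_F\mathbf{P}^\bot_{\mathcal{T}^k}\mathbf{v}\|_2 \leq \sqrt{1+\delta_{L+Lk}}\,\|\mathbf{v}\|_2$ follows from Lemma~\ref{lem:rip6} with $\mathcal{S}_1 = F$, $\mathcal{S}_2 = \mathcal{T}^k$.

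The heart of the argument, and the step I expect to be the main obstacle, is the signal piece $\|\mathbf{\Phi}'_F\mathbf{P}^\bot_{\mathcal{T}^k}\mathbf{\Phi}_D\mathbf{x}_D\|_2$. Here I expand $\mathbf{P}^\bot_{\mathcal{T}^k} = \mathbf{I} - \mathbf{\Phi}_{\mathcal{T}^k}(\mathbf{\Phi}'_{\mathcal{T}^k}\mathbf{\Phi}_{\mathcal{T}^k})^{-1}\mathbf{\Phi}'_{\mathcal{T}^k}$ and bound the two resulting terms. The direct term obeys $\|\mathbf{\Phi}'_F\mathbf{\Phi}_D\mathbf{x}_D\|_2 \leq \delta_{L+K-\ell'}\|\mathbf{x}_D\|_2$ by Lemma~\ref{lem:correlationrip}, since $F$ and $D$ are disjoint with $|F|+|D| = L+K-\ell'$. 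The cross term is bounded by chaining two applications of Lemma~\ref{lem:correlationrip} (for $\mathbf{\Phi}'_F\mathbf{\Phi}_{\mathcal{T}^k}$, giving $\delta_{L+Lk}$, and for $\mathbf{\Phi}'_{\mathcal{T}^k}\mathbf{\Phi}_D$, giving $\delta_{Lk+K-\ell'}$) with the inverse bound $1/(1-\delta_{Lk})$ of Lemma~\ref{lem:rips}, producing $\frac{\delta_{L+Lk}\delta_{Lk+K-\ell'}}{1-\delta_{Lk}}\|\mathbf{x}_D\|_2$. Assembling the two signal terms, the noise term, and the denominator factor gives \eqref{eq:large5}. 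The only genuinely delicate bookkeeping is keeping the RIP orders correct through the disjointness relations; the noise contributions are handled uniformly by the eigenvalue inequalities of Lemma~\ref{lem:rip6}.
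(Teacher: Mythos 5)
Your proposal is correct and follows essentially the same route as the paper's proof in Appendix~\ref{app:upperbound1}: max-over-$\mathcal{T}\backslash\mathcal{T}^k$ bounded by the root-mean-square, the decomposition $\mathbf{r}^k=\mathbf{P}^{\bot}_{\mathcal{T}^{k}}\mathbf{\Phi}_{\mathcal{T}\setminus\mathcal{T}^{k}}\mathbf{x}_{\mathcal{T}\setminus\mathcal{T}^{k}}+\mathbf{P}^{\bot}_{\mathcal{T}^{k}}\mathbf{v}$ with the triangle inequality, eigenvalue bounds via Lemma~\ref{lem:rip6} for both the signal and noise pieces, the projection bound $\|\mathbf{P}_{\mathcal{T}^{k}}\phi_{i}\|_{2}^{2}\leq\delta_{Lk+1}^{2}/(1-\delta_{Lk})$ for the denominators, and the split $\mathbf{P}^{\bot}_{\mathcal{T}^{k}}=\mathbf{I}-\mathbf{P}_{\mathcal{T}^{k}}$ with Lemmas~\ref{lem:rips} and~\ref{lem:correlationrip} for the $v'_{L}$ bound. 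The only cosmetic difference is that you obtain the projection bound from the quadratic-form/inverse-eigenvalue estimate of Lemma~\ref{lem:rips} where the paper invokes the pseudoinverse bound of Lemma~\ref{lem:rips2}; both give the same constant.
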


  \begin{proof}
    See Appendix~\ref{app:upperbound1}.
  \end{proof}

By noting that $1 \leq k \leq \ell' <K$ and $1\leq L \leq K$, and also using Lemma~\ref{lem:mono}, we
  have
  \begin{eqnarray}\label{eq:monoto55}
    K-\ell' <LK & \Rightarrow & \delta_{K-\ell'} \leq \delta_{LK}, \nonumber\\
    Lk+K-\ell' \leq LK & \Rightarrow & \delta_{Lk+K-\ell'} \leq \delta_{LK}.
  \end{eqnarray}
  From (\ref{eq:monoto}), {\eqref{eq:large5}}, and (\ref{eq:monoto55}), we have
  \begin{eqnarray}
  \hspace{-3mm} v_{L}
    \hspace{-3mm} & \leq & \hspace{-3mm} \frac{1}{\sqrt{L}} \left(1 +  \frac{\delta_{LK}^2}{1- \delta_{LK}
    - \delta_{LK}^{2}} \right)^{1/2} \nonumber \\
   \hspace{-8mm}  & & \hspace{-2mm} \times \left( \left( \delta_{LK} +
    \frac{\delta_{LK}^{2}}{1- \delta_{LK}} \right) {\left\| \mathbf{x}_{\mathcal{T} \backslash \mathcal{T}^k} \right\|_{2}} + \sqrt{1 + \delta_{LK}} \|\mathbf{v}\|_2 \right) \nonumber \\
    \hspace{-8mm} & = & \hspace{-3.5mm} \frac{1}{\sqrt{L}} \hspace{-.75mm} \left(\hspace{-.75mm} \frac{1 \hspace{-.5mm} - \hspace{-.5mm} \delta_{LK}}{1 \hspace{-.5mm} - \hspace{-.75mm} \delta_{LK} \hspace{-.75mm}
    - \hspace{-.75mm} \delta_{LK}^{2}} \hspace{-.75mm} \right) ^{\hspace{-1.25mm} 1/2} \hspace{-1.5mm}  \left( \hspace{-.75mm}
    \frac{\delta_{LK} \hspace{-.5mm} {\left\| \mathbf{x}_{\mathcal{T} \backslash \mathcal{T}^k} \right\|_{2}}} {1- \delta_{LK}} \hspace{-.75mm} + \hspace{-1mm} \sqrt{\hspace{-.25mm} 1 \hspace{-.75mm} + \hspace{-.5mm} \delta_{LK}} \|\mathbf{v}\|_2 \hspace{-1mm} \right)\hspace{-.75mm}.\label{eq:3655} \nonumber \\
  \end{eqnarray}
  Also, from (\ref{eq:monoto}), {\eqref{eq:small5}}, and (\ref{eq:monoto55}), we have
\begin{eqnarray}\label{eq:3755}
  u_{1} \hspace{-.5mm} \geq \hspace{-.5mm} \frac{1}{\sqrt{K - \ell'}} \left((1 \hspace{-.5mm} - \hspace{-.5mm} \delta_{LK}) \|
    \mathbf{x}_{\mathcal{T} \backslash \mathcal{T}^k} \|_{2}\hspace{-.5mm} - \hspace{-.5mm} \sqrt{1 \hspace{-.5mm} + \hspace{-.5mm} \delta_{LK}} \|\mathbf{v}\|_2\right)\hspace{-.5mm}.
  \end{eqnarray}
  From {\eqref{eq:3655}} and {\eqref{eq:3755}}, $u'_{1} > v'_{L}$ can be guaranteed by
  \begin{eqnarray}
    \label{eq:sufficientommp45}
  \hspace{-7.5mm}  &&\frac{1}{\sqrt{K - \ell'}} \left((1- \delta_{LK}) \|
    \mathbf{x}_{\mathcal{T} \backslash \mathcal{T}^k} \|_{2} - \sqrt{1 + \delta_{LK}} \|\mathbf{v}\|_2\right)  \nonumber \\
  \hspace{-8.5mm}  && > \hspace{-.5mm}  \frac{1}{\sqrt{L}} \hspace{-.75mm} \left(\hspace{-.75mm} \frac{1 \hspace{-.5mm} - \hspace{-.5mm} \delta_{LK}}{1 \hspace{-.5mm} - \hspace{-1mm} \delta_{LK}
    - \delta_{LK}^{2}} \hspace{-.75mm} \right) ^{\hspace{-1.25mm} 1/2} \hspace{-1.5mm}  \left( \hspace{-.75mm}
    \frac{\delta_{LK} {\left\| \mathbf{x}_{\mathcal{T} \backslash \mathcal{T}^k} \right\|_{2}}} {1- \delta_{LK}} \hspace{-.75mm} + \hspace{-.75mm} \sqrt{1 \hspace{-.5mm} + \hspace{-.5mm} \delta_{LK}} \|\mathbf{v}\|_2 \hspace{-1mm} \right)\hspace{-1mm}, \nonumber \\ \hspace{-7.5mm} \label{eq:48o}
  \end{eqnarray}
  which is true under (see Appendix~\ref{app:cond5})
\begin{equation}
    \label{eq:k+15}
\sqrt{snr} \geq \frac{(\sqrt L + 1) (1 + \delta_{LK}) }{\kappa  (\sqrt L - (\sqrt{K} + 2 \sqrt L) \delta_{LK})} \sqrt K,
 \end{equation}
  Therefore, under {\eqref{eq:k+15}}, MOLS selects at least one correct index
  at the $(k+1)$-th iteration.

{\it 3) Overall condition}: Thus far we have obtained condition~{\eqref{eq:k+1n}} for the
success of MOLS at the first iteration and condition
{\eqref{eq:k+15}} for the success of the general iteration. We
now combine them to get an overall condition of MOLS ensuring selection of all support indices in $K$ iterations.
Clearly the overall condition can be the more restrictive one between {\eqref{eq:k+1n}} and {\eqref{eq:k+15}}. We consider the following two cases.
  \begin{itemize}
    \item $L \geq 2$: Since $\delta_{LK} \geq \delta_{K+L}$, {\eqref{eq:k+15}} is more restrictive than~{\eqref{eq:k+1n}} and becomes the overall condition.

   \item $L=1$: In this case, the MOLS algorithm reduces to the conventional OLS algorithm. Since $\delta_{K + 1} \geq \delta_K$, one can verify that both {\eqref{eq:k+1n}} and {\eqref{eq:k+15}} hold true under
   \begin{equation}
     \sqrt{snr} \geq \frac{ 2 (1 + \delta_{K + 1})}{\kappa(1 - (\sqrt{K} + 2) \delta_{K + 1}) } \sqrt K. \label{eq:yuuu}
    \end{equation}
    Therefore, \eqref{eq:yuuu} ensures selection of all support indices in $K$ iterations of OLS.
    \end{itemize}
    The proof is now complete.

\section{Empirical Results} \label{sec:VI}

In this section, we empirically study the performance of MOLS
in recovering sparse signals. We consider both the noiseless and noisy scenarios. In the noiseless case, we adopt the testing
strategy in~\cite{candes2005error,dai2009subspace,wang2012Generalized} which
measures the performance of recovery algorithms by testing their
empirical frequency of exact reconstruction of sparse signals, while in the noisy case, we employ the mean square error (MSE) as a metric to evaluate the recovery performance. For comparative
purposes, we consider the following recovery approaches in our
simulation:
\begin{enumerate}
  \item OLS and MOLS;

  \item OMP;

  \item StOMP (\url{http://sparselab.stanford.edu/});

  \item ROMP  (\url{http://www.cmc.edu/pages/faculty/DNeedell});

  \item CoSaMP (\url{http://www.cmc.edu/pages/faculty/DNeedell});

  \item BP (or BPDN for the noisy case) (\url{http://
cvxr.com/cvx/});

  \item Iterative reweighted LS (IRLS);
  \item Linear minimum MSE (LMMSE) estimator.
\end{enumerate}
In
each trial, we construct an $m \times n$ matrix (where $m = 128$ and
$n = 256$) with entries drawn independently from a Gaussian
distribution with zero mean and $\frac{1}{m}$ variance. For each
value of $K$ in $\{5, \cdots, 64\}$, we generate a $K$-sparse signal
of size $n \times 1$ whose support is chosen uniformly at random and
nonzero elements are 1) drawn independently from a standard Gaussian
distribution, or 2) chosen randomly from the set $\{\pm 1\}$.
We refer to the two types of signals as the sparse Gaussian signal
and the sparse $2$-ary pulse amplitude modulation ($2$-PAM) signal,
respectively. We mention that reconstructing sparse $2$-PAM signals is a
particularly challenging case for OMP and OLS.

\begin{figure}[t]
\centering
\subfigure[Sparse Gaussian signal.]
{\includegraphics[width = 88mm] {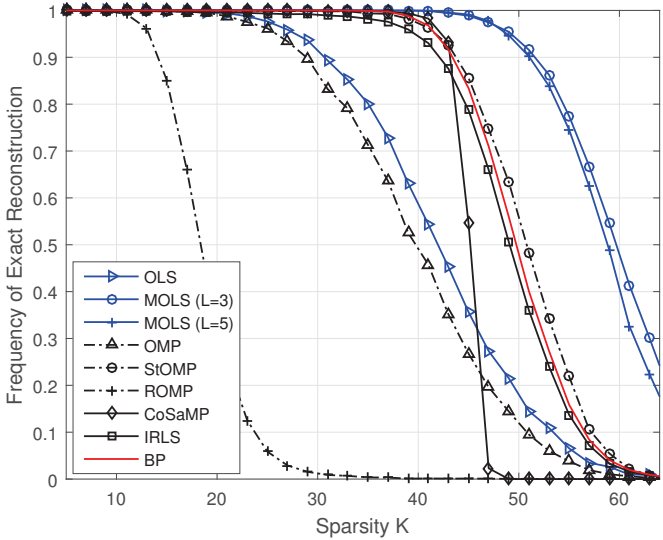}
\label{fig:Gauss}}
\subfigure[Sparse $2$-PAM signal]
{\includegraphics[width = 88mm] {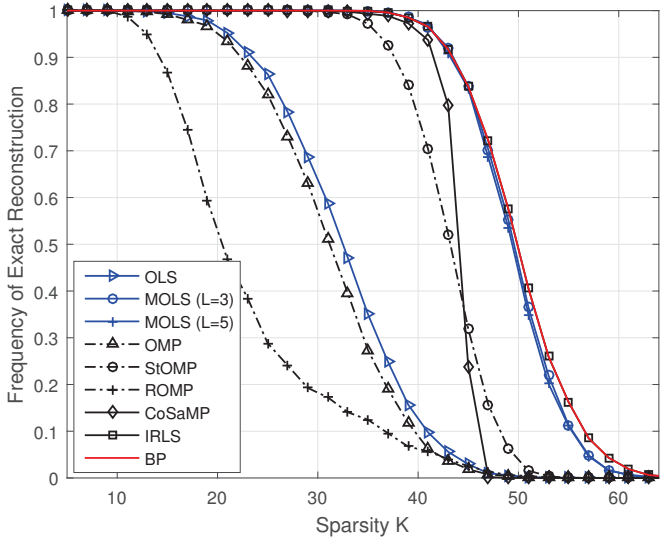} \label{fig:01}}
\caption{Frequency of exact recovery of sparse signals as a function of $K$.}
\label{fig:ER}
\end{figure}

\begin{figure}[h!]
\centering
\subfigure[Running time (Gaussian signals).]
{\includegraphics[scale =.465] {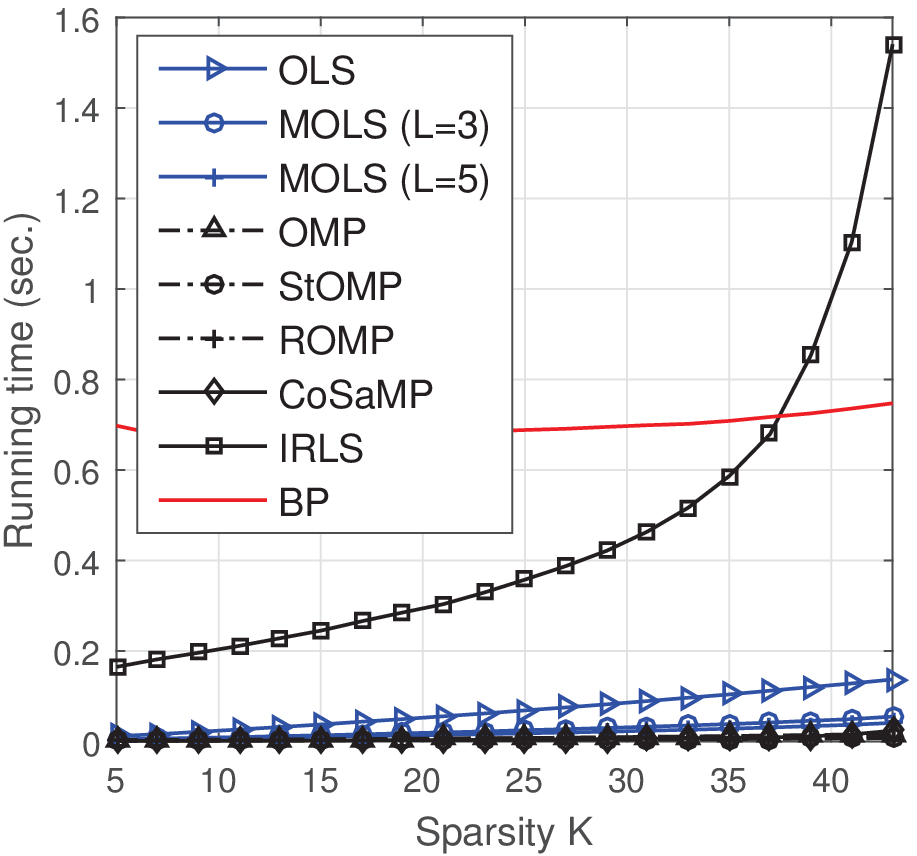} \label{fig:subfig1}}
\hspace{-2mm}\subfigure[Running time ($2$-PAM signals).]
{\includegraphics[scale =.465] {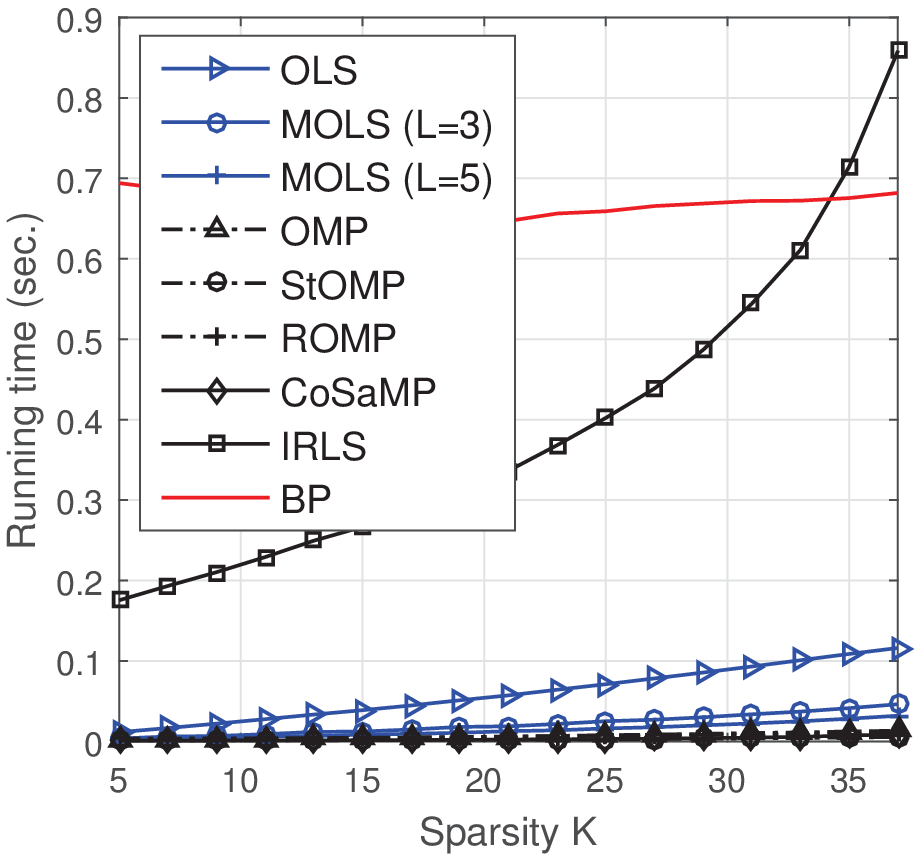} \label{fig:subfig2}}

\subfigure[Running time without BP and IRLS (Gaussian signals).]
{\includegraphics[scale =.45] {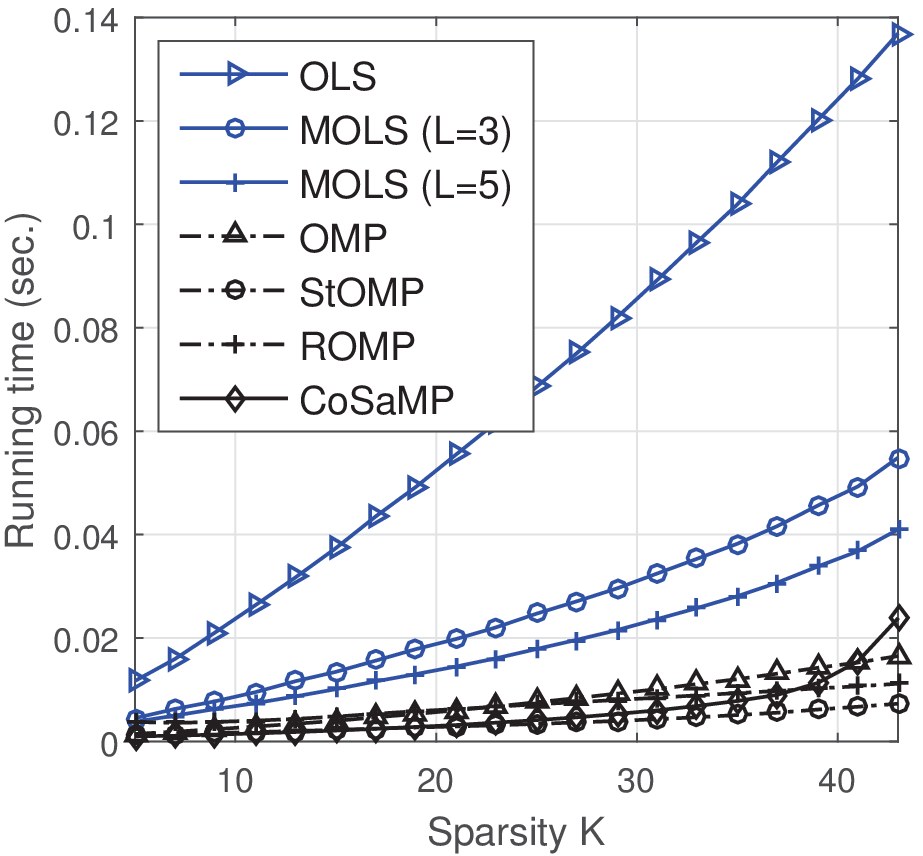} \label{fig:subfig3}}
\hspace{-0mm}\subfigure[Running time without BP and IRLS ($2$-PAM signals).]
{\includegraphics[scale =.45] {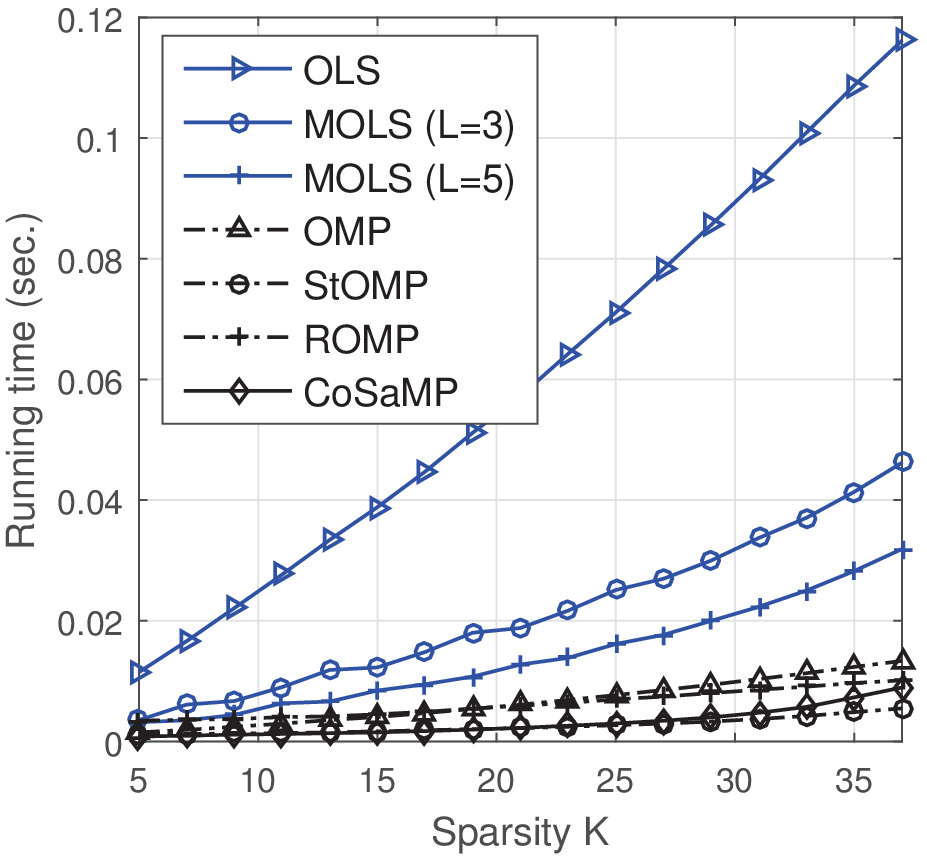} \label{fig:subfig4}}

\subfigure[\# of iterations (Gaussian signals).]
{\includegraphics[scale =.465] {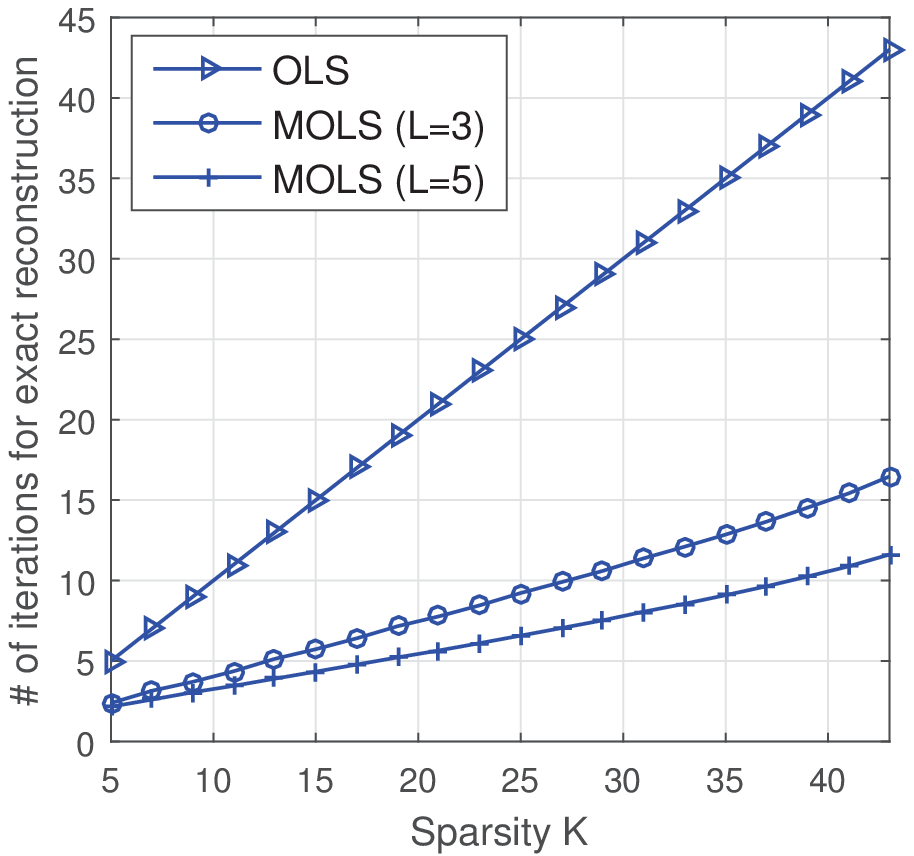} \label{fig:subfig5}}
\hspace{-1.5mm}\subfigure[\# of iterations ($2$-PAM signals).]
{\includegraphics[scale =.465] {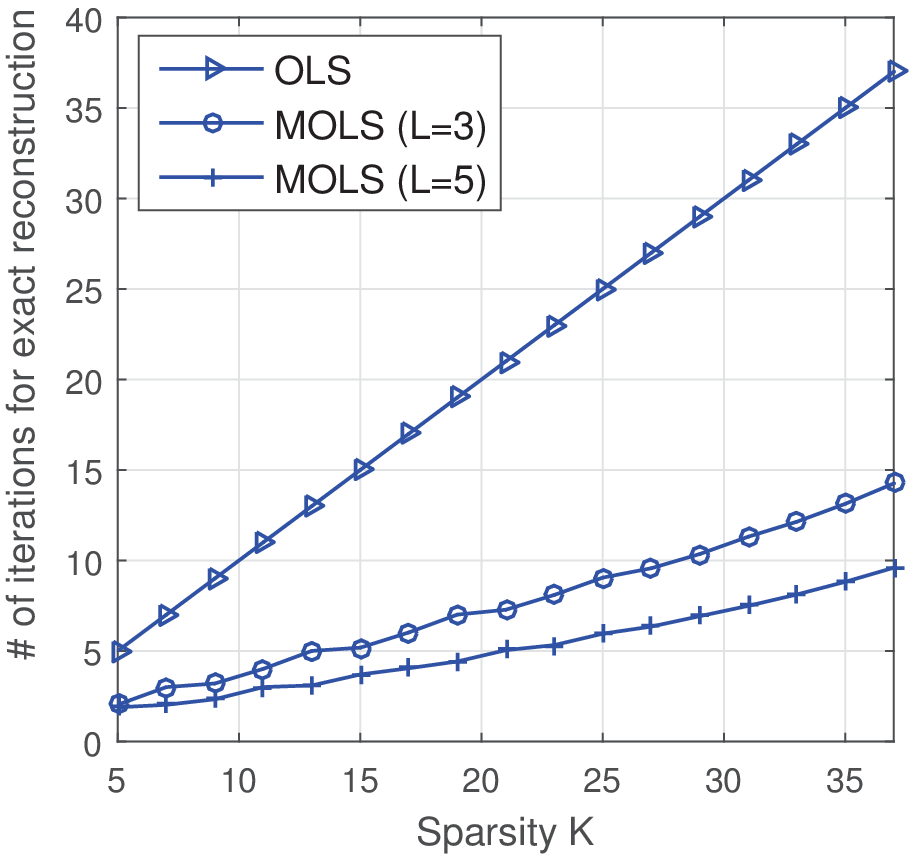} \label{fig:subfig6}}
\caption{Running time and number
of iterations for exact reconstruction of $K$-sparse Gaussian and
$2$-PAM signals.} \label{fig:iter}
\end{figure}


In the noiseless case, we perform $2,000$ independent trials for each recovery approach and plot the empirical frequency of exact reconstruction as a function of the sparsity level. By comparing the maximal
sparsity level, i.e., the so called {\it critical sparsity}~\cite{dai2009subspace}, of sparse signals at which exact
reconstruction is always ensured, recovery accuracy of different
algorithms can be compared empirically.\footnote{\label{foot:1}Note that for MOLS, the selection
parameter should obey $L \leq K$. We thus choose $L = 3, 5$ in our
simulation. Interested reader may try other options. We suggest to choose $L$ to be small integers and have empirically confirmed that choices of $L = 2, 3, 4, 5$ generally lead to similar recovery performance. For StOMP, there are two thresholding strategies: false
alarm control (FAC) and  false discovery control (FDC)~\cite{donoho2006sparse}. We
exclusively use FAC, since the FAC outperforms FDC. For OMP and OLS, we run the algorithm for exact $K$ iterations before stopping. For CoSaMP: we set the maximal iteration number to $50$ to avoid repeated iterations. We implement IRLS (with $p = 1$) as featured in~\cite{chartrand2008iteratively}.} As shown in Fig.~\ref{fig:ER}, for both sparse Gaussian and sparse $2$-PAM signals,
the MOLS algorithm outperforms other greedy approaches with respect
to the critical sparsity. Even when compared to the BP and IRLS methods, the MOLS algorithm still exhibits very competitive reconstruction
performance. For the Gaussian case, the critical sparsity of MOLS is $43$, which is higher than that of BP and IRLS, while for the $2$-PAM case, MOLS, BP and IRLS have almost identical critical sparsity (around~$37$).

In Fig.~\ref{fig:iter}, we plot the running time and the number of
iterations for exact reconstruction of
$K$-sparse Gaussian and $2$-PAM signals as a function of $K$. The
running time is measured using the MATLAB program under the
$28$-core $64$-bit processor, $256$Gb RAM, and Windows Server $2012$ R$2$
environments. Overall, we observe that for both sparse Gaussian and $2$-PAM cases, the running time of BP and IRLS is longer than that of OMP, CoSaMP, StOMP and MOLS. In particular, the running time of BP is more than one order of magnitude higher than the rest of algorithms require. This is because the complexity of BP is a quadratic function of the number of measurements ($\mathcal{O}(m^2 n^{3/2})$)~\cite{nesterov1994interior}, while that of greedy algorithms is $\mathcal{O}(Kmn)$.
Moreover, the running time of MOLS is roughly two to three times as much as that of OMP.
We also observe that the number of iterations of MOLS for exact reconstruction is much smaller than that of the OLS algorithm since MOLS can include more than one support index at a time.
The associated running time of MOLS is also much less than that of OLS.

In Fig. \ref{fig:varym}, by varying the number of measurements $m$,
we plot empirical frequency of exact reconstruction of $K$-sparse
Gaussian signals as a function of $m$. We consider the sparsity
level $K = 45$, for which none of the reconstruction methods in
Fig.~\ref{fig:Gauss} are guaranteed to perform exact recovery.
Overall, we observe that the performance comparison among all
reconstruction methods is similar to Fig.~\ref{fig:Gauss} in that
MOLS performs the best and OLS, OMP and ROMP perform worse than
other methods. Moreover, for all reconstruction methods under test,
the frequency of exact reconstruction improves as the number of
measurements increases. In particular, MOLS roughly requires $m \geq
135$ to ensure exact recovery of sparse signals, while BP, CoSaMP and StOMP seem to always succeed when $m \geq 150$.

\begin{figure}[t]
\centering
\hspace{-1mm}{\includegraphics[width = 90 mm]{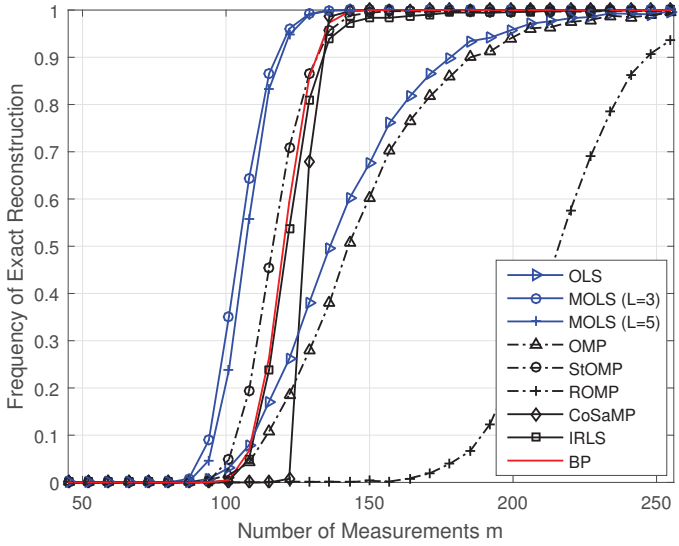}}
\caption{Frequency of exact recovery of sparse signals as a function of $m$.} \label{fig:varym}
\end{figure}

\begin{figure}[t]
\centering
\subfigure[$K=10$]
{\hspace{-2mm}
\includegraphics[width = 90 mm] {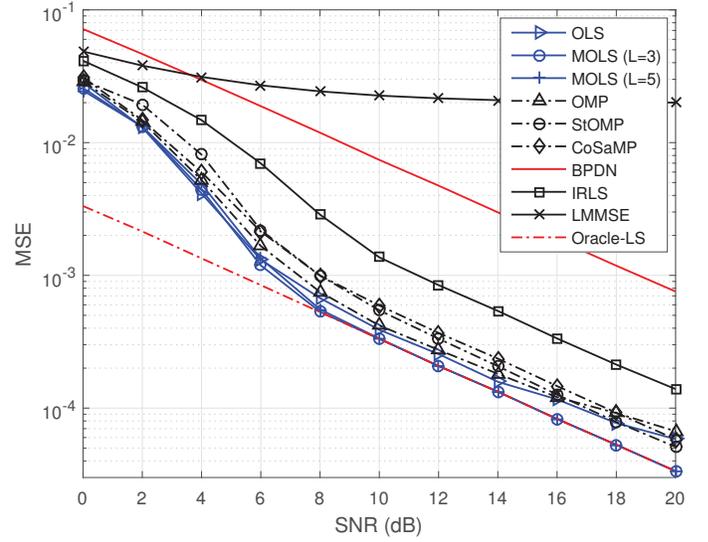}
\label{fig:subfig10}}
\subfigure[$K=20$]
{\hspace{-2mm}
\includegraphics[width = 90 mm] {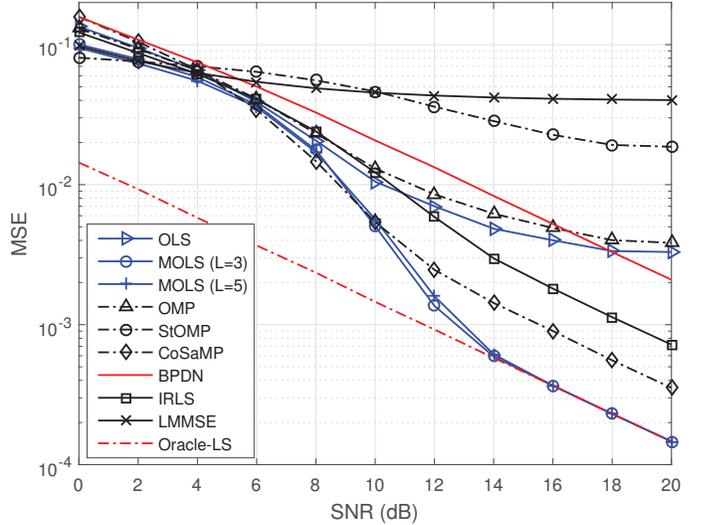}
\label{fig:subfig20}}
\caption{MSE performance of recovery methods for recovering sparse $2$-PAM signals as a function of SNR.}
\label{fig:mse1}\vspace{-1.5mm}
\end{figure}

In the noisy case, we empirically compare MSE performance of each recovery method. The MSE is defined as
\begin{equation}
\text{MSE}= \frac{1}{n} \sum_{i = 1}^n ({x}_i - \hat{{x}}_i)^2,
\end{equation}
where $\hat{{x}}_i$ is the estimate of $x_i$.
In obtaining the performance result for each simulation point of the algorithm, we perform $2,000$ independent trials.
In Fig.~\ref{fig:mse1}, we plot the MSE performance for each recovery method as a function of SNR (in dB) (i.e., $\text{SNR}: =  10 \log_{10} snr$).
In this case, the system model is expressed as $\mathbf{y} = \mathbf{\Phi x} + \mathbf{v}$ where $\mathbf{v}$ is the noise vector whose elements are generated from Gaussian distribution $\mathcal{N}(0, \frac{K}{m} 10^{- \frac{\text{SNR}}{10}})$.\footnote{Since the components of $\mathbf{\Phi}$ have power $\frac{1}{m}$ and the signal $\mathbf{x}$ is $K$-sparse with nonzero elements drawn independently from a standard Gaussian distribution, $\mathbb{E}|(\mathbf{\Phi x})_i|^2 = \frac{K}{m}$. From the definition of SNR, we have $\mathbb{E}|v_i|^2 = \mathbb{E}|(\mathbf{\Phi x})_i|^2 \cdot 10^{- \frac{\text{SNR}}{10}} = \frac{K}{m} 10^{- \frac{\text{SNR}}{10}}$.} The benchmark performance of Oracle least squares estimator (Oracle-LS), the best possible estimation having prior knowledge on the support of input signals, is plotted as well. In general, we observe that for all reconstruction methods, the MSE performance improves with the SNR.
For the whole SNR region under test, the MSE performance of MOLS is very competitive compared to that of other methods. In particular,  in the high SNR region the MSE of MOLS matches with that of the Oracle-LS estimator. This is essentially because the MOLS algorithm detects all support indices of sparse signals successfully in that SNR region. An interesting point we would like to mention
is that the actual recovery error of MOLS may be much smaller than
indicated in Theorem \ref{thm:noi}. Consider MOLS ($L = 5$) for
example. When SNR = $20$dB, $K = 20$, and ${v}_j \sim \mathcal{N}(0,
\frac{K}{m} 10^{- \frac{\text{SNR}}{10}})$, we have $\mathbb{E}
\|\mathbf{v}\|_2 = ({K} \cdot 10^{- \frac{\text{SNR}}{10}})^{1/2} =
\frac{\sqrt 5}{5}$. Thus, by assuming small isometry constants we
obtain from Theorem~\ref{thm:noi} that $\|\mathbf{x} -
\hat{\mathbf{x}}\|_2 \leq \frac{4 \sqrt 5}{5}$.\footnote{In this
case, we can verify that condition \eqref{eq:jjjjffffaa} in
Theorem~\ref{thm:noi} is fulfilled. To be specific, since sparse
$2$-PAM signals have $\kappa = 1$ and ${snr} =
10^{\frac{\text{SNR}}{10}} = 100$, when the measurement matrix
$\mathbf{\Phi}$ has small isometry constants, \eqref{eq:jjjjffffaa}
roughly becomes $\sqrt{100} \geq \frac{\sqrt 5 + 1}{\sqrt 5} \cdot
\sqrt{20}$, which is true.} Whereas, the $\ell_2$-norm of the actual
recovery error of MOLS is $\|\mathbf{x} - \hat{\mathbf{x}}\|_2 = (n
\cdot \text{MSE})^{1/2} \approx 0.2$ (Fig.~\ref{fig:subfig20}),
which is much smaller. The gap between the theoretical and empirical
results is perhaps due to 1) that our analysis is based on the RIP
framework and hence is essentially the worst-case-analysis, and 2)
that some inequalities (relaxations) used in our analysis may not be
tight.

\section{Conclusion} \label{sec:VII}

In this paper, we have studied a sparse recovery algorithm called
MOLS, which extends the conventional OLS algorithm by allowing
multiple candidates entering the list in each selection. Our method
is inspired by the fact that ``sub-optimal'' candidates in each of
the OLS identification are likely to be reliable and can be selected
to accelerate the convergence of the algorithm. We have demonstrated
by RIP analysis that MOLS ($L > 1$) performs exact recovery of any
$K$-sparse signal within $K$ iterations if $\delta_{LK} \leq
\frac{\sqrt L}{\sqrt K + 2 \sqrt{L}}$. In particular, for the
special case of MOLS when $L = 1$ (i.e., the OLS case), we have
shown that any $K$-sparse signal can be exactly recovered in $K$
iterations under $\delta_{K + 1} \leq \frac{1}{\sqrt K + 2}$, which
is a nearly optimal condition for the OLS algorithm. We have also
extended our analysis to the noisy scenario. Our result showed that
stable recovery of sparse signals can be achieved with MOLS when the
SNR has a linear scaling in the sparsity level of signals to be
recovered. In particular, for the case of OLS, we demonstrated from
a counterexample that the linear scaling law for the SNR is
essentially necessary. In addition, we have shown from empirical
experiments that the MOLS algorithm has lower computational cost
than the conventional OLS algorithm, while exhibiting improved
recovery accuracy. The empirical recovery performance of MOLS is
also competitive when compared to the state of the art recovery
methods.

\section*{Acknowledgement}
This work is  supported in part by ONR-N00014-13-1-0764, NSF-III-1360971, AFOSR-FA9550-13-1-0137, and NSF-Bigdata-1419210.

\appendices
\numberwithin{equation}{section}
\newcounter{mytempthcnt}
\setcounter{mytempthcnt}{\value{theorem}}
\setcounter{theorem}{2}

\section{Proof of Lemma~\ref{lem:rips2}} \label{app:new}

\begin{proof}
We focus on the proof for the upper bound. Since $\delta_{| \mathcal{S} |} <1$, $\mathbf{\Phi}_\mathcal{S}$ has full column
  rank. Suppose that $\mathbf{\Phi}_\mathcal{S}$ has singular value decomposition
  $\mathbf{\Phi}_\mathcal{S} = \mathbf{U} \mathbf{\Sigma} \mathbf{V}'$.
  Then from the definition of RIP, the minimum diagonal
  entries of $\mathbf{\Sigma}$ satisfies $\sigma_{\min} \geq \sqrt{1 - \delta_{| \mathcal{S}|}}.$ Note that
\begin{eqnarray}
  (\mathbf{\Phi}_\mathcal{S}^\dag)' & = & ((\mathbf{\Phi}'_\mathcal{S} \mathbf{\Phi}_\mathcal{S})^{-1} \mathbf{\Phi}'_\mathcal{S})' \nonumber \\
  & = & \mathbf{U} \mathbf{\Sigma} \mathbf{V}' ((\mathbf{U} \mathbf{\Sigma} \mathbf{V}')' \mathbf{U} \mathbf{\Sigma} \mathbf{V}')^{-1} = \mathbf{U} \mathbf{\Sigma}^{-1} \mathbf{V}',~~~~~~
\end{eqnarray}
  where
  $\mathbf{\Sigma}^{-1}$  is the diagonal matrix formed by replacing every (non-zero) diagonal
  entry of $\mathbf{\Sigma}$ by its reciprocal. Hence, all singular values
  of $(\mathbf{\Phi}_\mathcal{S}^\dag)'$ are upper bounded by
  $\frac{1}{\sigma_{\min}} = \frac{1}{\sqrt{1 - \delta_{| \mathcal{S} |}}}$, which competes the proof.
\end{proof}

\section{Proof of Proposition~\ref{prop:p1}} \label{app:p1}

\begin{proof}
Since $\mathbf{P}^{\bot}_{\mathcal{T}^{k} \cup \{i\}} \mathbf{y}$
and $\mathbf{P}_{\mathcal{T}^{k} \cup \{i\}} \mathbf{y}$ are
orthogonal,
\begin{equation}\notag
  \| \mathbf{P}^{\bot}_{\mathcal{T}^{k} \cup \{i\}} \mathbf{y} \|_{2}^{2}
  = \| \mathbf{y} \|_{2}^{2} - \| \mathbf{P}_{\mathcal{T}^{k} \cup \{i\}}
  \mathbf{y} \|_{2}^{2},
\end{equation}
and hence \eqref{eq:golsrulej} is equivalent to
\begin{equation}
  \label{eq:golsrule} \mathcal{S}^{k+1} = \arg \max_{\mathcal{S} : | \mathcal{S} | =L}   \sum_{i \in
  \mathcal{S}} \| \mathbf{P}_{\mathcal{T}^{k} \cup \{i\}} \mathbf{y} \|_{2}^{2}.
\end{equation}
By noting that $\mathbf{P}_{\mathcal{T}^{k}} +
  \mathbf{P}^{\bot}_{\mathcal{T}^{k}} = \mathbf{I}$, we have
\begin{eqnarray}
 \mathbf{P}_{\mathcal{T}^{k} \cup \{i\}}
  \hspace{-3mm}&=& \hspace{-3mm} \mathbf{P}_{\mathcal{T}^{k}} \mathbf{P}_{\mathcal{T}^{k} \cup \{i\}} +
  \mathbf{P}^{\bot}_{\mathcal{T}^{k}} \mathbf{P}_{\mathcal{T}^{k} \cup \{i\}}
  \nonumber\\
  \hspace{-3mm}&=& \hspace{-3mm} \mathbf{P}_{\mathcal{T}^{k}} +
  \mathbf{P}^{\bot}_{\mathcal{T}^{k}} \mathbf{P}_{\mathcal{T}^{k} \cup \{i\}}
  \nonumber \\
   \hspace{-3mm}&=& \hspace{-3mm} \mathbf{P}_{\mathcal{T}^{k}} \hspace{-.5mm}  + \hspace{-.5mm}
  \mathbf{P}^{\bot}_{\mathcal{T}^{k}} [\mathbf{\Phi}_{\mathcal{T}^k}, \phi_i] \hspace{-1mm} \left( \begin{bmatrix}
   \mathbf{\Phi}_{\mathcal{T}^k}' \\ \phi_i'
 \end{bmatrix}
 \hspace{-1mm}
 [\mathbf{\Phi}_{\mathcal{T}^k},\phi_i] \right)^{\hspace{-1mm} -1} \hspace{-1mm}
 \begin{bmatrix}
   \mathbf{\Phi}_{\mathcal{T}^k}' \\ \phi_i'
 \end{bmatrix}   \nonumber \\
  \hspace{-3mm}&=& \hspace{-3mm} \mathbf{P}_{\mathcal{T}^{k}} \hspace{-.5mm} + \hspace{-.5mm}
 \begin{bmatrix}
   \mathbf{0} \hspace{-1mm} & \mathbf{P}^{\bot}_{\mathcal{T}^{k}} \phi_i
 \end{bmatrix}\hspace{-1mm}
 \begin{bmatrix}
   \mathbf{\Phi}_{\mathcal{T}^k}' \mathbf{\Phi}_{\mathcal{T}^k} & \mathbf{\Phi}_{\mathcal{T}^k}' \phi_i \\\notag
   \phi_i' \mathbf{\Phi}_{\mathcal{T}^k} & \phi_i' \phi_i
 \end{bmatrix}^{\hspace{-.5mm}-1} \hspace{-1mm}
 \begin{bmatrix}
   \mathbf{\Phi}_{\mathcal{T}^k}' \\ \phi_i'
 \end{bmatrix}   \nonumber \\
 \hspace{-3mm} &\overset{(a)}{=}& \hspace{-3mm} \mathbf{P}_{\mathcal{T}^{k}} +
 \begin{bmatrix}
   \mathbf{0} \hspace{-1mm} & \mathbf{P}^{\bot}_{\mathcal{T}^{k}} \phi_i
 \end{bmatrix}
 \begin{bmatrix}
 \mathbf{M}_1 & \mathbf{M}_2 \\
 \mathbf{M}_3 & \mathbf{M}_4
 \end{bmatrix}
 \begin{bmatrix}
   \mathbf{\Phi}'_{\mathcal{T}^k} \\ \phi_i'
 \end{bmatrix}  \nonumber \\
   \hspace{-3mm}&=& \hspace{-3mm}
 \mathbf{P}_{\mathcal{T}^{k}} +
  \mathbf{P}^{\bot}_{\mathcal{T}^{k}} \phi_{i} ( \phi'_{i} \mathbf{P}^{\bot}_{\mathcal{T}^{k}}
  \phi_{i} )^{-1} \phi'_{i} \mathbf{P}^{\bot}_{\mathcal{T}^{k}},
\end{eqnarray}
where $(a)$ is from the partitioned inverse formula and
\begin{eqnarray}
  \mathbf{M}_1 &=& (\mathbf{\Phi}'_{\mathcal{T}^k} \mathbf{P}_i^\bot
  \mathbf{\Phi}_{\mathcal{T}^k})^{-1}, \nonumber \\
  \mathbf{M}_2 &=& -(\mathbf{\Phi}'_{\mathcal{T}^k} \mathbf{P}^\bot_i \mathbf{\Phi}_{\mathcal{T}^k})^{-1} \mathbf{\Phi}'_{\mathcal{T}^k} \phi_i (\phi_i'
  \phi_i)^{-1}, \nonumber \\
  \mathbf{M}_3 &=& -(\mathbf{\phi}'_{i} \mathbf{P}^\bot_{\mathcal{T}^k} \mathbf{\phi}_{i})^{-1} \mathbf{\phi}'_{i} \mathbf{\Phi}'_{\mathcal{T}^k} (\mathbf{\Phi}'_{\mathcal{T}^k}  \mathbf{\Phi}_{\mathcal{T}^k})^{-1}, \nonumber \\
  \mathbf{M}_4 &=& ({\phi}'_{i} \mathbf{P}_{\mathcal{T}^k}^\bot
  {\phi}_{i})^{-1}. \label{eq:48me}
\end{eqnarray}
This implies that
\begin{eqnarray}
 \| \mathbf{P}_{\mathcal{T}^{k} \cup \{i\}} \mathbf{y} \|_{2}^{2}
  \hspace{-3mm} &=& \hspace{-3mm} \| \mathbf{P}_{\mathcal{T}^{k}} \mathbf{y} + \mathbf{P}^{\bot}_{\mathcal{T}^{k}}
  \phi_{i} ( \phi'_{i} \mathbf{P}^{\bot}_{\mathcal{T}^{k}} \phi_{i} )^{-1} \phi'_{i}
  \mathbf{P}^{\bot}_{\mathcal{T}^{k}} \mathbf{y} \|_{2}^{2} \nonumber \\
 \hspace{-3mm} & \overset{(a)}{=} & \hspace{-3mm} \| \mathbf{P}_{\mathcal{T}^{k}} \mathbf{y} \|_{2}^{2} \hspace{-.75mm} + \hspace{-.75mm} \|
  \mathbf{P}^{\bot}_{\mathcal{T}^{k}} \phi_{i} ( \phi'_{i} \mathbf{P}^{\bot}_{\mathcal{T}^{k}}
  \phi_{i} )^{-1} \phi'_{i} \mathbf{P}^{\bot}_{\mathcal{T}^{k}} \mathbf{y}
  \|_{2}^{2} \nonumber\\
  \hspace{-3mm} & \overset{(b)}{=} & \hspace{-3mm} \| \mathbf{P}_{\mathcal{T}^{k}} \mathbf{y} \|_{2}^{2} + \frac{| \phi'_{i} \mathbf{P}^{\bot}_{\mathcal{T}^{k}} \mathbf{y} |^{2} \|
  \mathbf{P}^{\bot}_{\mathcal{T}^{k}} \phi_{i} \|^{2}_{2}}{|\phi'_{i} \mathbf{P}^{\bot}_{\mathcal{T}^{k}} \phi_{i}
  |^{2}} \nonumber\\
  \hspace{-3mm} & \overset{(c)}{=} & \hspace{-3mm} \| \mathbf{P}_{\mathcal{T}^{k}} \mathbf{y} \|_{2}^{2} + \frac{| \phi'_{i} \mathbf{P}^{\bot}_{\mathcal{T}^{k}} \mathbf{y} |^{2} }{\|
  \mathbf{P}^{\bot}_{\mathcal{T}^{k}} \phi_{i} \|^{2}_{2}} \nonumber \\
 \hspace{-3mm} & \overset{(d)}{=} & \hspace{-3mm} \| \mathbf{P}_{\mathcal{T}^{k}} \mathbf{y} \|_{2}^{2} + \left( \frac{| \langle \phi_{i}, \mathbf{r}^k
\rangle |}{\| \mathbf{P}^{\bot}_{\mathcal{T}^{k}} \phi_{i} \|_{2}}
\right)^2, \label{eq:decop2}
\end{eqnarray}
where (a) is because $\mathbf{P}_{\mathcal{T}^{k}} \mathbf{y}$ and
$\mathbf{P}^{\bot}_{\mathcal{T}^{k}} \phi_{i} ( \phi'_{i}
\mathbf{P}^{\bot}_{\mathcal{T}^{k}} \phi_{i} )^{-1} \phi'_{i}
\mathbf{P}^{\bot}_{\mathcal{T}^{k}} \mathbf{y}$ are orthogonal, (b)
follows from that fact that $\phi'_{i}
\mathbf{P}^{\bot}_{\mathcal{T}^{k}} \mathbf{y}$ and $\phi'_{i}
\mathbf{P}^{\bot}_{\mathcal{T}^{k}} \phi_i$ are scalars, (c) is from
\begin{equation}
\mathbf{P}^{\bot}_{\mathcal{T}^{k}} = (
\mathbf{P}^{\bot}_{\mathcal{T}^{k}} )^{2} = (
\mathbf{P}^{\bot}_{\mathcal{T}^{k}} )'\label{eq:Pbot}
\end{equation} and hence $|\phi'_{i}
\mathbf{P}^{\bot}_{\mathcal{T}^{k}} \phi_{i}| = |\phi'_{i}
(\mathbf{P}^{\bot}_{\mathcal{T}^{k}})'
\mathbf{P}^{\bot}_{\mathcal{T}^{k}} \phi_{i}| = \|
\mathbf{P}^{\bot}_{\mathcal{T}^{k}} \phi_{i} \|_2^2,$ and (d) is
due to $\mathbf{r}^k = \mathbf{P}^{\bot}_{\mathcal{T}^{k}}
\mathbf{y}$.

By relating \eqref{eq:golsrule} and {\eqref{eq:decop2}}, we have
\begin{equation}\notag
  \mathcal{S}^{k + 1} = \arg \max_{\mathcal{S} : | \mathcal{S} | =L}   \sum_{i \in \mathcal{S}}
\frac{| \langle \phi_{i}, \mathbf{r}^k \rangle |}{\|
\mathbf{P}^{\bot}_{\mathcal{T}^{k}} \phi_{i} \|_{2}}.
\end{equation}
Furthermore, if we write
$|\langle \phi_{i}, \mathbf{r}^k \rangle| = |\phi'_{i}
(\mathbf{P}^{\bot}_{\mathcal{T}^{k}})'
\mathbf{P}^{\bot}_{\mathcal{T}^{k}} \mathbf{y}| = |\langle
\mathbf{P}^{\bot}_{\mathcal{T}^{k}} \phi_{i}, \mathbf{r}^k
\rangle|,$ then \eqref{eq:golsrule11111} becomes
\begin{equation}\notag
  \mathcal{S}^{k + 1} = \arg \max_{\mathcal{S} : | \mathcal{S} | =L}   \sum_{i \in \mathcal{S}}
\bigg| \bigg\langle \frac{ \mathbf{P}^{\bot}_{\mathcal{T}^{k}}
\phi_{i} }{\| \mathbf{P}^{\bot}_{\mathcal{T}^{k}} \phi_{i} \|_{2}},
\mathbf{r}^k \bigg\rangle \bigg|.
\end{equation}
This completes the proof.
\end{proof}

\section{Proof of Proposition~\ref{lem:upperbound}}\label{app:upperbound}

\begin{proof} We first prove {\eqref{eq:small}} and then prove \eqref{eq:large}.

{\it 1) Proof of {\eqref{eq:small}}}:
    Since $u_{1}$ is the largest value of $\big \{\frac{| \langle \phi_{i},
    \mathbf{r}^{k} \rangle |}{\| \mathbf{P}^{\bot}_{\mathcal{T}^{k}} \phi_{i} \|_{2}} \big \}_{i \in \mathcal{T} \backslash \mathcal{T}^k}$, we have
    \begin{eqnarray}
u_{1}
     \hspace{-3mm} & {\geq} & \hspace{-3mm} \sqrt{\frac{1}{|\mathcal{T} \backslash \mathcal{T}^k |}  \sum_{i \in \mathcal{T} \backslash \mathcal{T}^k } \frac{| \langle \phi_{i}, \mathbf{r}^{k} \rangle |^2}{\|
      \mathbf{P}^{\bot}_{\mathcal{T}^{k}} \phi_{i} \|_{2}^2}} \nonumber\\
     \hspace{-3mm} & \overset{(a)}{\geq} & \hspace{-3mm} \frac{1}{\sqrt{|\mathcal{T} \backslash \mathcal{T}^k |}}  \sqrt{\sum_{i \in \mathcal{T} \backslash \mathcal{T}^k } |
      \langle \phi_{i}, \mathbf{r}^{k} \rangle |^2} \nonumber \\
     \hspace{-3mm} & {=} & \hspace{-3mm} \frac{\| \mathbf{\Phi}'_{\mathcal{T} \backslash \mathcal{T}^k }
      \mathbf{r}^{k} \|_{2} }{\sqrt{K - \ell}} = \frac{\| \mathbf{\Phi}'_{\mathcal{T} \backslash \mathcal{T}^k} \mathbf{P}^\bot_{\mathcal{T}^k} \mathbf{\Phi} \mathbf{x} \|_2}{\sqrt{K - \ell}},  \label{eq:nv}
    \end{eqnarray}
    where (a) is because $\|\mathbf{P}^\bot_{\mathcal{T}^k} \phi_i\|_2 \leq \|\phi_i\|_2 = 1$.
Observe that
\begin{eqnarray}
\| \mathbf{\Phi}'_{\mathcal{T} \backslash \mathcal{T}^k} \mathbf{P}^\bot_{\mathcal{T}^k} \mathbf{\Phi} \mathbf{x} \|_2
     \hspace{-3mm} & \overset{(a)}{=} & \hspace{-3mm}
  \| \mathbf{\Phi}'_{\mathcal{T} \backslash \mathcal{T}^k} \mathbf{P}^\bot_{\mathcal{T}^k} \mathbf{\Phi}_{\mathcal{T} \backslash \mathcal{T}^k} \mathbf{x}_{\mathcal{T} \backslash \mathcal{T}^k}\|_2 \nonumber \\
  \hspace{-3mm}& \overset{(b)}{\geq} &\hspace{-3mm}
  \frac{\|  \mathbf{x}'_{\mathcal{T} \backslash \mathcal{T}^k} \mathbf{\Phi}'_{\mathcal{T} \backslash \mathcal{T}^k} \mathbf{P}^\bot_{\mathcal{T}^k} \mathbf{\Phi}_{\mathcal{T} \backslash \mathcal{T}^k} \mathbf{x}_{\mathcal{T} \backslash \mathcal{T}^k}\|_2}{\| \mathbf{x}'_{\mathcal{T} \backslash \mathcal{T}^k}\|_2} \nonumber \\
 \hspace{-3mm} & \overset{(c)}{=} &\hspace{-3mm}
  \frac{\|  \mathbf{P}^\bot_{\mathcal{T}^k} \mathbf{\Phi}_{\mathcal{T} \backslash \mathcal{T}^k} \mathbf{x}_{\mathcal{T} \backslash \mathcal{T}^k}\|_2^2}{\| \mathbf{x}_{\mathcal{T} \backslash \mathcal{T}^k}\|_2} \nonumber \\
 \hspace{-3mm} & \overset{(d)}{\geq} & \hspace{-3mm} \lambda_{\min} ( \mathbf{\Phi}'_{\mathcal{T} \backslash \mathcal{T}^k} \mathbf{P}^\bot_{\mathcal{T}^k} \mathbf{\Phi}_{\mathcal{T} \backslash \mathcal{T}^k} )  \| \mathbf{x}_{\mathcal{T} \backslash \mathcal{T}^k} \|_{2} \nonumber \\
 \hspace{-3mm} & \overset{(e)}{\geq} & \hspace{-3mm}\lambda_{\min} ({ \mathbf{\Phi}'_{\mathcal{T}
      \cup \mathcal{T}^{k}} \mathbf{\Phi}_{\mathcal{T} \cup \mathcal{T}^{k}}}) \| \mathbf{x}_{\mathcal{T} \backslash \mathcal{T}^k} \|_{2} \nonumber\\
 \hspace{-3mm} & \overset{(f)}{\geq} & \hspace{-3mm}(1 - \delta_{K + Lk - \ell})\| \mathbf{x} _{\mathcal{T} \backslash \mathcal{T}^k} \|_{2},
      \label{eq:geaig1aa1}
    \end{eqnarray}
    where  (a) is because $\mathbf{P}^\bot_{\mathcal{T}^k} \mathbf{\Phi}_{\mathcal{T}^k} = \mathbf{0}$, (b) is from the norm inequality, (c) and (d) are from \eqref{eq:Pbot}, (e) is from Lemma~\ref{lem:rip6}, and (f) is from the RIP. (Note that $|\mathcal{T} \cup \mathcal{T}^{k}| = |\mathcal{T}| + |\mathcal{T}^{k}| - |\mathcal{T} \backslash \mathcal{T}^{k}| = K + Lk - \ell$.)

Using \eqref{eq:nv} and \eqref{eq:geaig1aa1}, we obtain {\eqref{eq:small}}.

\vspace{1mm}

{\it 2) Proof of {\eqref{eq:large}}}:
    Let $\mathcal{F}$ be the index set corresponding to $L$ largest elements in $\big \{\frac{|
    \langle \phi_{i}, \mathbf{r}^{k} \rangle |}{\| \mathbf{P}^{\bot}_{\mathcal{T}^{k}}
    \phi_{i} \|_{2}} \big\}_{i \in \Omega \setminus (\mathcal{T} \cup \mathcal{T}^{k} )}$. Then,
    \begin{eqnarray}
 \left( \sum_{i \in \mathcal{F}} \frac{| \langle \phi_{i},
      \mathbf{r}^{k} \rangle |^{2}}{\| \mathbf{P}^{\bot}_{\mathcal{T}^{k}} \phi_{i}
      \|_{2}^{2}} \right)^{\hspace{-1.5mm} 1/2}
      \hspace{-3mm}& \leq &\hspace{-3mm} \left( \frac{\sum_{i \in \mathcal{F}} | \langle \phi_{i}, \mathbf{r}^{k}
      \rangle |^{2}}{\min_{i \in \mathcal{F}} \|
      \mathbf{P}^{\bot}_{\mathcal{T}^{k}} \phi_{i} \|_{2}^{2}} \right)^{1/2} \nonumber\\
      \hspace{-3mm}& = &\hspace{-3mm} \left( \frac{\sum_{i \in \mathcal{F}} | \langle \phi_{i}, \mathbf{r}^{k}
      \rangle |^{2}}{1- \max_{i \in \mathcal{F}} \| \mathbf{P}_{\mathcal{T}^{k}}
      \phi_{i} \|_{2}^{2}} \right)^{1/2} \nonumber\\
      \nonumber\\
      \hspace{-3mm}& \overset{(a)}{\leq} &\hspace{-3mm} \left({1- \frac{\delta_{Lk+1}^{2}}{1-
      \delta_{Lk}}} \right)^{\hspace{-1mm} -1/2} \| \mathbf{\Phi}'_{\mathcal{F}} \mathbf{r}^{k} \|_{2},~~~~~~\label{eq:8700}
      \end{eqnarray}
  where (a) is because for any $i \notin \mathcal{T}^k$,
    \begin{eqnarray}
      \| \mathbf{P}_{\mathcal{T}^{k}}
      \phi_{i} \|_{2}^{2} &=&  \|( \mathbf{\Phi}_{\mathcal{T}^{k}}^{\dag} )'  \mathbf{\Phi}'_{\mathcal{T}^{k}} \phi_{i}
      \|_{2}^{2}  \nonumber \\
      & \overset{\text{Lemma}~\ref{lem:rips2}}{\leq} & \frac{\| \mathbf{\Phi}'_{\mathcal{T}^{k}}
      \phi_{i} \|_{2}^{2} }{1- \delta_{Lk}}  \nonumber\\
      & \overset{\text{Lemma}~\ref{lem:correlationrip} }{\leq} & \frac{\delta_{Lk+1}^{2}}{1- \delta_{Lk}}, \label{eq:C4}
    \end{eqnarray}
      By noting that $\mathbf{r}^{k} = \mathbf{y} - \mathbf{\Phi x}^{k} =
    \mathbf{y} - \mathbf{\Phi}_{\mathcal{T}^{k}}  \mathbf{\Phi}^{\dag}_{\mathcal{T}^{k}}
    \mathbf{y} = \mathbf{y} - \mathbf{P}_{\mathcal{T}^{k}} \mathbf{y} =
    \mathbf{P}^{\bot}_{\mathcal{T}^{k}} \mathbf{y} = \mathbf{P}^{\bot}_{\mathcal{T}^{k}}
    \mathbf{\Phi}_{\mathcal{T} \setminus \mathcal{T}^{k}} \mathbf{x}_{\mathcal{T} \setminus \mathcal{T}^{k}}$, we have
      \begin{eqnarray}
      &\lefteqn{\left( \sum_{i \in \mathcal{F}} \frac{| \langle \phi_{i},
      \mathbf{r}^{k} \rangle |^{2}}{\| \mathbf{P}^{\bot}_{\mathcal{T}^{k}} \phi_{i}
      \|_{2}^{2}} \right)^{1/2}} \nonumber \\
      & {=} & \left( \frac{1- \delta_{Lk}}{1- \delta_{Lk} -
      \delta_{Lk+1}^{2}} \right)^{1/2} \left\| \mathbf{\Phi}_{\mathcal{F}}'
      \mathbf{P}_{\mathcal{T}^{k}}^{\bot} \mathbf{\Phi}_{\mathcal{T} \setminus \mathcal{T}^{k}}
      \mathbf{x}_{\mathcal{T} \setminus \mathcal{T}^{k}} \right\|_{2} \nonumber\\
      & \leq & \left( \frac{1- \delta_{Lk}}{1- \delta_{Lk} -
      \delta_{Lk+1}^{2}} \right)^{1/2}  \left( \left\| \mathbf{\Phi}_{\mathcal{F}}'
      \mathbf{\Phi}_{\mathcal{T} \setminus \mathcal{T}^{k}} \mathbf{x}_{\mathcal{T} \setminus \mathcal{T}^{k}}
      \right\|_{2} \right.\nonumber \\
      & & \left. + \left\| \mathbf{\Phi}_{\mathcal{F}}' \mathbf{P}_{\mathcal{T}^{k}}
      \mathbf{\Phi}_{\mathcal{T} \setminus \mathcal{T}^{k}} \mathbf{x}_{\mathcal{T} \setminus \mathcal{T}^{k}}
      \right\|_{2} \right). \label{eq:54}
    \end{eqnarray}

     Since
    $\mathcal{F}$ and $\mathcal{T} \setminus \mathcal{T}^{k}$ are disjoint (i.e., $\mathcal{F} \cap (\mathcal{T} \setminus \mathcal{T}^{k}
    ) = \emptyset$), and also noting that $\mathcal{T}
    \cap \mathcal{T}^{k} = \ell$ by hypothesis, we have $|\mathcal{F}| + |\mathcal{T} \setminus \mathcal{T}^{k} | =L+K- \ell$. Using this together with Lemma~\ref{lem:correlationrip}, we have
\begin{equation}
          \left\| \mathbf{\Phi}_{\mathcal{F}}'  \mathbf{\Phi}_{\mathcal{T} \setminus \mathcal{T}^{k}}
      \mathbf{x}_{\mathcal{T} \setminus \mathcal{T}^{k}} \right\|_{2} \leq \delta_{L+K- \ell}
      \left\| \mathbf{x}_{\mathcal{T} \setminus \mathcal{T}^{k}} \right\|_{2}.  \label{eq:j1}
    \end{equation}
    Moreover, since $\mathcal{F} \cap \mathcal{T}^{k} = \emptyset$ and $|\mathcal{F}| + |\mathcal{T}^{k} | =L+Lk$,
    \begin{eqnarray}
      \lefteqn{\left\| \mathbf{\Phi}_{\mathcal{F}}' \mathbf{P}_{\mathcal{T}^{k}} \mathbf{\Phi}_{\mathcal{T} \setminus
      \mathcal{T}^{k}} \mathbf{x}_{\mathcal{T} \setminus \mathcal{T}^{k}} \right\|_{2}} \nonumber \\
      & \leq & \delta_{L+Lk}
      \left\| \mathbf{\Phi}_{\mathcal{T}^{k}}^{\dag}  \mathbf{\Phi}_{\mathcal{T} \setminus \mathcal{T}^{k}}
      \mathbf{x}_{\mathcal{T} \setminus \mathcal{T}^{k}} \right\|_{2} \label{eq:j2} \nonumber\\
      & = & \delta_{L+Lk}  \left\| ( \mathbf{\Phi}_{\mathcal{T}^{k}}'
      \mathbf{\Phi}_{\mathcal{T}^{k}} )^{-1}  \mathbf{\Phi}_{\mathcal{T}^{k}}'  \mathbf{\Phi}_{\mathcal{T}
      \setminus \mathcal{T}^{k}} \mathbf{x}_{\mathcal{T} \setminus \mathcal{T}^{k}} \right\|_{2} \nonumber\\
      & \overset{\text{Lemma}~\ref{lem:rips}}{\leq} & \label{eq:ghg1} \frac{\delta_{L+Lk}}{1-
      \delta_{Lk}}  \left\| \mathbf{\Phi}_{\mathcal{T}^{k}}'  \mathbf{\Phi}_{\mathcal{T} \setminus
      \mathcal{T}^{k}} \mathbf{x}_{\mathcal{T} \setminus \mathcal{T}^{k}} \right\|_{2} \nonumber\\
      & \overset{\text{Lemma}~\ref{lem:correlationrip}}{\leq} & \frac{\delta_{L+Lk} \delta_{Lk+K- \ell}}{1-
      \delta_{Lk}} \left\| \mathbf{x}_{\mathcal{T} \setminus \mathcal{T}^{k}} \right\|_{2}.
      \label{eq:ghg2}
    \end{eqnarray}
where in the last inequality we have used the fact that $|
\mathcal{T}^{k} \cup ( \mathcal{T} \setminus
    \mathcal{T}^{k} ) | =Lk+K- \ell$. (Note that $\mathcal{T}^{k}$ and
$\mathcal{T} \setminus \mathcal{T}^{k}$ are
    disjoint and $|\mathcal{T} \setminus \mathcal{T}^{k} | =K-
    \ell$.)

    Invoking (\ref{eq:j1}) and (\ref{eq:ghg2}) into (\ref{eq:54}), we have
    \begin{eqnarray}
      \lefteqn{\left( \sum_{i \in \mathcal{F}} \frac{| \langle \phi_{i}, \mathbf{r}^{k} \rangle
      |^{2}}{\| \mathbf{P}^{\bot}_{\mathcal{T}^{k}} \phi_{i} \|_{2}^{2}}
      \right)^{1/2} \leq  \left( \frac{1- \delta_{Lk}}{1- \delta_{Lk} - \delta_{Lk+1}^{2}}
      \right)^{1/2} } \nonumber \\
      & & \times \left( \delta_{L+K- \ell} + \frac{\delta_{L+Lk}
      \delta_{Lk+K- \ell}}{1- \delta_{Lk}} \right) \left\| \mathbf{x} _{\mathcal{T} \setminus \mathcal{T}^{k}} \right\|_{2}.~~~~~  \label{eq:left}
    \end{eqnarray}
    On the other hand, since $v_{L}$ is the $L$-th largest value in
    $\big \{\frac{| \langle \phi_{i}, \mathbf{r}^{k} \rangle |}{\|
    \mathbf{P}^{\bot}_{\mathcal{T}^{k}} \phi_{i} \|_{2}} \big \}_{i \in \mathcal{F}}$, we have
    \begin{eqnarray}
      \left( \sum_{i \in \mathcal{F}} \frac{| \langle \phi_{i}, \mathbf{r}^{k} \rangle
      |^{2}}{\| \mathbf{P}^{\bot}_{\mathcal{T}^{k}} \phi_{i} \|_{2}^{2}} \right)^{1/2}
      \geq \sqrt{L} v_{L},  \label{eq:right00}
    \end{eqnarray}
which, together with {\eqref{eq:left}}, implies \eqref{eq:large}.

\end{proof}

\section{Proof of \eqref{eq:k+1}} \label{app:cond}

\begin{proof}
Observe that \eqref{eq:sufficientommp4} is equivalent to
  \begin{eqnarray}
    \label{eq:sufficientommp5}
    \sqrt{\frac{K - \ell}{L}} < \frac{(1-\delta_{LK})^{3/2}( 1- \delta_{LK} - \delta_{LK}^{2}
    )^{1/2}}{\delta_{LK}}.
  \end{eqnarray}
Let $f(\delta_{LK}) = \frac{(1-\delta_{LK})^{3/2}( 1- \delta_{LK} - \delta_{LK}^{2}
    )^{1/2}}{\delta_{LK}}$ and
$g(\delta_{LK}) = \frac{1}{\delta_{LK}} - 2.$
Then one can check that $\forall \delta_{LK} \in (0,\frac{\sqrt 5 - 1}{2})$,
\begin{equation}\label{eq:fggg}
f(\delta_{LK}) > g(\delta_{LK}).
\end{equation}
Hence, \eqref{eq:sufficientommp5}
is ensured by $
    \sqrt{\frac{K - \ell}{L}} < \frac{1}{\delta_{LK}} - 2$,
or equivalently,
  \begin{equation}
    \delta_{LK} < \frac{\sqrt{L}}{\sqrt{K - \ell} + 2 \sqrt{L}}.
    \label{eq:sufficientommp31}
  \end{equation}
Since $K - \ell < K$, \eqref{eq:sufficientommp31} is guaranteed by
\eqref{eq:k+1}.
\end{proof}

\section{Proof of Theorem~\ref{thm:8}}
\label{app:8}
\begin{proof}
We prove Theorem~\ref{thm:8} in two steps. First, we show that the residual
power difference of MOLS satisfies
\begin{equation} \label{eq:residual_A111}
  \hspace{-0mm}\|\mathbf{r}^k\|_2^2 \hspace{-.5mm} - \hspace{-.5mm} \|\mathbf{r}^{k + 1}\|_2^2 \hspace{-.5mm} \geq \hspace{-.5mm} \frac{1- \delta_{Lk} - \delta_{Lk+1}^{2}}{(1 + \delta_{L}) (1- \delta_{Lk})}  \max_{\mathcal{S} : | \mathcal{S} |
  =L} \hspace{-.5mm} \| \mathbf{\Phi}'_{\mathcal{S}} \mathbf{r}^{k}
  \|_{2}^{2}. \hspace{-2mm}
\end{equation}
In the second step, we show that
  \begin{equation} \label{eq:a111}
\max_{\mathcal{S} : | \mathcal{S} |
  =L} \| \mathbf{\Phi}'_{\mathcal{S}} \mathbf{r}^{k} \|_{2}^{2} \geq \frac{L (1 - \delta_{K + Lk})^2}{K (1 + \delta_{K + Lk})}  \|\mathbf{r}^k \|_{2}^{2}.
  \end{equation}
The theorem is established by combining \eqref{eq:residual_A111}
and \eqref{eq:a111}.

{1) \it Proof of \eqref{eq:residual_A111}}: First, from the definition of MOLS (see Table~\ref{tab:mols}), we have that for any integer $0 \leq k < K$,
\begin{eqnarray}
 \mathbf{r}^k - \mathbf{r}^{k + 1}
&\overset{(a)}{=}& \mathbf{P}_{\mathcal{T}^{k}} \mathbf{y} -
\mathbf{P}_{\mathcal{T}^{k + 1}} \mathbf{y} \nonumber \\
&\overset{(b)}{=}& (\mathbf{P}_{\mathcal{T}^{l + 1}} - \mathbf{P}_{\mathcal{T}^{l + 1}}
\mathbf{P}_{\mathcal{T}^{l}} )\mathbf{y} \nonumber \\ &=& \mathbf{P}_{\mathcal{T}^{l +
1}} (\mathbf{y} - \mathbf{P}_{\mathcal{T}^{l}}
\mathbf{y}) = \mathbf{P}_{\mathcal{T}^{l + 1}} \mathbf{r}^k,
\end{eqnarray}
where (a) is from that $\mathbf{x}^{k} = \underset{\mathbf{u}:\textit{supp}(\mathbf{u}) = \mathcal{T}^k}{\arg \min} \|\mathbf{y}-\mathbf{\Phi} \mathbf{u}\|_2$, and hence $\mathbf{\Phi} \mathbf{x}^{k} =  \mathbf{\Phi}_{\mathcal{T}^k} \mathbf{\Phi}^\dag_{\mathcal{T}^k} \mathbf{y} = \mathbf{P}_{\mathcal{T}^{l}} \mathbf{y}$, and (b) is because $\text{span}(\mathbf{\Phi}_{\mathcal{T}^{k}})
\subseteq \text{span}(\mathbf{\Phi}_{\mathcal{T}^{k + 1}})$ so that
$\mathbf{P}_{\mathcal{T}^k} \mathbf{y} = \mathbf{P}_{\mathcal{T}^{k + 1}}
(\mathbf{P}_{\mathcal{T}^{k}} \mathbf{y})$.
Since $\mathcal{T}^{k + 1} \supseteq \mathcal{S}^{k + 1}$, we have $\text{span}(\mathbf{\Phi}_{\mathcal{T}^{k + 1}}) \supseteq
\text{span}({\mathbf{\Phi}}_{S\mathcal{}^{k + 1}})$ and
\begin{eqnarray} \label{eq:residual_A}
  \|\mathbf{r}^k - \mathbf{r}^{k + 1}\|_2^2 = \|\mathbf{P}_{\mathcal{T}^{k + 1}}
 \mathbf{r}^k\|_2^2 \geq \|\mathbf{P}_{\mathcal{S}^{k + 1}}
 \mathbf{r}^k\|_2^2. \nonumber
\end{eqnarray}
By noting that $\|\mathbf{r}^k - \mathbf{r}^{k + 1}\|_2^2 =
\|\mathbf{r}^k\|_2^2 - \|\mathbf{r}^{k + 1}\|_2^2$, we have
\begin{eqnarray}
  \lefteqn{\|\mathbf{r}^k \|_2^2 - \|\mathbf{r}^{k + 1}\|_2^2} \nonumber \\
  &\geq& \|\mathbf{P}_{\mathcal{S}^{k + 1}}
  \mathbf{r}^k\|_2^2 \nonumber \\
&\overset{(a)}{\geq}& \|(\mathbf{\Phi}_{\mathcal{S}^{k + 1}}^\dag)' \mathbf{\Phi}'_{\mathcal{S}^{k + 1}} \mathbf{r}^k\|_2^2 \nonumber \\
&\overset{\text{Lemma}~\ref{lem:rips2}}{\geq}& \frac{\|\mathbf{\Phi}'_{\mathcal{S}^{k + 1}} \mathbf{r}^k\|_2^2}{1 + \delta_{|\mathcal{S}^{k + 1}|}} = \frac{\|\mathbf{\Phi}'_{\mathcal{S}^{k + 1}} \mathbf{r}^k\|_2^2}{1 + \delta_{L}},~~~~~~~
 \label{eq:mmmss4}
\end{eqnarray}
where (a) is because $\mathbf{P}_{\mathcal{S}^{k + 1}} = \mathbf{P}'_{\mathcal{S}^{k + 1}} = (
\mathbf{\Phi}_{\mathcal{S}^{k + 1}}^\dag)' \mathbf{\Phi}'_{\mathcal{S}^{k + 1}}$.

Next, we build a lower bound for $\|\mathbf{\Phi}'_{\mathcal{S}^{k + 1}} \mathbf{r}^k\|_2^2$. Denote $S^* = \arg \max_{\mathcal{S} : | \mathcal{S} |  =L} \| \mathbf{\Phi}'_{\mathcal{S}} \mathbf{r}^{k} \|_{2}^{2}$. Then,
\begin{eqnarray}
  \sum_{i \in \mathcal{S}^{k + 1}} \frac{| \langle \phi_{i}, \mathbf{r}^{k} \rangle
  |^{2}}{\| \mathbf{P}^{\bot}_{\mathcal{T}^{k}} \phi_{i} \|_{2}^{2}}
  &\hspace{-3mm} \overset{{\eqref{eq:golsrule11111}}}{=} &\hspace{-3mm} \max_{\mathcal{S} : | \mathcal{S} | =L}   \sum_{i \in \mathcal{S}}
\frac{| \langle \phi_{i}, \mathbf{r}^k \rangle |^2}{\|
\mathbf{P}^{\bot}_{\mathcal{T}^{k}} \phi_{i} \|_{2}^2} \nonumber \\
  &\hspace{-3mm}\geq & \hspace{-3mm} \sum_{i
  \in {\mathcal{S}}^{*}} \frac{| \langle \phi_{i}, \mathbf{r}^{k}
  \rangle |^{2}}{\| \mathbf{P}^{\bot}_{\mathcal{T}^{k}} \phi_{i} \|_{2}^{2}}
  \label{eq:comparewithgomp} \label{eq:comparewithgomp1} \nonumber \\
  &\hspace{-3mm} \overset{(a)}{\geq} & \hspace{-3mm} \sum_{i \in {\mathcal{S}}^{*}} | \langle \phi_{i},
  \mathbf{r}^{k} \rangle |^{2} \hspace{-1mm} = \hspace{-1mm} \max_{\mathcal{S} : | \mathcal{S} |  =L} \| \mathbf{\Phi}'_{\mathcal{S}} \mathbf{r}^{k} \|_{2}^{2},~~~~~~ \label{eq:righteq}
\end{eqnarray}
where (a) holds because $\phi_{i}$ has
unit $\ell_{2}$-norm and hence $\|
\mathbf{P}^{\bot}_{\mathcal{T}^{k}} \phi_{i} \|_{2} \leq 1$.

On the other hand,
\begin{eqnarray}
 \sum_{i \in \mathcal{S}^{k + 1}} \frac{| \langle \phi_{i}, \mathbf{r}^{k} \rangle|^{2}}{\| \mathbf{P}^{\bot}_{\mathcal{T}^{k}} \phi_{i} \|_{2}^{2}}
  &\hspace{-3mm}\leq & \hspace{-3mm} \frac{\sum_{i \in \mathcal{S}^{k + 1}} \hspace{-2mm} | \langle \phi_{i}, \mathbf{r}^{k} \rangle |^{2} }{\min_{i \in \mathcal{S}^{k + 1}} \|\mathbf{P}^{\bot}_{\mathcal{T}^{k}} \phi_{i} \|_{2}^{2}} \nonumber \\
  &\hspace{-3mm}= & \hspace{-3mm} \frac{\sum_{i \in \mathcal{S}^{k + 1}} \hspace{-2mm} | \langle \phi_{i}, \mathbf{r}^{k} \rangle |^{2}}{1\hspace{-.5mm} - \hspace{-.5mm} \max_{i \in \mathcal{S}^{k + 1}} \hspace{-1mm} \| \mathbf{P}_{\mathcal{T}^{k}} \phi_{i} \|_{2}^{2}} \nonumber \\
  &\hspace{-3mm}\overset{\eqref{eq:C4}}{\leq} & \hspace{-3mm} \left({1- \frac{\delta_{Lk+1}^{2}}{1- \delta_{Lk}}}
  \right)^{\hspace{-1mm}-1} \hspace{-1mm} \| \mathbf{\Phi}'_{\mathcal{S}^{k + 1}} \mathbf{r}^{k} \|_{2}^{2}.~~~~~ \label{eq:lefteq}
\end{eqnarray}
Combining {\eqref{eq:righteq}} and {\eqref{eq:lefteq}} yields
\begin{eqnarray}
  \| \mathbf{\Phi}'_{\mathcal{S}^{k + 1}} \mathbf{r}^{k} \|_{2}^{2}
  \geq \left( 1-
  \frac{\delta_{Lk+1}^{2}}{1- \delta_{Lk}} \right)
  \max_{\mathcal{S} : | \mathcal{S} |  =L} \| \mathbf{\Phi}'_{\mathcal{S}} \mathbf{r}^{k} \|_{2}^{2}. \label{eq:45}
\end{eqnarray}

Finally, using \eqref{eq:mmmss4} and \eqref{eq:45}, we obtain \eqref{eq:residual_A111}.

\vspace{1mm}

{\it 2) Proof of \eqref{eq:a111}}: Since $L \leq K$,
    \begin{eqnarray}
 \max_{\mathcal{S} : | \mathcal{S} |
  =L} \| \mathbf{\Phi}'_{\mathcal{S}} \mathbf{r}^{k} \|_{2}^{2}
  &\hspace{-2.5mm} {\geq} &\hspace{-3mm} \frac{L}{K}\|\mathbf{\Phi}'_{\mathcal{T}} \mathbf{r}^{k}\|_{2}^{2} \overset{(a)}{=}  \frac{L}{K}\| \mathbf{\Phi}'_{\mathcal{T} \cup \mathcal{T}^k}  \mathbf{r}^{k} \|_{2}^{2} \nonumber \\
  &\hspace{-3mm} = &\hspace{-3mm} \frac{L}{K} \| \mathbf{\Phi}'_{\mathcal{T} \cup \mathcal{T}^k} \mathbf{\Phi}_{\mathcal{T} \cup \mathcal{T}^k} (\mathbf{x} - \mathbf{x}^{k})_{\mathcal{T} \cup \mathcal{T}^k} \|_{2}^{2} \nonumber \\
   &\hspace{-3mm} \overset{\text{RIP}}{\geq} &\hspace{-3mm} \frac{L}{K} (1 - \delta_{K + Lk})^2 \|(\mathbf{x} - \mathbf{x}^{k})_{\mathcal{T} \cup \mathcal{T}^k} \|_{2}^{2} \nonumber \\
  &\hspace{-3mm} \overset{\text{RIP}}{\geq} &\hspace{-3mm} \frac{L (1 - \delta_{K + Lk})^2}{K (1 + \delta_{K + Lk})}  \|\mathbf{\Phi}_{\mathcal{T} \cup \mathcal{T}^k} (\mathbf{x} \hspace{-.51mm} - \hspace{-.51mm} \mathbf{x}^{k})_{\mathcal{T} \cup \mathcal{T}^k} \|_{2}^{2} \nonumber \hspace{-1mm} \\
  &\hspace{-3mm} = &\hspace{-3mm} \frac{L (1 - \delta_{K + Lk})^2}{K (1 + \delta_{K + Lk})}  \|\mathbf{r}^k \|_{2}^{2}, \label{eq:51j}
    \end{eqnarray}
where (a) is because $\mathbf{\Phi}'_{\mathcal{T}^k} \mathbf{r}^k = \mathbf{\Phi}'_{\mathcal{T}^k} (\mathbf{P}^\bot_{\mathcal{T}^k} \mathbf{y}) = \mathbf{\Phi}'_{\mathcal{T}^k} (\mathbf{P}^\bot_{\mathcal{T}^k})' \mathbf{y} = (\mathbf{P}^\bot_{\mathcal{T}^k} \mathbf{\Phi}_{\mathcal{T}^k})' \mathbf{y} = \mathbf{0}$.

From \eqref{eq:mmmss4} and \eqref{eq:51j},
\begin{eqnarray}
 \|\mathbf{r}^k\|_2^2 \hspace{-.75mm} - \hspace{-.75mm} \|\mathbf{r}^{k + 1} \hspace{-.5mm}\|_2^2 \hspace{-.5mm} \geq \hspace{-1mm} \frac{L (1 - \delta_{K + Lk})^2}{K \hspace{-.25mm}(1 \hspace{-.5mm} + \hspace{-.5mm} \delta_{L}) (1 \hspace{-.5mm} + \hspace{-.5mm} \delta_{K + Lk})} \hspace{-.5mm} \left(\hspace{-.5mm} 1 \hspace{-.5mm} - \hspace{-.5mm} \frac{\delta_{Lk+1}^{2}}{1 - \delta_{Lk}} \hspace{-.5mm} \right) \hspace{-.5mm} \|\mathbf{r}^k \hspace{-.5mm} \|_{2}^{2}, \nonumber
\end{eqnarray}
which implies that $
\|\mathbf{r}^{k + 1}\|_2^2 \leq \alpha(k, L) \|\mathbf{r}^{k}\|_2^2
$
where
\begin{equation}
\alpha(k, L) := 1 - \frac{L (1- \delta_{Lk} - \delta_{Lk+1}^{2}) (1 - \delta_{K + Lk})^2}{K (1 + \delta_{L}) (1- \delta_{Lk}) (1 + \delta_{K + Lk})}.
\end{equation}
Repeating this we obtain
\begin{equation}
\|\mathbf{r}^{k + 1}\|_2^2 \leq \prod_{i = 0}^{k + 1} \alpha(i, L) \|\mathbf{r}^{0}\|_2^2 \leq
  (\alpha(k, L))^{k + 1} \|\mathbf{y}\|_2^2,
\end{equation}
which completes the proof.
\end{proof}

\section{Proof of Theorem~\ref{thm:noi1}}\label{app:noi1}

\begin{proof}
We consider the best $K$-term approximation $(\mathbf{x}^{l})_{\hat{\mathcal{T}}}$ of ${\mathbf{x}}^{l}$ and observer that
\begin{eqnarray}
 \|(\mathbf{x}^{l})_{\hat{\mathcal{T}}} - \mathbf{x}\|_2
&=& \|(\mathbf{x}^{l})_{\hat{\mathcal{T}}} - \mathbf{x}^{l} + \mathbf{x}^{l} - \mathbf{x}\|_2
\nonumber \\
&\overset{(a)}{\leq}& \|(\mathbf{x}^{l})_{\hat{\mathcal{T}}} - \mathbf{x}^{l}\|_2 + \|\mathbf{x}^{l} - \mathbf{x}\|_2   \nonumber \\
& \overset{(b)}{\leq} & 2 \|\mathbf{x}^{l} - \mathbf{x}\|_2 \nonumber \\
& \overset{\text{RIP}}{\leq} & \frac{2 \|\mathbf{\Phi}(\mathbf{x}^{l} - \mathbf{x})\|_2}{\sqrt{1 - \delta_{Ll + K}}} = \frac{2 (\|\mathbf{r}^{l}\|_2 + \|\mathbf{v}\|_2)}{\sqrt{1 - \delta_{Ll + K}}} \nonumber \\
&\leq& \frac{2(\epsilon + \|\mathbf{v}\|_2)}{\sqrt{1 - \delta_{Ll + K}}}, \label{eq:zuuuo1}
\end{eqnarray}
where (a) is from the triangle inequality and (b) is because $(\mathbf{x}^{l})_{\hat{\mathcal{T}}}$ is the best $K$-term approximation to $\mathbf{x}^{l}$ and hence is a better approximation than $\mathbf{x}$.

On the other hand,
\begin{eqnarray}
\lefteqn{\|(\mathbf{x}^{l})_{\hat{\mathcal{T}}} - \mathbf{x}\|_2} \nonumber \\
&\overset{\text{RIP}}{\geq}& \frac{\|\mathbf{\Phi}((\mathbf{x}^{l})_{\hat{\mathcal{T}}} - \mathbf{x})\|_2}{\sqrt{1 - \delta_{2K}}} = \frac{\|\mathbf{\Phi}(\mathbf{x}^{l})_{\hat{\mathcal{T}}} - \mathbf{y} + \mathbf{v}\|_2}{\sqrt{1 - \delta_{2K}}} \nonumber \\
&\overset{(a)}{\geq}& \frac{\|\mathbf{\Phi}(\mathbf{x}^{l})_{\hat{\mathcal{T}}} - \mathbf{y} \|_2 - \| \mathbf{v}\|_2}{\sqrt{1 - \delta_{2K}}} \nonumber \\
&\overset{(b)}{\geq}& \frac{\|\mathbf{\Phi}\hat{\mathbf{x}} - \mathbf{y} \|_2 - \| \mathbf{v}\|_2}{\sqrt{1 - \delta_{2K}}} = \frac{\|\mathbf{\Phi}(\hat{\mathbf{x}} - \mathbf{x}) - \mathbf{v} \|_2 - \| \mathbf{v}\|_2}{\sqrt{1 - \delta_{2K}}} \nonumber \\
&\overset{(c)}{\geq}& \frac{\|\mathbf{\Phi}(\hat{\mathbf{x}} - \mathbf{x})\|_2 - 2 \| \mathbf{v}\|_2}{\sqrt{1 - \delta_{2K}}} \nonumber \\
&\overset{\text{RIP}}{\geq}& \frac{\sqrt{1 + \delta_{2K}} \|\hat{\mathbf{x}} - \mathbf{x}\|_2 - 2 \| \mathbf{v}\|_2}{\sqrt{1 - \delta_{2K}}},   \label{eq:79eq1}
\end{eqnarray}
where (a) and (c) are from the triangle inequality and (b) is because $(\mathbf{x}^{l})_{\hat{\mathcal{T}}}$ is supported on $\hat{\mathcal{T}}$ and $\hat{\mathbf{x}}_{\hat{\mathcal{T}}} = \mathbf{\Phi}^\dag_{\hat{\mathcal{T}}} \mathbf{y} = \arg
\underset{\mathbf{u}}{ \min} \|\mathbf{y} - \mathbf{\Phi}_{\hat{\mathcal{T}}} \mathbf{u}\|_2$ (see Table~\ref{tab:mols}).

By combining \eqref{eq:zuuuo1} and \eqref{eq:79eq1} we obtain~\eqref{eq:lx}.
\end{proof}

\section{Proof of Proposition~\ref{prop:r2}}\label{app:r2}

 \begin{proof}
We first consider the case of $L = 1$. In this case, $\mathcal{T}^K = \mathcal{T}$ and $\hat{\mathbf{x}} = \mathbf{x}^K = \mathbf{\Phi}^\dag_{\mathcal{T}} \mathbf{y}$, and hence
\begin{eqnarray}
\|\hat{\mathbf{x}} - \mathbf{x}\|_2 &=& \|\mathbf{x} - \mathbf{\Phi}^\dag_{\mathcal{T}} \mathbf{y}\|_2 = \|\mathbf{\Phi}^\dag_{\mathcal{T}} \mathbf{v}\|_2 \nonumber \\
&\overset{\text{RIP}}{\leq}& \frac{ \|\mathbf{\Phi}_{\mathcal{T}} \mathbf{\Phi}^\dag_{\mathcal{T}} \mathbf{v}\|_2}{\sqrt{1 - \delta_{K}}}
= \frac{\|\mathbf{P}_{\mathcal{T}}  \mathbf{v}\|_2}{\sqrt{1 - \delta_{K}}} \nonumber \\
 &\leq&  \frac{\|\mathbf{v}\|_2}{\sqrt{1 - \delta_{K}}}.
\end{eqnarray}

Next, we prove the case of $L > 1$. Consider the best $K$-term approximation $(\mathbf{x}^{K})_{\hat{\mathcal{T}}}$ of ${\mathbf{x}}^{K}$ and observer that
\begin{eqnarray}
 \|(\mathbf{x}^{K})_{\hat{\mathcal{T}}} - \mathbf{x}\|_2
&=& \|(\mathbf{x}^{K})_{\hat{\mathcal{T}}} - \mathbf{x}^{K} + \mathbf{x}^{K} - \mathbf{x}\|_2
\nonumber \\
&\overset{(a)}{\leq}& \|(\mathbf{x}^{K})_{\hat{\mathcal{T}}} - \mathbf{x}^{K}\|_2 + \|\mathbf{x}^{K} - \mathbf{x}\|_2   \nonumber \\
& \overset{(b)}{\leq} & 2 \|\mathbf{x}^{K} - \mathbf{x}\|_2 = 2 \|\mathbf{x} - \mathbf{\Phi}^\dag_{\mathcal{T}^K} \mathbf{y}\|_2 \nonumber \\
& \overset{(c)}{=} & 2 \|\mathbf{\Phi}^\dag_{\mathcal{T}^K} \mathbf{v}\|_2  \nonumber \\
&\overset{\text{RIP}}{\leq}&  \frac{2 \|\mathbf{\Phi}_{\mathcal{T}^K} \mathbf{\Phi}^\dag_{\mathcal{T}^K} \mathbf{v}\|_2}{\sqrt{1 - \delta_{|\mathcal{T}^K|}}} \nonumber \\
&=& \frac{2 \|\mathbf{P}_{\mathcal{T}^K}  \mathbf{v}\|_2}{\sqrt{1 - \delta_{LK}}}  \leq  \frac{2 \|\mathbf{v}\|_2}{\sqrt{1 - \delta_{LK}}}, \label{eq:zuuuo}
\end{eqnarray}
where (a) is from the triangle inequality, (b) is because $(\mathbf{x}^{K})_{\hat{\mathcal{T}}}$ is the best $K$-term approximation to $\mathbf{x}^{K}$ and hence is a better approximation than $\mathbf{x}$ (note that both $(\mathbf{x}^{K})_{\hat{\mathcal{T}}}$ and $\mathbf{x}$ are $K$-sparse), and (c) is because $\mathcal{T}^K \supseteq \mathcal{T}$ and $\mathbf{y} = \mathbf{\Phi x} + \mathbf{v}$.

On the other hand, following the same argument in \eqref{eq:79eq1}, one can show that
\begin{equation}
 \|(\mathbf{x}^{K})_{\hat{\mathcal{T}}} - \mathbf{x}\|_2 \geq \frac{\sqrt{1 + \delta_{2K}} \|\hat{\mathbf{x}} - \mathbf{x}\|_2 - 2 \| \mathbf{v}\|_2}{\sqrt{1 - \delta_{2K}}}.   \label{eq:79eq}
\end{equation}

Combining \eqref{eq:zuuuo} and \eqref{eq:79eq} yields the desired result.
\end{proof}

\section{Proof of Proposition~\ref{prop:upperbound15}}\label{app:upperbound1}

\begin{proof} In the following, we provide the proofs of \eqref{eq:small5} and \eqref{eq:large5},  respectively.

{\it 1) Proof of {\eqref{eq:small5}}}:
    Since $u'_{1}$ is the largest value of $\big \{\frac{| \langle \phi_{i}, \mathbf{r}^{k} \rangle |}{\| \mathbf{P}^{\bot}_{\mathcal{T}^{k}} \phi_{i} \|_{2}} \big \}_{i \in \mathcal{T} \backslash \mathcal{T}^k}$, we have
    \begin{eqnarray}
u'_{1}
     \hspace{-2mm} & \overset{(a)}{\geq} & \hspace{-3mm} \sqrt{\frac{1}{|\mathcal{T}  \backslash \mathcal{T}^k |}  \sum_{i \in \mathcal{T}  \backslash \mathcal{T}^k} \frac{ \langle \phi_{i}, \mathbf{r}^{k} \rangle ^2}{\|
      \mathbf{P}^{\bot}_{\mathcal{T}^{k}} \phi_{i} \|_{2}^2}} \geq \sqrt{\frac{\sum_{i \in \mathcal{T} \backslash \mathcal{T}^k }
      \langle \phi_{i}, \mathbf{r}^{k} \rangle ^2}{|\mathcal{T}  \backslash \mathcal{T}^k |}} \nonumber \\
      \hspace{-3mm} & = & \hspace{-3mm}
      \frac{\| \mathbf{\Phi}'_{\mathcal{T}  \backslash \mathcal{T}^k }
      \mathbf{r}^{k} \|_{2}}{\sqrt{K - \ell'}}
     =
     \frac{\| \mathbf{\Phi}'_{\mathcal{T}  \backslash \mathcal{T}^k }
      \mathbf{P}^\bot_{\mathcal{T}^k} (\mathbf{\Phi x} + \mathbf{v}) \|_{2}}{\sqrt{K - \ell'}}  \nonumber\\
     \hspace{-3mm} & \overset{(b)}{\geq} & \hspace{-3mm} \frac{\| \mathbf{\Phi}'_{\mathcal{T}  \backslash \mathcal{T}^k } \mathbf{P}^\bot_{\mathcal{T}^k}  \mathbf{\Phi}  \mathbf{x} \|_2 - \| \mathbf{\Phi}'_{\mathcal{T}  \backslash \mathcal{T}^k } \mathbf{P}^\bot_{\mathcal{T}^{k}} \mathbf{v} \|_{2} }{\sqrt{ K - \ell'}},      \label{eq:ggeeewwwq}
    \end{eqnarray}
    where (a) is due to Cauchy-Schwarz inequality and (b) is from the triangle inequality.
Observe that
\begin{eqnarray}
 \left\| \mathbf{\Phi}'_{\mathcal{T}  }
      \mathbf{P}^\bot_{\mathcal{T}^{k}} \mathbf{v} \right\|_{2}
 & = & \left\| (\mathbf{P}^\bot_{\mathcal{T}^{k}} \mathbf{\Phi}_{\mathcal{T}  })'
       \mathbf{v} \right\|_{2} \nonumber \\
 & {\leq} & \sqrt{\lambda_{\max} \left(  (\mathbf{P}^\bot_{\mathcal{T}^{k}} \mathbf{\Phi}_{\mathcal{T}  })' \mathbf{P}^\bot_{\mathcal{T}^{k}} \mathbf{\Phi}_{\mathcal{T}  } \right)} \left\| \mathbf{v} \right\|_{2} \nonumber \\
 & \overset{\eqref{eq:Pbot}}{=} & \sqrt{\lambda_{\max} \left(  \mathbf{\Phi}'_{\mathcal{T}  } \mathbf{P}^\bot_{\mathcal{T}^{k}} \mathbf{\Phi}_{\mathcal{T}  } \right)} \left\| \mathbf{v} \right\|_{2} \nonumber \\
 & \overset{\text{Lemma}~\ref{lem:rip6}}{\leq} & \sqrt{\lambda_{\max} \left( \mathbf{\Phi}'_{\mathcal{T} \cup \mathcal{T}^k  }  \mathbf{\Phi}_{\mathcal{T} \cup \mathcal{T}^k } \right)} \left\| \mathbf{v} \right\|_{2} \nonumber \\
 & {\leq} & \sqrt{1 + \delta_{K + Lk - \ell'}} \left\| \mathbf{v} \right\|_{2}.  \label{eq:969}
\end{eqnarray}
Also, from \eqref{eq:geaig1aa1}, we have
\begin{equation}
    \| \mathbf{\Phi}'_{\mathcal{T}  \backslash \mathcal{T}^k } \mathbf{P}^\bot_{\mathcal{T}^k}  \mathbf{\Phi}  \mathbf{x} \|_2 \geq ( 1- \delta_{K + Lk -
      \ell'} ) \left\| \mathbf{x}_{\mathcal{T} \setminus \mathcal{T}^{k}}  \right\|_{2}.
      \label{eq:geaig1aa}
    \end{equation}

Using \eqref{eq:ggeeewwwq}, \eqref{eq:geaig1aa} and \eqref{eq:969}, we obtain {\eqref{eq:small5}}.

\vspace{1mm}

{\it 2) Proof of {\eqref{eq:large5}}}:
    Let $\mathcal{F}'$ be the index set corresponding to $L$ largest elements in $\big \{\frac{|
    \langle \phi_{i}, \mathbf{r}^{k} \rangle |}{\| \mathbf{P}^{\bot}_{\mathcal{T}^{k}}
    \phi_{i} \|_{2}} \big \}_{i \in \Omega \setminus (\mathcal{T} \cup \mathcal{T}^{k} )}$. Following \eqref{eq:8700}, we can show that
    \begin{equation}
    \left( \sum_{i \in \mathcal{F}'} \frac{| \langle \phi_{i},
      \mathbf{r}^{k} \rangle |^{2}}{\| \mathbf{P}^{\bot}_{\mathcal{T}^{k}} \phi_{i}
      \|_{2}^{2}} \right)^{\hspace{-1.2mm} 1/2} \hspace{-1.2mm}\leq \hspace{-.5mm}\left( {1- \frac{\delta_{Lk+1}^{2}}{1 \hspace{-.5mm}- \hspace{-.5mm}
      \delta_{Lk}}} \right)^{\hspace{-1.2mm}-1/2} \hspace{-1.2mm}\| \mathbf{\Phi}'_{\mathcal{F}'} \mathbf{r}^{k} \|_{2}. \label{eq:con0}
    \end{equation}
  Observe that
  \begin{eqnarray}
 \| \mathbf{\Phi}'_{\mathcal{F}'} \mathbf{r}^{k} \|_{2}
  & = & \| \mathbf{\Phi}'_{\mathcal{F}'} \mathbf{P}^\bot_{\mathcal{T}^{k}} (\mathbf{\Phi x} + \mathbf{v})\|_{2} \nonumber \\
  & \leq & \| \mathbf{\Phi}'_{\mathcal{F}'} \mathbf{P}^\bot_{\mathcal{T}^{k}} \mathbf{\Phi x} \|_2 + \| \mathbf{\Phi}'_{\mathcal{F}'} \mathbf{P}^\bot_{\mathcal{T}^{k}}  \mathbf{v}\|_{2}  \nonumber \\
  & = & \| \mathbf{\Phi}'_{\mathcal{F}'} \mathbf{P}^\bot_{\mathcal{T}^{k}} \mathbf{\Phi}_{\mathcal{T} \setminus
      \mathcal{T}^{k} } \mathbf{x}_{\mathcal{T} \setminus
      \mathcal{T}^{k} } \|_2 + \| \mathbf{\Phi}'_{\mathcal{F}'} \mathbf{P}^\bot_{\mathcal{T}^{k}}  \mathbf{v}\|_{2}  \nonumber \\
  & \leq & \| \mathbf{\Phi}'_{\mathcal{F}'}  \mathbf{\Phi}_{\mathcal{T} \setminus
      \mathcal{T}^{k} } \mathbf{x}_{\mathcal{T} \setminus
      \mathcal{T}^{k} } \|_2 + \| \mathbf{\Phi}'_{\mathcal{F}'} \mathbf{P}^\bot_{\mathcal{T}^{k}}  \mathbf{v}\|_{2}  \nonumber \\
  &&  +  \| \mathbf{\Phi}'_{\mathcal{F}'} \mathbf{P}_{\mathcal{T}^{k}} \mathbf{\Phi}_{\mathcal{T} \setminus
      \mathcal{T}^{k} } \mathbf{x}_{\mathcal{T} \setminus
      \mathcal{T}^{k} } \|_2  \label{eq:10009}
  \end{eqnarray}
  Following (\ref{eq:j1}) and (\ref{eq:ghg2}), we have
\begin{eqnarray}
         \left\| \mathbf{\Phi}'_{\mathcal{F}'}  \mathbf{\Phi}_{\mathcal{T} \setminus
      \mathcal{T}^{k} }
      \mathbf{x}_{\mathcal{T} \setminus
      \mathcal{T}^{k} } \right\|_{2} \hspace{-3mm}&\leq& \hspace{-3mm} \delta_{L+K- \ell'}
      \| \mathbf{x}_{\mathcal{T} \setminus
      \mathcal{T}^{k}} \|_{2}, \label{eq:j15}
\\
  \left\| \mathbf{\Phi}'_{\mathcal{F}'} \mathbf{P}_{\mathcal{T}^{k}} \mathbf{\Phi}_{\mathcal{T} \setminus \mathcal{T}^{k}} \mathbf{x}_{\mathcal{T} \setminus
      \mathcal{T}^{k} } \hspace{-.51mm} \right\|_{2} \hspace{-3mm}&\leq& \hspace{-3.5mm}\frac{\delta_{L+Lk} \delta_{Lk+K- \ell'} \hspace{-.75mm} \left\| \mathbf{x}_{\mathcal{T} \setminus
      \mathcal{T}^{k} } \hspace{-.51mm} \right\|_{2}}{1-
      \delta_{Lk}}\hspace{-.5mm}.~~~~~~ \label{eq:ghg255}
\end{eqnarray}
Also,
\begin{eqnarray}
\| \mathbf{\Phi}'_{\mathcal{F}'} \mathbf{P}^\bot_{\mathcal{T}^{k}}  \mathbf{v}\|_{2}
\hspace{-3mm}&=&\hspace{-3mm} \| (\mathbf{P}^\bot_{\mathcal{T}^{k}} \mathbf{\Phi}_{\mathcal{F}'})'   \mathbf{v}\|_{2}  \nonumber \\
\hspace{-3mm}&\leq& \hspace{-3mm}\sqrt{\lambda_{\max} \left( (\mathbf{P}^\bot_{\mathcal{T}^{k}} \mathbf{\Phi}_{\mathcal{F}'})' \mathbf{P}^\bot_{\mathcal{T}^{k}} \mathbf{\Phi}_{\mathcal{F}'} \right)} \|\mathbf{v} \|_2 \nonumber \\
\hspace{-3mm} & \overset{\eqref{eq:Pbot}}{=} & \hspace{-3mm}\sqrt{\lambda_{\max} \left(   \mathbf{\Phi}'_{\mathcal{F}'} \mathbf{P}^\bot_{\mathcal{T}^{k}} \mathbf{\Phi}_{\mathcal{F}'} \right)} \|\mathbf{v} \|_2 \nonumber \\
\hspace{-3mm}&\overset{\text{Lemma}~\ref{lem:rip6}}{\leq}& \hspace{-3mm}\sqrt{\lambda_{\max} \left(   \mathbf{\Phi}'_{\mathcal{F}' \cup \mathcal{T}^{k}} \mathbf{\Phi}_{\mathcal{F}' \cup \mathcal{T}^{k}} \right)} \|\mathbf{v} \|_2 \nonumber \\
\hspace{-3mm}&\leq& \hspace{-3mm}\sqrt{1 \hspace{-.5mm} + \hspace{-.5mm} \delta_{|\mathcal{F}' \cup \mathcal{T}^k|}} \|\mathbf{v} \|_2 \hspace{-.5mm} = \hspace{-1mm} \sqrt{1 \hspace{-.5mm} + \hspace{-.5mm} \delta_{L + Lk}} \|\mathbf{v} \|_2. \nonumber \\ \label{eq:ghg2553}
\end{eqnarray}
    Using (\ref{eq:con0}), \eqref{eq:10009}, (\ref{eq:j15}), (\ref{eq:ghg255}), and \eqref{eq:ghg2553}, we have
    \begin{eqnarray}
      \lefteqn{\left( \sum_{i \in \mathcal{F}'} \frac{| \langle \phi_{i}, \mathbf{r}^{k} \rangle
      |^{2}}{\| \mathbf{P}^{\bot}_{\mathcal{T}^{k}} \phi_{i} \|_{2}^{2}}
      \right)^{\hspace{-1mm}1/2} \hspace{-1mm} \leq \left( \frac{1- \delta_{Lk}}{1- \delta_{Lk} - \delta_{Lk+1}^{2}}
      \right)^{\hspace{-1mm} 1/2} \left( {\left\| \mathbf{x} _{\mathcal{T} \setminus
      \mathcal{T}^{k} } \right\|_{2}} \right. } \nonumber \\
      & & \hspace{-6mm}\times \left.\hspace{-.5mm} \left(\hspace{-.5mm} \delta_{L+K- \ell'} \hspace{-.5mm}+\hspace{-.5mm}
      \frac{\delta_{L+Lk} \delta_{Lk+K- \ell'}}{1 - \delta_{Lk}} \hspace{-.5mm} \right)
       \hspace{-.5mm} + \hspace{-.5mm} \sqrt{1 \hspace{-.5mm} + \hspace{-.5mm}\delta_{L + Lk}} \|\mathbf{v}\|_2 \hspace{-.5mm} \right)\hspace{-.5mm}.~~~~~~   \label{eq:left5}
    \end{eqnarray}
    On the other hand, by noting that $v'_{L}$ is the $L$-th largest value in
    $\big \{ \frac{| \langle \phi_{i}, \mathbf{r}^{k} \rangle |}{\| \mathbf{P}^{\bot}_{\mathcal{T}^{k}} \phi_{i} \|_{2}} \big \}_{i \in \mathcal{F}'}$, we have
    \begin{eqnarray}
      \left( \sum_{i \in \mathcal{F}'} \frac{| \langle \phi_{i}, \mathbf{r}^{k} \rangle
      |^{2}}{\| \mathbf{P}^{\bot}_{\mathcal{T}^{k}} \phi_{i} \|_{2}^{2}} \right)^{1/2}
      \geq \sqrt{L} v'_{L}.  \label{eq:right005}
    \end{eqnarray}

  Using {\eqref{eq:left5}} and {\eqref{eq:right005}}, we obtain {\eqref{eq:large5}}.
\end{proof}

\section{Proof of \eqref{eq:k+15}} \label{app:cond5}

\begin{proof}
Rearranging the terms in \eqref{eq:48o} we obtain
  \begin{eqnarray}
\hspace{-8mm} &&\sqrt{\frac{L}{K - \ell'}} (1 - \delta_{LK}) - \frac{\delta_{LK}}{1 - \delta_{LK}} \left(1 +  \frac{\delta_{LK}^2}{1- \delta_{LK}
    - \delta_{LK}^{2}} \right)^{1/2}  \nonumber \\
 \hspace{-8mm}   && > \hspace{-.5mm} \left(\hspace{-1mm} \left(\hspace{-1mm} 1 \hspace{-.5mm} + \hspace{-.5mm} \frac{\delta_{LK}^2}{1 \hspace{-.5mm} - \hspace{-.5mm} \delta_{LK} \hspace{-.5mm}
    - \hspace{-.5mm} \delta_{LK}^{2}} \right)^{\hspace{-1mm} 1/2} \hspace{-2mm} + \hspace{-1mm} \sqrt{\frac{L}{K \hspace{-.5mm} - \hspace{-.5mm} \ell'}} \right) \hspace{-1mm} \frac{\sqrt{1 \hspace{-.5mm} + \hspace{-.5mm} \delta_{LK}} \|\mathbf{v}\|_2}{\|\mathbf{x}_{\mathcal{T} \setminus \mathcal{T}^{k}} \|_2}. \label{eq:110001}
  \end{eqnarray}
In the following, we will show that \eqref{eq:110001} is guaranteed by~\eqref{eq:k+15}.
First, since
\begin{eqnarray}
\hspace{-.5mm} \|\mathbf{x}_{\mathcal{T} \backslash \mathcal{T}^k} \|_2 \hspace{-1mm}&\geq& \hspace{-1mm} \sqrt{|\mathcal{T} \backslash \mathcal{T}^k \hspace{-.5mm}|} \min_{j \in \mathcal{T}} |x_j| ~ \overset{\eqref{eq:snrmar}}{=} ~\frac{\kappa \sqrt{K - \ell'}  \|\mathbf{x}\|_2}{\sqrt K} \nonumber \\
\hspace{-1mm} &\overset{\text{RIP}}{\geq}& \hspace{-1mm} \frac{\kappa \sqrt{K \hspace{-.75mm} - \hspace{-.5mm} \ell'}  \|\mathbf{\Phi x}\|_2}{\sqrt{K (1 + \delta_{LK})}} ~~~\hspace{-.5mm} \overset{\eqref{eq:snrmar}}{=} ~ \frac{\kappa \sqrt{(K \hspace{-.75mm} - \hspace{-.5mm} \ell') snr}\|\mathbf{v}\|_2}{\sqrt{K ( 1 + \delta_{LK})}}, \nonumber
\end{eqnarray}
by denoting
\begin{eqnarray}
\beta &:=& \left(1 +  \frac{\delta_{LK}^2}{1- \delta_{LK}
    - \delta_{LK}^{2}} \right)^{1/2}, \nonumber \\
    \gamma &:=& \sqrt{\frac{L}{K - \ell'}}, \nonumber \\
    \delta &:=& \delta_{LK}, \nonumber \\
    \tau &:=& \frac{\sqrt K}{\kappa \sqrt{(K - \ell') snr}}, \nonumber
\end{eqnarray}
we can rewrite \eqref{eq:110001} as
\begin{equation}
\gamma (1 - \delta - ( 1 + \delta) \tau) > \beta \left( \frac{\delta}{1 - \delta} + (1 + \delta) \tau \right). \label{eq:118a}
\end{equation}
Since \eqref{eq:fggg} implies $
  \beta < \frac{(1 - \delta)^2}{1 - 2 \delta}$, it is easily shown that \eqref{eq:110001} is ensured by
\begin{equation}
\gamma > \frac{1 - \delta}{1 - 2 \delta} \cdot \frac{你\delta + (1 - \delta^2) \tau}{1 - \delta - (1 + \delta) \tau}. \label{eq:H3}
\end{equation}
Moreover, since  $1 - \delta^2 \leq 1 + \delta$, \eqref{eq:H3} holds true under
$
\gamma > \frac{1 - \delta}{1 - 2 \delta} \cdot \frac{你\delta + (1 + \delta) \tau}{1 - \delta - (1 + \delta) \tau},$
or equivalently,
\begin{equation}
\delta < \frac{1}{1 + \tau} \left( \frac{\gamma}{u + \gamma} - \tau\right)~\text{where}~u := \frac{1 - \delta}{1 - 2 \delta}. \label{eq:guuuu}
\end{equation}

Next, observe that
\begin{eqnarray}
\delta < \frac{\sqrt L}{\sqrt{K - \ell'} + 2 \sqrt L} &\Leftrightarrow& \delta < \frac{\gamma}{1 + 2 \gamma}, \nonumber \\
&\Leftrightarrow& \frac{1 - \delta}{1 - 2 \delta} < 1 + \gamma, \nonumber \\
&\Leftrightarrow& u < 1 + \gamma, \nonumber \\
&\Leftrightarrow& \frac{\gamma}{u + \gamma} > \frac{\gamma}{1 + 2 \gamma}. \label{eq:h6}
\end{eqnarray}
Thus, if $\delta < \frac{\sqrt L}{\sqrt{K - \ell'} + 2 \sqrt L}$, then we can derive from \eqref{eq:guuuu} and~\eqref{eq:h6} that \eqref{eq:110001} holds true whenever
\begin{equation}
\delta \hspace{-.25mm} < \hspace{-.25mm} \frac{1}{1 \hspace{-.5mm} + \hspace{-.5mm} \tau} \hspace{-.5mm} \left(\hspace{-.5mm} \frac{\gamma}{1 \hspace{-.5mm} + \hspace{-.5mm} 2\gamma} \hspace{-.5mm} - \hspace{-.5mm} \tau \hspace{-.5mm} \right) \hspace{-.75mm} = \hspace{-.5mm} \frac{1}{1 \hspace{-.5mm} + \hspace{-.5mm} \tau} \hspace{-.75mm} \left( \hspace{-.5mm}\frac{\sqrt L}{\sqrt{K \hspace{-.5mm} - \hspace{-.5mm} \ell'}  \hspace{-.5mm} +  \hspace{-.5mm} 2 \sqrt{L}} \hspace{-.5mm} - \hspace{-.5mm} \tau \hspace{-.5mm} \right)\hspace{-1mm}.  \label{eq:alsr}
\end{equation}
That is
\begin{equation}
\hspace{-1.5mm} \sqrt{snr} > \hspace{-.5mm} \frac{(1 + \delta_{LK}) (\sqrt{K - \ell'} + \sqrt L)}{\kappa (\sqrt L \hspace{-.5mm} - \hspace{-.5mm} (\sqrt{K \hspace{-.5mm} - \ell'} \hspace{-.5mm} + \hspace{-.5mm} 2 \sqrt L) \delta_{LK}) \sqrt{K \hspace{-.5mm} - \ell'}} \sqrt K. \label{eq:126o}
 \end{equation}

Finally, since $K - \ell' < K$, \eqref{eq:126o} is guaranteed by \eqref{eq:k+15},  this completes the proof.
  \end{proof}

\bibliographystyle{IEEEbib}
\bibliography{CS_refs}

\end{document}